\documentclass[envcountsame]{llncs}

\usepackage{times}
\usepackage{url}
\usepackage{cite}
\usepackage{amsmath}
\usepackage{wrapfig}

%
%

\usepackage{amsfonts}
\usepackage{amssymb}
\usepackage{amstext}
\usepackage{amsmath}
\usepackage{mathtools}
\usepackage{paralist}

\usepackage{tikz}
\usetikzlibrary{calc}
\usetikzlibrary{shapes,arrows,positioning,shadows,snakes}

\usepackage{footmisc}

\renewenvironment{proof}{\noindent{\it{Proof}}: }   {\hfill\qed\smallskip\par}

\newcommand{\aset}{{\mathcal A}} 


\usepackage{pifont}
\usepackage[hang,small,bf]{caption}
\usepackage{subcaption}

\usepackage{thmtools,thm-restate} 




\usepackage{graphicx}
\usepackage{graphics}
\usepackage{colordvi}
\usepackage{xspace}
\usepackage{comment}
\usepackage{algorithm}
\usepackage{algorithmicx}
\usepackage{algpseudocode}
\usepackage{url}
\usepackage{enumitem}

\usepackage{nameref}
\usepackage[linktocpage=true,pagebackref=true]{hyperref}
\usepackage{cleveref}
\usepackage{cite}

\pagestyle{plain}

\def\ShowComment{True}

\ifdefined\ShowComment

\def\thatchaphol#1{\marginpar{$\leftarrow$\fbox{T}}\footnote{$\Rightarrow$~{\sf #1 --Thatchaphol}}}

\else

\def\thatchaphol#1{}

\fi

\newcommand{\stair}{{\mathit{stair}}}
\newcommand{\pred}{ {\mathit{pred}}}
\newcommand{\suc}{\mathit {succ}}
\newcommand{\abs}[1]{\left| #1 \right|}
\newcommand{\sset}[1]{\{ #1 \}}

\title{\vspace{-1.5ex} Self-Adjusting Binary Search Trees: \\
What Makes Them Tick?\vspace{-1.5ex}}


\date{}

\author{
Parinya Chalermsook\inst{1},
Mayank Goswami\inst{1},
L\'{a}szl\'{o} Kozma\inst{2},
Kurt Mehlhorn\inst{1},
and
Thatchaphol Saranurak\inst{3}\thanks{Work done while at Saarland University.}
}
\institute{Max-Planck Institute for Informatics, Saarbr\"{u}cken, Germany 66123.\\ 
\and Department of Computer Science, Saarland University, Saarbr\"{u}cken, Germany 66123.\\
\and KTH Royal Institute of Technology, Stockholm, Sweden 11428. \\ 
}

\begin{document}
\pagenumbering{arabic}

\maketitle

\vspace{-2em}

\begin{abstract}

Splay trees (Sleator and Tarjan~\cite{ST85}) satisfy the so-called \emph{access lemma}. Many of the nice properties of splay trees follow from it. \emph{What makes self-adjusting binary search trees (BSTs) satisfy the access lemma?} After each access, self-adjusting BSTs replace the search path by a tree on the same set of nodes (the after-tree). We identify two simple combinatorial properties of the search path and the after-tree that imply the access lemma. Our main result
\begin{compactenum}[(i)]
\item implies the access lemma for \textit{all} minimally self-adjusting BST algorithms for which it was known to hold: splay trees 
and their generalization to the class of \emph{local algorithms} (Subramanian~\cite{Subramanian96}, Georgakopoulos and McClurkin~\cite{GeorgakopoulosM04}), as well as Greedy BST, introduced by Demaine et al.~\cite{DemaineHIKP09} and shown to satisfy the access lemma by Fox~\cite{Fox11}, 
\item implies that BST algorithms based on ``strict'' depth-halving satisfy the access lemma,  addressing an open question that was raised several times since 1985, and 
\item yields an extremely short proof for the $O(\log n \log \log n)$ amortized access cost for the path-balance heuristic (proposed by Sleator), matching the best known bound (Balasubramanian and Raman~\cite{PathBalance}) to a lower-order factor.
\end{compactenum} 
One of our combinatorial properties is \emph{locality}. 
We show that any BST-algorithm that satisfies the access lemma via the sum-of-log (SOL) potential is necessarily local. 
The other property states that the sum of the number of leaves of the after-tree plus the number of side alternations in the search path must be at least a constant fraction of the length of the search path.
We show that a weak form of this property is necessary for sequential access to be linear. 
  
\end{abstract}

\section{Introduction}

The binary search tree (BST) is a fundamental data structure for the dictionary problem. Self-adjusting BSTs rearrange the tree in response to data accesses, and are thus able to adapt to the distribution of queries. 
We consider the class of \emph{minimally self-adjusting} BSTs: algorithms that rearrange only the search path during each access and make the accessed element the root of the tree. Let $s$ be the element accessed and let $P$ be the search path to $s$. Such an algorithm can be seen as a mapping from the search path $P$ 
(called ``before-path'' in the sequel) to a tree $A$ with root $s$ on the same set of nodes (called ``after-tree'' in the sequel). Observe that all subtrees that are disjoint from the before-path can be reattached to the after-tree in a unique way governed by the ordering of the elements. In the BST model, the cost of the access plus the cost of rearranging is $\abs{P}$, see Figure~\ref{BST-algorithm} for an example. 

\begin{wrapfigure}[20]{l}[0.3\textwidth]{0.55\textwidth}
	\begin{center}  
		\includegraphics[width=0.5\textwidth, trim=0 0 0 1cm]{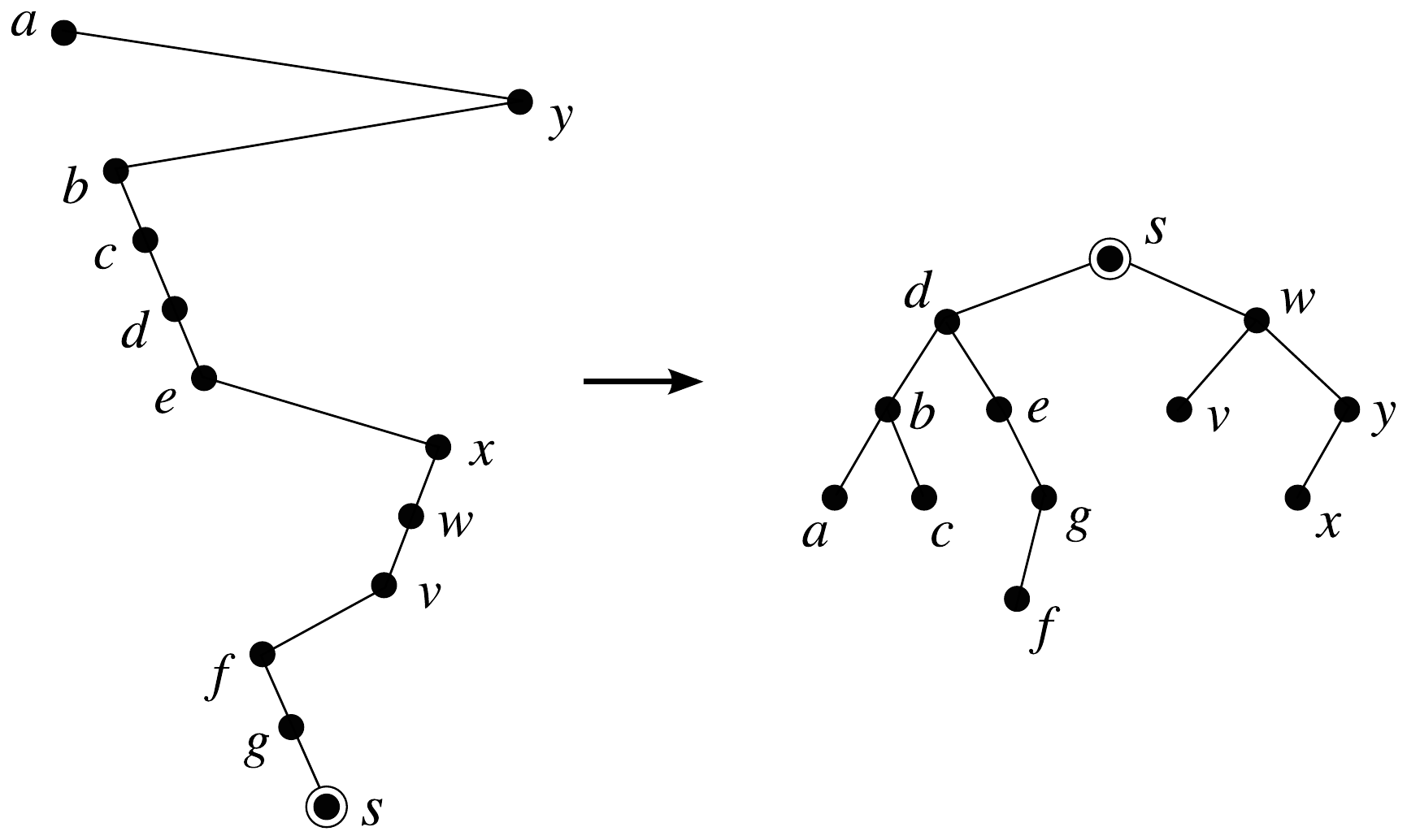}
	\end{center}
	\caption{\label{BST-algorithm} The search path to $s$ is shown on the left, and the after-tree is shown on the right. The search path consists of 12 nodes and contains four edges that connect nodes on different sides of $s$ ($z = 4$ in the language of Theorem~\ref{main theorem}). The after-tree has five leaves. The left-depth of $a$ in the after-tree is three (the path from the root to $a$ goes left three times) and the right-depth of $y$ is two. The set $\sset{a,c,f,v,y}$ is subtree-disjoint. 
The sets $\sset{d,e,g}$, $\sset{b,f}$, $\sset{x,y}$, $\sset{w}$ are monotone.
}
\end{wrapfigure}

Let $T$ be a binary search tree on $[n]$. 
Let $w: [n] \rightarrow {\mathbb R}_{> 0}$ be a positive weight function, and  
for any set $S \subseteq [n]$, let $w(S) = \sum_{a \in S} w(a)$.  
Sleator and Tarjan defined the sum-of-log (SOL) potential function $\Phi_T = \sum_{a \in [n]} \log w(T_a)$, where $T_a$ is the subtree of $T$ rooted at $a$.  
We say that an algorithm $\aset$ satisfies the \emph{access lemma (via the SOL potential function)} if for all $T'$ that can be obtained as a rearrangement done by algorithm $\aset$ after some element $s$ is accessed, we have  
\[ |P| \le \Phi_T- \Phi_{T'} + O(1 + \log \frac{W}{w(s)}),   \] 
where $P$ is the search path when accessing $s$ in $T$ and $W = w(T)$. The access lemma is known to hold for the splay trees of Sleator and Tarjan~\cite{ST85}, for their generalizations to \emph{local algorithms} by Subramanian~\cite{Subramanian96} and Georgakopoulos and McClurkin~\cite{GeorgakopoulosM04}, as well as for Greedy BST, an online algorithm introduced by Demaine et al.~\cite{DemaineHIKP09} and shown to satisfy the access lemma by Fox~\cite{Fox11}. For minimally self-adjusting BSTs, the access lemma implies  \emph{logarithmic amortized cost}, \emph{static optimality}, and the \emph{static finger} and \emph{working set} properties.

\begin{theorem}\label{main theorem}
Let $\cal A$ be a minimally self-adjusting BST algorithm. If
(i) the number of leaves of the after-tree is $\Omega(|P| -z)$ where $P$ is the search path and $z$ is the number of ``side alternations\footnote{$z$ is the number of edges on the search path connecting nodes on different sides of $s$. The right-depth of a node is the number of right-going edges on the path from the root to the node.}'' in $P$ and (ii) for any element $t >s$ (resp. $t< s$), the right-depth of $t$ (left-depth of $t$) in the after-tree is $O(1)$, then $\cal A$ satisfies the access lemma. 
\end{theorem}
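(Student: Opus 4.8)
The plan is to restrict all attention to the search path $P$. Since every subtree hanging off $P$ is reattached intact, $w(T'_a)=w(T_a)$ for every $a\notin P$, so $\Phi_T-\Phi_{T'}=\sum_{a\in P}\log\frac{w(T_a)}{w(T'_a)}$, and it suffices to prove $\sum_{a\in P}\log\frac{w(T_a)}{w(T'_a)}\ge |P|-O(1+\log\frac{W}{w(s)})$. A convenient bookkeeping device is that, for any path node $a$, both $T_a$ and $T'_a$ consist of exactly the elements lying in an open interval whose endpoints are path-node values (or $\pm\infty$): in $T$ these two endpoints are the last $>s$ and the last $<s$ path node seen before $a$, while in the after-tree $A$ they are the two path-node bounds of $a$ in $A$. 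Writing $p_1,\dots,p_k$ for the path from the root of $T$ down to $s=p_k$, one then has $W=w(T_{p_1})>w(T_{p_2})>\cdots>w(T_{p_k})\ge w(s)$, and in particular $w(T_a)\ge w(s)$ for every $a\in P$.

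I would first dispose of the leaves of $A$. If $a$ is a leaf of $A$, its interval in $T'$ is pinched between $a$ and its two path-neighbours, so $w(T'_a)$ is essentially the weight of $a$'s own hanging part; hence the term $\log\frac{w(T_a)}{w(T'_a)}$ is $\ge 0$, and it is $\ge 1$ unless $w(T'_a)>\tfrac12 w(T_a)$. But the latter forces the strictly decreasing chain $w(T_{p_1})>\cdots>w(T_{p_k})\ge w(s)$ to more than halve at $a$, which can happen at most $\log_2\frac{W}{w(s)}$ times along the path. Since hypothesis (i) provides at least $\varepsilon(|P|-z)$ leaves for a fixed $\varepsilon>0$, the leaves of $A$ contribute at least $\varepsilon(|P|-z)-\log_2\frac{W}{w(s)}$ to the sum.

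Next I would treat the internal nodes of $A$, and this is where hypothesis (ii) is essential. The key sign observation is: a node $v>s$ that has no right child in $A$ satisfies $w(T'_v)\le w(T_v)$, because (using that the $>s$ path nodes occur in strictly decreasing order of value) the interval of $v$ in $A$ is contained in its interval in $T$; symmetrically, $v<s$ with no left child in $A$ has $w(T'_v)\le w(T_v)$. Consequently the only nodes with a strictly negative term are $s$ itself (losing at most $\log\frac{W}{w(s)}$) and nodes with a \emph{reversed} child — a right child on the $>s$ side, or a left child on the $<s$ side — and hypothesis (ii) caps the reversed depth of such nodes by a constant. Thus the negative-term nodes sit in only $O(1)$ reversed levels; along each maximal chain of reversed edges the deficits telescope, and since the chain has constant length and each step only adds to the interval elements that were already present in the before-interval of the chain's top node, the total deficit of an entire reversed chain is $O(\log\frac{W}{w(s)})$.

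The crux, and the step I expect to be the main obstacle, is the final accounting: a priori there can be $\Theta(|P|)$ maximal reversed chains, so charging each $O(\log\frac{W}{w(s)})$ crudely yields the useless bound $\Theta(|P|\log\frac{W}{w(s)})$. The way out is to pay for each reversed chain either against a distinct halving of $w(T_{p_1})>\cdots>w(T_{p_k})$ (only $O(\log\frac{W}{w(s)})$ of these exist) or against the surplus beyond $1$ that the leaves of $A$ accumulated in the second step; this is exactly why hypothesis (i) must deliver a \emph{constant fraction} of $|P|$ in leaves — so there is enough surplus to spend — and why hypothesis (ii) is indispensable — bounded reversed depth is what keeps any single element from being charged more than $O(1)$ times. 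Combining the leaf bound, the sign observation for the non-reversed nodes, and this charging gives $\sum_{a\in P}\log\frac{w(T_a)}{w(T'_a)}\ge|P|-O(1+\log\frac{W}{w(s)})$, which is the access lemma.
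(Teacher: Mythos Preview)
Your sign observation---that a node $v>s$ with no path-node as a right child in $A$ satisfies $T'_v\subseteq T_v$---is correct and is essentially the content of the paper's monotone-set lemma. But the proposal has two genuine gaps.

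The decisive one is that you never recover the $z$ term. Even granting every other step, your only source of positive contribution is the leaves, and hypothesis~(i) guarantees only $\Omega(|P|-z)$ of them; the non-reversed internal nodes give you $\ge 0$, and your best hope for the reversed chains is $\ge -O(\log\frac{W}{w(s)})$. That yields at most $\Phi-\Phi'\ge \varepsilon(|P|-z)-O(\log\frac{W}{w(s)})$, not $|P|-O(\log\frac{W}{w(s)})$. On a pure zigzag before-path (where $z=|P|-1$) an algorithm satisfying the hypotheses may legitimately output an after-tree with $O(1)$ leaves, and then your inequality says nothing. The paper closes this gap by an entirely separate mechanism you are missing: it factors the transformation through rotate-to-root, shows (Lemma~4) that the potential drop to this intermediate tree already pays for $|Z_P|\ge z$, and only then applies the disjoint/monotone decomposition from the intermediate tree to $A$. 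Your ``charge each reversed chain against a halving or against leaf surplus'' cannot manufacture the missing $z$, because in the bad instance there are neither enough halvings nor any leaf surplus to charge to.

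A second, smaller gap is the halving step for leaves. The condition $w(T'_a)>\tfrac12\,w(T_a)$ compares an after-tree weight to a before-tree weight; it does not force the chain $w(T_{p_1})>\cdots>w(T_{p_k})$ to halve at (or near) $a$. For instance, place almost all the weight in the pendent subtree hanging between $a$ and its value-successor on the path: then $w(T_{p_i})$ barely drops across $a$, yet any leaf whose after-interval contains that pendent is ``bad''. The desired conclusion---that the total leaf contribution is at least $(\text{\#leaves})-O(\log\frac{W}{w(s)})$---is true, and is exactly the paper's Lemma~1 for subtree-disjoint sets, but its proof requires the heavy/light classification indexed by the cumulative weights $w([c,a_\ell])$, not the naive halving you invoke.
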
  

Note that the conditions in Theorem~\ref{main theorem} are purely combinatorial conditions on the before-paths and after-trees. In particular, the potential function is completely hidden. The theorem directly implies the access lemma for all BST algorithms mentioned above and some new ones. 
 
\begin{corollary}\label{corollary}
The following BST algorithms satisfy the access lemma: (i) Splay tree, as well as its generalizations to local algorithms (ii) Greedy BST, and (iii) new heuristics based on ``strict'' depth-halving. 
\end{corollary}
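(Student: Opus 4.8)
The plan is to derive Corollary~\ref{corollary} from Theorem~\ref{main theorem} by checking, for each of the three algorithm families, the two purely combinatorial hypotheses of that theorem: (i) the after-tree $A$ has $\Omega(|P|-z)$ leaves, and (ii) for every element $t>s$ the right-depth of $t$ in $A$ is $O(1)$, and symmetrically for $t<s$. Since both hypotheses speak only about the shape of the before-path $P$ and the after-tree $A$, in each case it is enough to recall the rule that builds $A$ from $P$ and then perform two bookkeeping arguments. The common device, suggested by the decomposition in Figure~\ref{BST-algorithm}, is to cut $P$ at its $z$ side alternations into $z+1$ \emph{monotone runs} --- maximal subpaths lying entirely on one side of $s$ --- so that $|P|-z = \sum_i (\ell_i-1)$ where $\ell_i$ is the length of the $i$-th run. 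It then suffices to show, run by run, that a run of length $\ell$ forces $\Omega(\ell)$ distinct leaves into $A$, and that the nodes of a run end up $O(1)$ turns off the spine on their side.

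\textbf{Splay trees and local algorithms.} For splay we use the classical anatomy of a splay step as a sequence of zig-zig / zig-zag operations followed by at most one zig. We would first observe that the after-tree, restricted to the elements larger than $s$, is a tree whose every node lies at right-depth at most a small constant (the zig-zag and terminal zig contribute the only right-turns on that side), giving (ii); the elements smaller than $s$ are symmetric. For (i) we look at a single monotone run and note that a maximal block of consecutive same-side nodes, which is precisely what the zig-zig rule processes, is transformed so that (at least) every other node of the block loses both of its before-path children and becomes a leaf of $A$; hence the run of length $\ell$ contributes $\Omega(\ell)$ leaves, and these leaves are attributable to that run alone, so summing over runs yields $\Omega(|P|-z)$. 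For the class of local algorithms of Subramanian and of Georgakopoulos and McClurkin, the bounded-right-depth condition (ii) is essentially the definition of locality: a bounded-size local move cannot push a larger element deep on the right spine, and composing such moves preserves this. Condition (i) we obtain by the same run-by-run counting, using that a local algorithm that brings $s$ to the root must still, within each monotone run, perform moves that detach a constant fraction of the run's nodes from the path.

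\textbf{Greedy BST.} Here we invoke Fox's description of Greedy's rearrangement: in the after-tree, the right spine below $s$ is the ``lower staircase'' of the touched elements greater than $s$, and every other touched element greater than $s$ hangs as the left child of a spine node (symmetrically on the left). This immediately gives (ii) with right-depth at most one for $t>s$. For (i), every touched element that is not on the right spine is a leaf, or at worst the top of a short subtree, so it is enough to bound the spine length from above by $|P|$ minus a constant fraction of $|P|-z$; this follows because a monotone run of before-path nodes cannot contribute more than roughly half of its nodes to the staircase (the staircase skips the ``hidden'' elements of the run), exactly matching the monotone-decomposition bound.

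\textbf{Strict depth-halving and the main obstacle.} Finally we define a ``strict depth-halving'' heuristic as one whose after-tree places every before-path node at depth at most half its before-path depth (up to an additive constant) while also respecting condition (ii) --- such algorithms were proposed repeatedly as candidates for the access lemma since 1985. Condition (ii) holds by fiat; for (i) we argue that if $A$ had only $o(|P|-z)$ leaves then, being nearly path-like on each side, the sum of the depths of its nodes would be $\Omega(|P|^2)$ within the non-alternating part, contradicting the depth-halving cap of $O(|P|\cdot \mathrm{depth}_P)$ summed appropriately; made precise per monotone run, this yields $\Omega(\ell)$ leaves per run. The main obstacle throughout is the $-z$ slack in condition (i): side alternations are ``for free'' and need produce no leaves, so none of the leaf counts may be done globally --- each must be localized to a single monotone run and charged only there, and the delicate point is to verify, for each algorithm's specific rule, that the leaves it creates inside a run are genuinely distinct and not double-counted across adjacent runs. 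Once that localization is established for splay/local, Greedy, and strict depth-halving, Theorem~\ref{main theorem} applies verbatim and Corollary~\ref{corollary} follows.
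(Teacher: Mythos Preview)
Your overall plan---verify the two hypotheses of Theorem~\ref{main theorem} for each family---matches the paper for splay, but diverges for Greedy BST and has a genuine gap for strict depth-halving.

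For \textbf{Greedy BST} the paper does \emph{not} go through Theorem~\ref{main theorem}. Greedy is defined geometrically and its after-configuration need not have tree structure, so ``right-depth in the after-tree'' is not directly meaningful. Instead the paper writes $\stair_h(s)=\{a_1<\dots<a_k\}$, splits it into the odd- and even-indexed elements, observes that each half is \emph{neighborhood-disjoint} (the geometric analogue of subtree-disjoint, since $N_{h'}(a_i)=(a_{i-1},a_{i+1})$), and applies the geometric version of Lemma~\ref{lem:disj} twice. No monotone sets, no leaf count, no zigzag term. Your staircase description could in principle be pushed through a geometric analogue of Theorem~\ref{thm:suff tree}, but that is strictly more work than the paper's two-line argument.

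For \textbf{strict depth-halving} there is a real gap. You define the class as algorithms that halve depth \emph{and} already satisfy condition~(ii), so (ii) holds ``by fiat''. The paper's definition (Theorem~\ref{prop:depth}) is different: every node loses at least a $(\tfrac12+\epsilon)$-fraction of its ancestors \emph{and} gains at most $d$ new descendants. Neither clause mentions right-depth; the paper must \emph{derive} condition~(ii) from the bounded-descendant-gain clause (each left-hanging subtree off the right spine has size $\le d$, hence left-depth $\le d$). Your formulation sidesteps exactly the step that answers the open question. Your leaf count also differs: the paper does not use a global sum-of-depths contradiction, but locates a single ancestor of $s$ that loses $\Omega(|P|)$ same-side ancestors and argues those lost ancestors are spread over $\Omega(|P|)$ subtrees each of size $\le d$. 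This is where the $\epsilon>0$ slack is essential and where $\epsilon=0$ provably fails (Figure~\ref{fig:depth1}).

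For \textbf{splay} your argument is essentially the paper's, though the paper avoids the run-by-run charging: it pairs nodes $(v_{2i+1},v_{2i+2})$ along the reversed search path and notes that each same-side pair yields a leaf, giving directly $\#\text{leaves}\ge |P|/2-1-z$. For the broader class of \textbf{local algorithms}, ``(ii) is essentially the definition of locality'' is too quick: the paper needs Theorem~\ref{thm:mono local} to convert window-size~$w$ into a decomposition into $2w$ monotone sets, and the leaf bound requires the specific Georgakopoulos--McClurkin ``$\Omega(k)$ non-path transformations'' axiom to charge $\Omega(k-z)$ of the $k$ local steps to branchings in the after-tree.
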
 
 
The third part of the corollary addresses an open question raised by several authors~\cite{Subramanian96, PathBalance, GeorgakopoulosM04} about whether some form of depth reduction is sufficient to guarantee the access lemma. 
We show that a strict depth-halving suffices.

For the first part, we formulate a global view of splay trees. We find this new description intuitive and of independent interest. The proof of (i) is only a few lines. 

We also prove a partial converse of Theorem~\ref{main theorem}. 

\begin{theorem}[Partial Converse]\label{partial converse} If a BST algorithm satisfies the access lemma via the SOL-potential function, the after-trees must satisfy condition (ii) of Theorem~\ref{main theorem}.
\end{theorem}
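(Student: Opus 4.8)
The plan is to prove the contrapositive by an adversarial‑weighting argument applied to a \emph{single} access. Assume $\aset$ satisfies the access lemma with some constant $c$ but that condition~(ii) fails, so for every $d$ there is an access — of an element $s$ along a before‑path $P$ — whose after‑tree $A$ contains an element $t>s$ of right‑depth at least $d$; I treat only this case, the one for $t<s$ and left‑depth being its mirror image. I will produce, once $d$ is large enough in terms of $c$, a weight function on the vertex set of $P$ for which the access‑lemma inequality fails for this access, contradicting the hypothesis.

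First a reduction and some notation. Since an algorithm in our class is a function of the before‑path alone, I may assume the before‑tree is the \emph{caterpillar} on the vertices of $P$ (off‑path subtrees only enlarge $W$, thereby only weakening the inequality we want to violate); write its $n=|P|$ vertices top‑to‑bottom as $q_0,\dots,q_{n-1}=s$, so the subtree of $q_i$ in the before‑tree is the suffix $\sset{q_i,\dots,q_{n-1}}$. In $A$, let $s=u_1<u_2<\dots<u_d<t$ be the right‑turn ancestors of $t$; here $u_1=s$ because the root‑to‑$t$ path in $A$ starts with a right edge (as $t>s$), and each $u_j$ is an $A$‑ancestor of $u_{j+1},\dots,u_d,t$. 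The one structural input I need about the before‑path is elementary: the vertices $u_2,\dots,u_d,t$ are proper ancestors of $s$ in $P$ with value exceeding $s$, and on a search path the vertices exceeding the target occur in strictly decreasing value as depth increases; hence $u_2,\dots,u_d,t$ sit at $d$ distinct increasing depths of $P$ with $t$ the shallowest, and, more generally, for every $k$ the set of the $k$ deepest vertices of $P$ is a value‑interval containing $s$.

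Next the weights. Fix a value‑interval $I$ containing $s$ that equals the set of the $k$ deepest vertices of $P$, and for a parameter $\delta\to0^{+}$ give weight $1$ to every vertex outside $I$ and weight $\delta$ to every vertex of $I$; thus the accessed element $s$ is light. Only path vertices matter in $\Phi_T-\Phi_{T'}$, and a direct computation shows that as $\delta\to0$ exactly $k$ suffixes of $P$ become $\Theta(\delta)$‑light (namely those contained in $I$), exactly $f:=\#\{v:\,A_v\subseteq I\}$ subtrees of $A$ become light, and hence $\Phi_T-\Phi_{T'}=(k-f)\log\delta+O(1)$ and $\log(W/w(s))=\log(1/\delta)+O(1)$, the $O(1)$'s being independent of $\delta$. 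Writing $g:=(k-1)-f$ for the number of vertices of $I\setminus\sset{s}$ whose $A$‑subtree is not contained in $I$, we have $k-f=g+1$, and the access‑lemma inequality $|P|\le \Phi_T-\Phi_{T'}+c\bigl(1+\log(W/w(s))\bigr)$ becomes
\[
  n \;\le\; \bigl(c-1-g\bigr)\log\tfrac{1}{\delta} + O(1).
\]
So it suffices to choose $I$ with $g\ge c$: then the coefficient of $\log(1/\delta)$ is negative and letting $\delta\to0$ contradicts $n=|P|>0$.

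Finally the choice of $I$, which is where the right‑turn chain does its work: take $I$ to be the value‑interval whose maximum element is $u_{d-1}$ and which coincides with the set of the $k$ deepest vertices of $P$ (for the appropriate $k$) — such an $I$ exists by the structural fact above, the only freedom being how far $I$ reaches below $s$. Then $u_2,\dots,u_{d-1}\in I\setminus\sset{s}$ while $u_d\notin I$; since $u_d$ is an $A$‑descendant of each $u_j$ with $j<d$, every one of $u_2,\dots,u_{d-1}$ has an $A$‑subtree poking out of $I$, whence $g\ge d-2$, which is at least $c$ as soon as $d\ge c+2$. As the failure of~(ii) supplies such accesses with $d$ unbounded, we reach the desired contradiction, so condition~(ii) must hold. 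The point I expect to require the most care is exactly this last step: verifying that one can choose how far $I$ extends on the low side of $s$ so that $I$, reaching up to $u_{d-1}$, is simultaneously the set of the $k$ deepest vertices of $P$ (equivalently, that no vertex outside $I$ lies deeper in $P$ than some vertex of $I$) — this is where the ``decreasing value with increasing depth'' behaviour of search paths is used, together with the observation that $u_d$ is itself an exterior vertex that is not too deep in $P$.
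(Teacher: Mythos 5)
Your proof is correct and uses essentially the same adversarial-weighting idea as the paper's proof: identify the chain of right-turn $A$-ancestors of a node $t$ of large right-depth, observe they are all deeper than $t$ in the before-path, and assign weights so that these nodes are light in the before-tree while their after-subtrees contain a heavy element, forcing $\Phi_{T'}-\Phi_T$ to grow linearly in the right-depth. The paper cuts at $t$ itself with weights $1$ versus a large $K$ (giving pendent trees weight $0$), whereas you reduce to a caterpillar and cut at $u_{d-1}$ with weights $\delta\to 0^+$ versus $1$; these are cosmetic variants of one and the same construction.
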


We call a BST algorithm \emph{local} if the transformation from before-path to after-tree can be performed in a bottom-up traversal of the path with a buffer of constant size. Nodes outside the buffer are already arranged into subtrees of the after-tree. We use Theorem~\ref{partial converse} to show that BST-algorithms satisfying the access lemma (via the SOL-potential) are necessarily local.

\begin{theorem} [Characterization Theorem]\label{characterization}
If a minimally self-adjusting BST algorithm satisfies the access lemma via the SOL-potential, then it is local. 
\end{theorem}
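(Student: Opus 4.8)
The plan is to contrapose and use Theorem~\ref{partial converse} as the bridge. Suppose $\aset$ satisfies the access lemma via the SOL-potential. By Theorem~\ref{partial converse}, every after-tree produced by $\aset$ satisfies condition~(ii): for every $t>s$ on the search path, the right-depth of $t$ in the after-tree is bounded by some absolute constant $d$, and symmetrically for $t<s$. I want to show this depth bound already forces the bottom-up, constant-buffer implementability that we defined as \emph{locality}.

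First I would set up the local execution model precisely. Walk up the before-path $P = (p_1, \dots, p_k)$ from the deepest node $p_1$ to the root $p_k$; at each step the algorithm holds in a buffer those nodes whose placement in the after-tree $A$ is not yet ``finalized,'' where a node $p_i$ is finalized once every node of $A$ that will ever be attached below $p_i$ (among the path nodes) has already been seen. The key structural observation is that the after-tree $A$, rooted at $s$, decomposes into a left spine-like region (nodes $<s$) and a right region (nodes $>s$), and condition~(ii) says each of these regions has bounded ``monotone depth'': along any root-to-leaf path in the right region the number of right turns is at most $d$ (left turns are unrestricted, but a maximal run of left turns forms a path of nodes that are all $<$ everything below them, i.e. a monotone chain). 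So the right region of $A$ is a tree of right-depth $\le d$ whose ``nodes'' are themselves monotone chains.

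The main step is then a counting/amortization argument bounding the buffer size. Process $P$ bottom-up. I claim that at any moment the set of not-yet-finalized path nodes is covered by $O(d)$ monotone chains, hence — since within a monotone chain only the current endpoint need be kept ``active'' while the rest is already assembled as a subtree of $A$ — the buffer needs only $O(d) = O(1)$ actual slots. The argument: among the path nodes already scanned, those lying in the right region of $A$ and not yet finalized must be ancestors (in $A$) of some not-yet-scanned path node; by condition~(ii) the ancestor chain in $A$ of any fixed node has $\le d$ right turns, so the un-finalized right-region path nodes occupy at most $d$ distinct ``right-turn levels,'' and at each level they form a contiguous monotone run (contiguity because the path nodes at a fixed level, ordered by key, are processed in a compatible order thanks to the BST ordering and the bottom-up scan). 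Symmetrically for the left region. That gives $O(d)$ monotone chains; collapse each finalized prefix of a chain into an already-built subtree, leaving $O(1)$ buffered nodes. Finally, verify that as the scan advances by one node the buffer is updated by $O(1)$ attach operations — merging a node onto the top of its chain, or closing off a chain whose governing not-yet-scanned descendant just appeared — so the whole transformation is a genuine bottom-up constant-buffer procedure, i.e. $\aset$ is local.

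The hard part will be making the ``contiguous monotone run at each right-turn level'' claim airtight: one must argue that the bottom-up order in which path nodes are visited is consistent with the key-order needed to assemble each monotone chain incrementally without ever reopening a finalized node, and that two path nodes at the same level and the same side cannot be separated in $A$ by a node that is not on the path (such separating nodes are reattached subtrees and do not consume buffer, but one must check they do not break contiguity of what the buffer must hold). Once that combinatorial consistency is established, the buffer-size bound and the $O(1)$-update-per-step bound are routine, and locality follows.
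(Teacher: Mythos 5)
Your proposal follows essentially the same route as the paper: invoke Theorem~\ref{partial converse} to get bounded left-/right-depth, observe (via Proposition~\ref{lem:tree_monotone}) that this is exactly decomposability into $O(1)$ monotone sets, and then exhibit an explicit bottom-up, constant-buffer reconstruction --- which is precisely the content of Theorem~\ref{thm:mono local}(ii). Your ``active endpoints of $O(d)$ monotone chains'' buffer corresponds directly to the paper's active set $A^*$ of elements $t_j(R_i)$ and $t_j(L_i)$ with first right-/left-ancestor still unscanned, and the ``hard part'' you flag (contiguity and no-reopening) is exactly what the paper's technical lemma about $\FRA$, $\FLA$, and $t_j(\cdot)$ is there to settle.
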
 

The theorem clarifies, why the access lemma was shown only for local BST algorithms. 

In the following, we introduce our main technical tools: subtree-disjoint and monotone sets in \S\,\ref{sec:tools}, and zigzag sets in \S\,\ref{sec:zigzag}. Bounding the potential change over these sets leads to the proof of Theorem~\ref{main theorem} in \S\,\ref{sec:zigzag}. Corollary~\ref{corollary}(i) is also proved in \S\,\ref{sec:zigzag}. Corollary~\ref{corollary}(ii) is shown in Appendix~\ref{sec:known local}, and Corollary~\ref{corollary}(iii) is the subject of \S\,\ref{sec:depth-halving}. In \S\,\ref{subsec:nec1} we show that condition (ii) of Theorem~\ref{main theorem} is necessary (Theorem~\ref{partial converse}), and in \S\,\ref{subsec:nec2} we argue that a weaker form of condition (i) must also be fulfilled by any reasonably efficient algorithm. We prove Theorem~\ref{characterization} in \S\,\ref{sec:limit}. We defer some of the proofs to the appendix.

\paragraph{Notation:} We use $T_a$ or $T(a)$ to denote the subtree of $T$ rooted at $a$. We use the same notation to denote the set of elements stored in the subtree. The set of elements stored in a subtree is an interval of elements. If $c$ and $d$ are the smallest and largest elements in $T(a)$, we write $T(a) = [c,d]$. We also use open and half-open intervals to denote subsets of $[n]$, for example $[3,7)$ is equal to $\sset{3,4,5,6}$.
We frequently write $\Phi$ instead of $\Phi_T$ and $\Phi'$ instead of $\Phi_{T'}$.

\section{Disjoint and Monotone Sets}
\label{sec:tools}

Let $\aset$ be any BST algorithm. Consider an access to $s$ and let $T$ and $T'$ be the search trees before and after the access. The main task in proving the access lemma is to relate the potential difference $\Phi_T- \Phi_{T'}$ 
to the length of the search path. For our arguments, it is convenient to split the potential into parts that we can argue about separately. For a subset $X$ of the nodes, define a partial potential on $X$ as $\Phi_T(X) = \sum_{a \in X} \log w(T(a))$.  

We start with the observation that the potential change is determined only by the nodes on the search path and that we can argue about disjoint sets of nodes separately.

\begin{proposition} Let $P$ be the search path to $s$. For $a \not\in P$, $T(a) = T'(a)$. Therefore,
$\Phi_T - \Phi_{T'} = \Phi_T(P) - \Phi_{T'}(P)$. 
Let $X= \dot{\bigcup}_{i=1}^k X_i$ where the sets $X_i$ are pairwise disjoint.
Then $\Phi_T(X) - \Phi_{T'}(X)  = \sum_{i=1}^k (\Phi_T(X_i) - \Phi_{T'}(X_i))$. 
\end{proposition}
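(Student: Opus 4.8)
The plan is to prove the two assertions in turn; both come down to unfolding definitions, and the only substantive point is the first sentence, for which I would invoke the structure of minimally self-adjusting BSTs recalled in the introduction.

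First I would show that $T(a) = T'(a)$ for every $a \notin P$. Since $\aset$ rearranges only the search path $P$ and leaves everything else untouched, every maximal subtree of $T$ that is disjoint from $P$ is reattached to the after-tree as a rigid unit, in the unique position dictated by the ordering of the elements. A node $a \notin P$ lies inside exactly one such maximal subtree, say the one rooted at $b$ (a child of some path node, with $b \notin P$); because this subtree is moved without altering its internal shape, the subtree rooted at $a$ is literally the same in $T$ and in $T'$, both as a shape and as a set of elements. In particular $w(T(a)) = w(T'(a))$, hence $\log w(T(a)) = \log w(T'(a))$. Writing $\Phi_T - \Phi_{T'} = \sum_{a \in [n]} \bigl(\log w(T(a)) - \log w(T'(a))\bigr)$ and discarding the terms with $a \notin P$, which vanish by the previous sentence, leaves exactly $\sum_{a \in P} \bigl(\log w(T(a)) - \log w(T'(a))\bigr) = \Phi_T(P) - \Phi_{T'}(P)$, which is the first claim.

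The second claim is immediate from the definition of the partial potential: $\Phi_T(X) - \Phi_{T'}(X) = \sum_{a \in X} \bigl(\log w(T(a)) - \log w(T'(a))\bigr)$, and since $X$ is the disjoint union of $X_1, \dots, X_k$ this sum splits as $\sum_{i=1}^{k} \sum_{a \in X_i} \bigl(\log w(T(a)) - \log w(T'(a))\bigr) = \sum_{i=1}^{k} \bigl(\Phi_T(X_i) - \Phi_{T'}(X_i)\bigr)$.

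I do not anticipate a genuine obstacle. The only place that needs care is making the ``reattached as a rigid unit'' step precise — spelling out that the subtrees ``disjoint from the before-path'' mentioned in the introduction are exactly the maximal subtrees hanging off $P$, that they partition $[n] \setminus P$, and that a minimally self-adjusting algorithm by definition does not modify their internal structure. Everything after that is a one-line manipulation of the sum defining $\Phi$.
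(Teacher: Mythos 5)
Your proof is correct and matches the (implicit) reasoning the paper relies on; the proposition is stated without proof because it is regarded as an immediate consequence of the definitions, and your unfolding of those definitions is exactly what is intended. The one point you flag as needing care — that the pendent subtrees hanging off $P$ are reattached rigidly, so $T(a)=T'(a)$ for $a\notin P$ — is precisely the content of the sentence in the introduction ("all subtrees that are disjoint from the before-path can be reattached to the after-tree in a unique way governed by the ordering of the elements"), and you invoke it correctly.
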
 

We introduce three kinds of sets of nodes, namely subtree-disjoint, monotone, and zigzag sets, and derive bounds for the potential change for each one of them. 
A subset $X$ of the search path is \emph{subtree-disjoint} if $T'(a) \cap T'(a') = \emptyset$ for all pairs $a\neq  a' \in X$; remark that subtree-disjointness is defined w.r.t.\ the subtrees after the access.  
We bound the change of partial potential for subtree-disjoint sets. The proof of the following lemma was inspired by the proof of the access lemma for Greedy BST by Fox~\cite{Fox11}.

\begin{lemma} \label{lem:disj}
Let $X$ be a subtree-disjoint set of nodes. 
Then
\[ \abs{X} \le 2 + 8 \cdot \log \frac{W}{w(T(s))} + \Phi_T(X) - \Phi_{T'}(X). \] 
\end{lemma}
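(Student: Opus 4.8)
The plan is to use the subtree-disjointness to charge the $|X|$ term against a telescoping-style sum of logarithmic weights, following the spirit of Fox's proof of the access lemma for Greedy BST. Let $X = \{a_1, a_2, \dots, a_m\}$ be indexed in the order they appear along the search path from $s$ outward (or in tree order after the access — the exact indexing to be fixed once I see how the $T'(a_i)$ nest). Since $X$ is subtree-disjoint, the intervals $T'(a_i)$ are pairwise disjoint subintervals of $[n]$, hence $\sum_i w(T'(a_i)) \le W$. That is the one global constraint I get for free on the after-side, and it will produce the $\log(W/w(T(s)))$ term.

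The key step is to relate $\Phi_T(X)$ to something I can lower-bound. For each $a_i \in X$, I want to say that $w(T(a_i))$ is reasonably large — specifically, that the before-subtrees of the $a_i$'s, while not disjoint, form a nested chain along the search path, so that $w(T(a_i))$ is at least the weight of a ``tail'' of the path. Concretely, if $a_1, a_2, \dots$ appear in this order down the search path, then $T(a_1) \supseteq T(a_2) \supseteq \cdots$ (each is an ancestor-subtree of the next, up to the branching at $s$), so $w(T(a_i)) \ge w(T(a_m)) \ge \dots$; more usefully, $w(T(a_i))$ dominates $w(T'(a_j))$ for all $j \ge i$ that lie in its subtree. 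The heart of the argument is then a convexity / weighted-telescoping inequality: bound $\sum_i 1 = m$ by $\sum_i c \cdot \log\frac{w(T(a_i))}{w(T'(a_i))}$ plus a correction, using that $\log x \ge 1$ once $x \ge 2$ and handling the (few) indices where the ratio is smaller than $2$ separately — those force a geometric decrease in the relevant weights, so there are only $O(\log(W/w(T(s))))$ of them, which is exactly where the constant $8$ and the additive $2$ come from. This yields $m \le 2 + 8\log\frac{W}{w(T(s))} + \sum_i \log w(T(a_i)) - \sum_i \log w(T'(a_i)) = 2 + 8\log\frac{W}{w(T(s))} + \Phi_T(X) - \Phi_{T'}(X)$.

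I expect the main obstacle to be making the ``split into big-ratio and small-ratio indices'' bookkeeping precise: I need to argue that whenever $\log\frac{w(T(a_i))}{w(T'(a_i))} < 1$, the weight $w(T'(a_i))$ is more than half of some ``capacity'' that strictly decreases along such indices, so their number is logarithmically bounded — and to do this I must exploit both the nesting of the before-subtrees and the disjointness of the after-subtrees simultaneously, pinning down which subtree of $s$ each $a_i$ lives in. The constants ($2$, $8$) will fall out of this counting once the right potential-telescoping is set up; getting them is routine, but getting the indexing and the nesting claim stated correctly is the delicate part. A secondary point is to confirm that $w(T(s))$ (rather than $w(s)$) is the right normalizer here — it is, because $T(s)$ is the subtree containing the whole search path, so every $T(a_i)$ and $T'(a_i)$ sits inside $[c,d] = T(s)$, and the disjointness bound reads $\sum_i w(T'(a_i)) \le w(T(s)) \le W$.
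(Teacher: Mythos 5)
Your high-level plan---order the elements of $X_{\ge s}$ along the path, set up a telescoping product of (before-weight)/(after-weight) ratios, lower-bound most factors by $2$, and bound the count of exceptions by $O(\log(W/w(T(s))))$ using disjointness of the after-subtrees---is indeed the shape of the paper's proof (which first splits $X = X_{<s} \,\dot\cup\, X_{\ge s}$ and treats each side). But the two spots you flag as ``the delicate part'' are exactly where the proof lives, and your sketch of them does not work as stated. The ratio to telescope is the \emph{shifted} one $r_i = w(T(a_{i+1}))/w(T'(a_i))$ (with $a_0 < a_1 < \dots < a_q$ ordered so $a_0$ is deepest), not $w(T(a_i))/w(T'(a_i))$. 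The reason is the cross-nesting $T'(a_i) \subseteq T(a_{i+1})$: since $T'(a_i)$ is an interval disjoint from $T'(a_{i+1}) \ni a_{i+1}$ and lies to the right of $s$, we get $T'(a_i) \subseteq (s, a_{i+1}) \subseteq T(a_{i+1})$, hence $r_i \ge 1$ for \emph{every} $i$. Your unshifted ratio does not have this property---$w(T'(a_i))$ can vastly exceed $w(T(a_i))$, so the ``small-ratio'' terms can contribute arbitrarily negative log-amounts, and merely \emph{counting} those indices does not bound the damage. The product $\prod_i r_i$ differs from $\prod_i w(T(a_i))/w(T'(a_i))$ only by the single boundary factor $w(T'(a_q))/w(T(a_0)) \le W/w(T(s))$, and this is precisely where the additive $\log(W/w(T(s)))$ term enters.

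The bound on the exceptional indices is also not a ``the weight is more than half of a shrinking capacity'' argument. The paper buckets indices by the doubling of the \emph{prefix} weight $w([c,a_\ell])$, where $T(a_0) = [c,d]$ and $w_0 = w(T(a_0))$: define $\sigma_j$ as the largest $\ell$ with $w([c,a_\ell]) \le 2^j w_0$. Within each bucket $[\sigma_j, \sigma_{j+1})$, at most $3$ indices can be ``heavy'' (meaning $w(T'(a_i)) > 2^{j-1}w_0$), because the $T'(a_i)$ are pairwise disjoint intervals contained in $[c, a_{\sigma_{j+1}})$, which has weight at most $2^{j+1}w_0$; for all the other (``light'') indices, the definition of $\sigma_j$ forces $w(T(a_{i+1})) > 2^j w_0$ and hence $r_i \ge 2$, which the telescoping handles. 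With $\lceil \log(W/w_0)\rceil$ buckets the heavy count is $O(\log(W/w(T(s))))$, and the constants $2$ and $8$ fall out of the arithmetic. So you have the right blueprint, but the two ideas that make it rigorous---the shift in the ratio and the prefix-weight bucketing---are missing. A smaller slip: $T'(a_i)$ need \emph{not} lie inside $T(s)$ (which is just $s$ and its pendant subtrees); the role of $w(T(s))$ is as a lower bound for $w(T(a_0))$, not a containment.
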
 

\begin{proof} 
We consider the nodes smaller than $s$ and greater or equal to $s$ separately, i.e. $X = X_{< s} \dot{\cup} X_{\ge s}$. 
We show $\abs{X_{\ge s}} \le 1 + \Phi_T(X_{\ge s})  - \Phi_{T'}(X_{\ge s}) + 4 \log \frac{W}{w(T(s))}$, and the same holds for $X_{< s}$. 
We only give the proof for $X_{\ge s}$.

Denote $X_{\ge s}$ by $Y = \{a_0, a_1, \ldots, a_q\}$  where $s \leq a_0 < \ldots < a_q$. Before the access, $s$ is a descendant of $a_0$, $a_0$ is a descendant of $a_1$, and so on. Let $T(a_0) = [c,d]$. Then $[s,a_0] \subseteq  [c,d]$ and $d < a_1$. Let $w_0 = w(T(a_0))$.
For $j \ge 0$, define $\sigma_j$ as the largest index $\ell$ such that $w([c,a_\ell]) \le 2^j w_0$. 
Then $\sigma_0 = 0$ since weights are positive and $[c,d]$ is a proper subset of $[c,a_1]$.
The set $\{\sigma_0, \ldots \}$ contains at most $\lceil{\log (W/w_0)\rceil}$ distinct elements. It contains $0$ and $q$. 
 
Now we upper bound the number of $i$ with $\sigma_j \le i < \sigma_{j+1}$.
We call such an element $a_i$ \emph{heavy} if $w(T'(a_i)) > 2^{j-1}w_0$. There can be at most $3$ heavy elements 
as otherwise $w([c,a_{j+1}]) \ge \sum_{\sigma_j \le  k < \sigma_{j+1}} w(T'(a_k)) > 4 \cdot 2^{j-1}w_0$, a contradiction. 

\sloppypar{Next we count the number of light (= non-heavy) elements. 
For each such light element $a_i$, we have $w(T'(a_i))  \leq 2^{j-1} w_0$. We also have $w(T(a_{i+1})) \ge w([c,a_{i+1}]) > w([c,a_{\sigma_j}])$ and thus $w(T(a_{i+1})) > 2^j w_0$ by the definition of $\sigma_j$. 
Thus the ratio $r_i =  {w(T(a_{i+1}))}/{w(T'(a_{i}))} \geq 2$ whenever $a_i$ is a light element.
Moreover, for any $i =0,\ldots, q-1$ (for which $a_i$ is not necessarily light), we have $r_i \geq 1$.
Thus,}
\begin{align*}
2^{\text{number of light elements}} &\le \prod_{0 \le i \le q-1} r_i 
=  \left(\prod_{0 \le i \le q} \frac{w(T(a_i))}{w(T'(a_i))}\right) \cdot \frac{w(T'(a_q))}{w_0}.
\end{align*}
So the number of light elements is at most $\Phi_{T}(Y) - \Phi_{T'}(Y)+ \log (W/w_0) $. 

\noindent
Putting the bounds together, we obtain, writing $L$ for $\log (W/w_0)$:
\begin{align*}
\abs{Y} \le  1 + 3 ( \lceil{L \rceil} - 1) + \Phi_{T}(Y) - \Phi_{T'}(Y)+ 
{L}
\le 1 + 4 L +  \Phi_{T}(Y) - \Phi_{T'}(Y).
\end{align*}

 
\end{proof} 

\newcommand{\depth}{{\mathit{depth}}}

Now we proceed to analyze our second type of subsets, that we call {\em monotone sets.} 
A subset $X$ of the search path is {\em monotone} if all elements in $X$ are larger (smaller) than $s$ and have the same right-depth (left-depth) in the after-tree. 

\begin{lemma}\label{characterization monotone} Assume $s < a < b$ and that $a$ is a proper descendant of $b$ in $P$. If $\{a,b\}$ is monotone, $T'(a) \subseteq T(b)$. 
\end{lemma}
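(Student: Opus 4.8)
The plan is to unravel what "monotone" means for the pair $\{a,b\}$ and combine it with the structural fact that $a$ sits below $b$ on the before-path. Since $s < a < b$, every node on the before-path between $b$ and $a$ (exclusive) lies strictly between $s$ and $b$, so in particular the whole interval $[s,a]$ is contained in $T(b)$; I would record this as the easy half of the argument. The real content is to show $T'(a) \subseteq T(b)$, i.e. that rearranging the search path into the after-tree does not pull into $a$'s after-subtree any element that was outside $b$'s before-subtree.

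First I would observe that all elements moved around by the access come from the before-path $P$ together with the subtrees hanging off $P$ (by the Proposition, everything off $P$ keeps its subtree), and that an off-path subtree $T(x)$ hanging from a path node $p$ is an interval that stays glued to $p$ in $A$. Hence it suffices to control which \emph{path nodes} end up in $T'(a)$: if every path node in $T'(a)$ lies in $T(b)$, then the off-path subtrees they carry do too (an off-path subtree hanging from a path node $p \in T(b)$ is an interval nested inside $T(b)$, since $T(b)$ is an interval of $[n]$ and $p$ together with a contiguous block around it stays inside $T(b)$ — here I use that $b$ is on the path and $d := \max T(b)$ satisfies $d < b$'s path-successor, and symmetrically on the low side $\min T(b) \le s$).

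Now for path nodes: suppose some path node $c$ with $c > b$ (the dangerous case, since $c < s$ or $s\le c \le b$ are automatically in $T(b)$) lands in $T'(a)$. Because $s < a$ and $c > b > a$, the element $a$ lies strictly between $s$ and $c$. In the after-tree $A$, walking from the root $s$ down to $c$ we must turn right at least once (since $c > s$), and $a$ being an ancestor of... — more precisely, since $a \in T'(a)$ is an ancestor of $c$ and $a < c$, the path from $a$ to $c$ turns right at its first step, so the right-depth of $c$ in $A$ is strictly larger than the right-depth of $a$. But $c$ and $a$ are both $> s$ and, being in $T'(a) \subseteq \{$path nodes and their subtrees$\}$... wait — $c$ need not be in $X$. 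So instead I use: $a$ and $c$ are both on the before-path, both $> s$; monotonicity of $\{a,b\}$ gives right-depth$(a) = $ right-depth$(b)$ in $A$; and I must derive a contradiction with right-depth$(c) > $ right-depth$(a) = $ right-depth$(b)$ combined with the relative order $a < b < c$ forcing, along the root-to-$c$ path in $A$ that passes through $a$, a further right turn before reaching $b$ or after. The cleanest route is: since $a$ is an ancestor of $c$ in $A$ and $a<b<c$, the node $b$ cannot be an ancestor of $c$ unless $b$ lies on the $a$-to-$c$ path; track the right-turn count to contradict $\mathrm{rdepth}(a)=\mathrm{rdepth}(b)$.

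The main obstacle I anticipate is precisely this last bookkeeping: making the right-depth counting argument airtight when $b$ may or may not be an ancestor of $a$'s offending descendant, and handling the off-path subtrees uniformly. I expect the slick proof the authors give instead phrases everything in terms of the interval $T'(a) = [c',d']$ and shows $d' < b$ and $c' \ge \min T(b)$ directly: $d' < b$ because an element $> b$ in $T'(a)$ would (being $>s$) force an extra right turn strictly below $a$, contradicting $\mathrm{rdepth}(a) = \mathrm{rdepth}(b)$ and the fact that $b \notin T'(a)$ (as $b$ is a proper ancestor of $a$, hence not a descendant); and $c' \ge \min T(b)$ because anything smaller than $\min T(b)$ lies on the $<s$ side or outside $P$ in a subtree that was not below $b$, which cannot migrate below $a$ when $a$ stays on the $>s$ side of $s$ in $A$. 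Combining $[c',d'] \subseteq [\min T(b), b) \subseteq T(b)$ finishes it.
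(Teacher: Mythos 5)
Your final paragraph does land on the paper's approach (bound the interval $T'(a)$ directly), but the crucial step has a flawed justification. You write that $b \notin T'(a)$ ``as $b$ is a proper ancestor of $a$, hence not a descendant.'' That inference conflates the before-path $P$ with the after-tree $A$: $b$ is indeed a proper ancestor of $a$ \emph{in $P$}, but a minimally self-adjusting rearrangement routinely inverts ancestor/descendant relations --- the accessed element $s$ is the deepest node on $P$ and becomes the root of $A$, so by itself ``$b$ is above $a$ in $P$'' tells you nothing about whether $b$ ends up in $T'(a)$. The correct reason that $b \notin T'(a)$ is precisely the monotonicity hypothesis: if $b \in T'(a)$ then $b$ is a proper descendant of $a$ in $A$ with $b > a$, so the path from $a$ down to $b$ begins with a right step and hence $\mathrm{rdepth}_A(b) > \mathrm{rdepth}_A(a)$, contradicting $\{a,b\}$ being monotone. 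You gesture at exactly this right-depth counting in the middle of your attempt, but you never cleanly close it --- the chain you need is: any $c > b$ in $T'(a)$ forces $b \in T'(a)$ by intervality of subtrees, which forces $\mathrm{rdepth}_A(b) > \mathrm{rdepth}_A(a)$, contradiction. Once $b \notin T'(a)$ is established this way, the conclusion follows immediately since $T'(a)$ is an interval containing $a$, all of whose elements exceed $s$ (as $s$ is the root of $A$ and $a > s$), so $T'(a) \subseteq (s,b) \subseteq [s,b] \subseteq T(b)$.

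Two further remarks on economy. Your long detour separating ``path nodes'' from ``off-path pendent subtrees'' is unnecessary: $T'(a)$ is an interval of $[n]$, so bounding its two endpoints handles everything at once, and there is no need to first argue about path nodes and then pull the pendent subtrees along. Likewise your lower bound $c' \geq \min T(b)$ is more than what is needed; the paper simply uses $c' > s$ together with $[s,b] \subseteq T(b)$, which already suffices. The paper's entire proof is two sentences: $[s,b] \subseteq T(b)$, the smallest element of $T'(a)$ exceeds $s$, and (by equal right-depth) $b$ exceeds every element of $T'(a)$.
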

\begin{proof} Clearly $[s,b] \subseteq T(b)$. The smallest item in $T'(a)$ is larger than $s$, and, since $a$ and $b$ have the same right-depth, $b$ is larger than all elements in $T'(a)$. 
\end{proof}

\begin{lemma} \label{lem:mono}
Let $X$ be a monotone set of nodes.
Then

\vspace{-0.1in} 

\[\Phi(X) - \Phi'(X) +\log \frac{W}{w(s)}\geq 0. \] 

\end{lemma}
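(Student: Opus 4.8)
The plan is to exploit the nesting of the subtrees rooted at the nodes of a monotone set and telescope the partial potentials. By the left--right symmetry of all the definitions involved, I would assume without loss of generality that every element of $X$ is larger than $s$, and write $X = \{a_1 < a_2 < \dots < a_k\}$.

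First I would check that on the before-path $P$ these nodes form a single descending chain, with $a_1$ deepest and $a_k$ closest to the root. Each $a_i$ is a proper ancestor of $s$ in $T$, and the ancestors of $s$ form a chain, so $a_i$ and $a_{i+1}$ are comparable; a one-line BST argument rules out $a_i$ being the ancestor (if it were, then $a_{i+1}>a_i$ would put $a_{i+1}$, and hence its descendant $s$, in the right subtree of $a_i$, contradicting $s<a_i$), so $a_{i+1}$ is an ancestor of $a_i$. Thus $a_i$ is a proper descendant of $a_{i+1}$ in $P$, and $\{a_i,a_{i+1}\}$, being a subset of the monotone set $X$, is itself monotone; Lemma~\ref{characterization monotone} then gives $T'(a_i)\subseteq T(a_{i+1})$, hence $w(T'(a_i))\le w(T(a_{i+1}))$ for $i=1,\dots,k-1$.

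Next I would write $\Phi(X)-\Phi'(X)=\sum_{i=1}^{k}\bigl(\log w(T(a_i))-\log w(T'(a_i))\bigr)$ and substitute $\log w(T'(a_i))\le \log w(T(a_{i+1}))$ for $i<k$. The sum telescopes, leaving $\log w(T(a_1))-\log w(T'(a_k))$. To finish, I use that $s\in T(a_1)$ (since $s<a_1$ and $a_1$ is an ancestor of $s$ in $T$), so $w(T(a_1))\ge w(s)$, and that $w(T'(a_k))\le w(T')=W$; combining gives $\Phi(X)-\Phi'(X)\ge \log w(s)-\log W=-\log\frac{W}{w(s)}$, which is the claim, with the cases $k\le 1$ following from the same two inequalities. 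There is no real obstacle here: the only points needing care are verifying that a monotone set lies on one ancestor chain of the before-path (so Lemma~\ref{characterization monotone} applies to consecutive pairs) and keeping straight which subtree --- before or after --- each logarithm refers to; the rest is a single telescoping estimate.
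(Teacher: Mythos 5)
Your proposal is correct and follows essentially the same route as the paper's proof: order the monotone set along the ancestor chain, invoke the nesting $T'(a_i)\subseteq T(a_{i+1})$ (Lemma~\ref{characterization monotone}), telescope, and bound the two extremal terms by $w(s)$ and $W$. The only difference is that you spell out the verification that the value-ordering of $X$ coincides with the depth-ordering on the before-path, a step the paper leaves implicit.
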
 
\begin{proof}
We order the elements in $X = \{a_1,\ldots, a_q\}$ such that $a_i$ is a proper descendant
 of $a_{i+1}$ in the search path for all $i$. Then $T'(a_i) \subseteq T(a_{i+1})$ by monotonicity, and hence 
\[ \Phi(X) - \Phi'(X) =  \log \frac{\prod_{a \in X} w(T(a))}{\prod_{a \in X} w(T'(a))} = \log \frac{w(T(a_1))}{w(T'(a_q))} + \sum_{i=1}^{q-1}  \log \frac{w(T(a_{i+1}))}{w(T'(a_{i}))}.\]
The second sum is nonnegative. Thus $\Phi(X) - \Phi'(X) \ge \log \frac{w(T(a_1))}{w(T'(a_q))} \ge \log \frac{w(s)}{W}$.
\end{proof}

\begin{theorem}
	\label{thm:disj+mono}Suppose that, for every access to an element $s$, we can partition the elements on the search path $P$ into at most $k$ subtree-disjoint sets $D_1$ to $D_k$ and at most $\ell$ monotone sets $M_1$ to $M_\ell$. Then 

\vspace{-0.1in} 

	\[\sum _{i\le k} \abs{D_{i}} \le 
	\Phi_T(S)-\Phi_{T'}(S) + 2 k + (8 k+\ell)\log\frac{W}{w(s)}.
	\]
\end{theorem}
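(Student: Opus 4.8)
The plan is to assemble the statement directly from the two partial-potential estimates already in hand — Lemma~\ref{lem:disj} for subtree-disjoint sets and Lemma~\ref{lem:mono} for monotone sets — using the additivity of the partial potential over a disjoint union (the Proposition at the start of this section). Nothing deeper than careful bookkeeping should be required.

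First I would apply Lemma~\ref{lem:disj} to each $D_i$ separately. It gives $\abs{D_i} \le 2 + 8\log\frac{W}{w(T(s))} + \Phi_T(D_i) - \Phi_{T'}(D_i)$, and since $w(s) \le w(T(s)) \le W$ we may weaken $\log\frac{W}{w(T(s))}$ to the larger quantity $\log\frac{W}{w(s)}$. Summing over $i = 1,\dots,k$,
\[
\sum_{i\le k}\abs{D_i} \;\le\; 2k + 8k\log\frac{W}{w(s)} + \sum_{i\le k}\bigl(\Phi_T(D_i) - \Phi_{T'}(D_i)\bigr).
\]
Next I would rewrite the leftover potential sum using the hypothesis that $D_1,\dots,D_k,M_1,\dots,M_\ell$ partition the node set $S$ of the search path. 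By the Proposition,
\[
\Phi_T(S) - \Phi_{T'}(S) \;=\; \sum_{i\le k}\bigl(\Phi_T(D_i) - \Phi_{T'}(D_i)\bigr) + \sum_{j\le \ell}\bigl(\Phi_T(M_j) - \Phi_{T'}(M_j)\bigr),
\]
so $\sum_{i\le k}(\Phi_T(D_i) - \Phi_{T'}(D_i)) = \Phi_T(S) - \Phi_{T'}(S) - \sum_{j\le \ell}(\Phi_T(M_j) - \Phi_{T'}(M_j))$. Finally, Lemma~\ref{lem:mono} bounds each monotone term from below by $-\log\frac{W}{w(s)}$, hence its negation from above by $\log\frac{W}{w(s)}$; summing over the $\ell$ monotone sets and substituting into the previous displays yields exactly $\sum_{i\le k}\abs{D_i} \le \Phi_T(S) - \Phi_{T'}(S) + 2k + (8k+\ell)\log\frac{W}{w(s)}$.

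I do not expect a real obstacle; the only points needing a moment of attention are the harmless passage from $w(T(s))$ to $w(s)$ in the $D_i$-bound (the logarithm only grows), getting the sign right when the monotone contribution is moved to the other side (Lemma~\ref{lem:mono} is precisely a \emph{lower} bound, which is what survives negation), and observing that the $D_i$'s need not cover $S$ — the monotone pieces enter the final inequality only through their potential change, never through their cardinality, which is why no term proportional to $\sum_j\abs{M_j}$ appears.
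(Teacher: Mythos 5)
Your proposal is correct and is exactly the argument the paper intends; the paper states only that the theorem ``follows immediately from Lemma~\ref{lem:disj} and~\ref{lem:mono},'' and your write-up supplies precisely that bookkeeping: apply Lemma~\ref{lem:disj} to each $D_i$, weaken $\log\frac{W}{w(T(s))}$ to $\log\frac{W}{w(s)}$, use the additivity of the partial potential to trade $\sum_i(\Phi_T(D_i)-\Phi_{T'}(D_i))$ for $\Phi_T(S)-\Phi_{T'}(S)$ minus the monotone contributions, and then discharge each monotone term with Lemma~\ref{lem:mono}. Your closing remark that the $M_j$'s enter only through their potential change, never their cardinality, correctly explains why no $\sum_j\abs{M_j}$ term appears.
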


The proof of Theorem~\ref{thm:disj+mono} follows immediately from Lemma~\ref{lem:disj} and \ref{lem:mono}. We next give some easy applications.

\paragraph{Path-Balance:}
The path-balance algorithm maps the search path $P$ to a balanced BST of depth 
$c=\left\lceil \log_{2}(1 + |P|)\right\rceil$ rooted at $s$. Then 

\begin{lemma}
\label{lem:pblemma}
$\abs{P} \le \Phi(P) - \Phi'(P) + O((1 + \log \abs{P})(1 + \log (W/w(s))))$. \end{lemma}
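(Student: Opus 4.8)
The plan is to invoke Theorem~\ref{thm:disj+mono} with $\ell=0$, i.e.\ to cover the whole search path $P$ by a few subtree-disjoint sets and use no monotone sets at all. The one combinatorial fact needed is that, in any rooted tree, the set of nodes at a fixed depth is subtree-disjoint: if $a\neq a'$ have the same depth then neither is an ancestor of the other, so a common element of $T'(a)$ and $T'(a')$ would be a descendant of both, forcing $a$ and $a'$ to be comparable in $T'$ — a contradiction. Hence, writing $D_j$ for the set of path nodes lying at depth $j$ in the after-tree $T'$, each $D_j$ is subtree-disjoint and $\{D_j\}_j$ is a partition of $P$.

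First I would record that, for the path-balance heuristic, the after-tree is by definition a balanced BST of depth $c=\lceil\log_2(1+|P|)\rceil$ rooted at $s$, so the number of nonempty classes $D_j$ is at most $c=O(1+\log|P|)$. (Reattaching the subtrees that hang off the search path changes neither the depth of a path node nor the ancestor relations among path nodes, so it does not affect this count or the subtree-disjointness of the $D_j$ within the full $T'$.) Then I apply Theorem~\ref{thm:disj+mono} with $k=O(1+\log|P|)$ and $\ell=0$, obtaining
\[
|P|=\sum_j |D_j|\le \Phi(P)-\Phi'(P) + 2k + 8k\log\frac{W}{w(s)} .
\]
Since $2k+8k\log\frac{W}{w(s)} = O\!\left((1+\log|P|)\bigl(1+\log\tfrac{W}{w(s)}\bigr)\right)$, the lemma follows; summing this per-access bound with an appropriate weight assignment recovers the $O(\log n\log\log n)$ amortized bound quoted in the introduction.

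There is no real obstacle here beyond the two bookkeeping points above: checking that a single level of the after-tree is subtree-disjoint (and remains so after reattachment), and checking that the balanced after-tree has only $O(1+\log|P|)$ levels so that that many subtree-disjoint classes suffice. Notably the argument uses nothing about the precise shape of the after-tree produced by path-balance, only that its depth is logarithmic in $|P|$; the same three lines bound the amortized access cost of \emph{any} minimally self-adjusting algorithm whose after-trees always have depth $O(\log|P|)$.
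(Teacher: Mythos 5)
Your proof is correct and takes essentially the same route as the paper: decompose $P$ by depth level in the after-tree into $c=\lceil\log_2(1+|P|)\rceil$ subtree-disjoint sets, take $\ell=0$, and invoke Theorem~\ref{thm:disj+mono}. The extra bookkeeping you supply (why a fixed level is subtree-disjoint, and why reattaching pendent subtrees preserves this) is implicit in the paper's one-line proof and is verified correctly here.
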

\begin{proof}
	We decompose $P$ into sets $P_0$ to $P_c$, where $P_k$ contains the nodes of depth $k$ in the after-tree. Each $P_k$ is subtree-disjoint. An application of Theorem~\ref{thm:disj+mono} completes the proof. \end{proof}

\begin{theorem}
	Path-Balance has amortized cost at most $O(\log n\log\log n)$.\end{theorem}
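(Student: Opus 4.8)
The plan is to use Lemma~\ref{lem:pblemma} with the uniform weight function $w \equiv 1$, so that $W = n$ and $w(s) = 1$ for the accessed element. Then the additive term in Lemma~\ref{lem:pblemma} becomes $O((1 + \log|P|)(1 + \log n)) = O(\log n \log\log n)$, since $|P| \le n$ and hence $\log|P| \le \log n$. Summing the inequality of Lemma~\ref{lem:pblemma} over an access sequence $s_1, \dots, s_m$, the potential terms telescope: $\sum_{j} (\Phi_{T_{j-1}}(P_j) - \Phi_{T_j}(P_j)) = \sum_j (\Phi_{T_{j-1}} - \Phi_{T_j}) = \Phi_{T_0} - \Phi_{T_m}$, using the Proposition that the potential change is confined to the search path. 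So the total cost $\sum_j |P_j|$ is at most $\Phi_{T_0} - \Phi_{T_m} + O(m \log n \log\log n)$.

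It remains to bound the telescoped potential difference $\Phi_{T_0} - \Phi_{T_m}$. With unit weights, $w(T(a))$ is just the number of nodes in the subtree rooted at $a$, which lies between $1$ and $n$; hence $0 \le \log w(T(a)) \le \log n$ for each of the $n$ nodes, so $0 \le \Phi_T \le n \log n$ for any tree $T$. Therefore $\Phi_{T_0} - \Phi_{T_m} \le n \log n$, a quantity independent of $m$. Dividing the total cost by $m$, the amortized cost per access is at most $O(\log n \log\log n) + \frac{n \log n}{m}$; for $m = \Omega(n)$ this is $O(\log n \log\log n)$, and in any case the $n\log n$ additive term is absorbed into the amortized bound in the standard way (or one simply states the bound as amortized over sequences of length $\Omega(n)$, as is conventional).

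I do not expect a genuine obstacle here — this is a direct corollary of Lemma~\ref{lem:pblemma}. The only point requiring a little care is the choice of weight function: the access lemma and Lemma~\ref{lem:pblemma} are stated for an arbitrary positive weight function, and one must instantiate it with uniform weights to make $\log(W/w(s)) = \log n$ and to get the clean $O(n\log n)$ bound on the potential range; any non-uniform choice would either weaken the $\log(W/w(s))$ factor or spoil the bound on $\Phi_{T_0} - \Phi_{T_m}$. Everything else is the routine telescoping-and-amortization argument, so the proof is essentially two lines once the right weights are plugged in.
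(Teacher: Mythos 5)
There is a genuine gap at the very first step. You claim that with uniform weights the additive term in Lemma~\ref{lem:pblemma} becomes $O((1+\log|P|)(1+\log n)) = O(\log n \log\log n)$ ``since $|P| \le n$ and hence $\log|P| \le \log n$.'' But $\log|P| \le \log n$ only gives $(1+\log|P|)(1+\log n) = O(\log^2 n)$, not $O(\log n \log\log n)$. An individual access path can be as long as $n$, and nothing you have said rules that out; if you simply plug $\log|P| = \Theta(\log n)$ into your summation, the total cost you derive is $O(n\log n) + O(m\log^2 n)$, i.e.\ amortized $O(\log^2 n)$, which is a weaker bound.

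The $\log\log n$ factor is the content of the theorem, and it comes from a self-bounding (bootstrapping) argument that your proposal skips entirely. Let $c_i = |P_i|$ and $C = \sum_i c_i$. After telescoping the potential exactly as you did, one obtains $C \le n\log n + O(1+\log n)\sum_i(1+\log c_i)$. The crucial step is that $\sum_i \log c_i = \log\prod_i c_i \le m\log(C/m)$ by concavity of $\log$ (AM--GM), so the cost $C$ appears on \emph{both} sides: $C \le O((n+m)\log n) + O(m\log n)\cdot \log(C/m)$. Writing $C = K(n+m)\log n$ and solving the resulting implicit inequality $K \le O(1) + O(1)\cdot\log(K\log n)$ gives $K = O(\log\log n)$. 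Without this step you cannot beat $O(\log^2 n)$; the intuition is that \emph{on average} $c_i$ is small (around $\log n\log\log n$), so $\log c_i$ is around $\log\log n$, but this has to be established by the fixed-point argument, not assumed. The telescoping and the uniform-weight choice in your writeup are both fine, but the heart of the proof is missing.
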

\begin{proof}
	We choose the uniform weight function: $w(a)=1$ for all $a$. Let $c_i$ be the cost of the $i$-th access, $1 \le i \le m$, and let $C = \sum_{1 \le i \le m} c_i$ be the total cost of the accesses. Note that $\prod_i c_i \le (C/m)^m$. 
The potential of a tree with $n$ items is at most $n \log n$. Thus 
$C \le n \log n + \sum_{1 \le i \le m} O((1 + \log c_i)(1 + \log n)) 
= O((n + m) \log n) + O(m \log n) \cdot \log (C/m)$
by Lemma~\ref{lem:pblemma}. Assume $C = K (n+ m) \log n$ for some $K$. Then $K = O(1) + O(1) \cdot \log(K \log n)$ and hence $K = O(\log\log n)$.
\end{proof}

\paragraph{Greedy BST:} The Greedy BST algorithm was introduced by Demaine et al.~\cite{DemaineHIKP09}. It is an online version of the offline greedy algorithm proposed independently by Lucas and Munro~\cite{Mun00,Luc88}. 
The definition of Greedy BST requires a geometric view of BSTs. 
Our notions of subtree-disjoint and monotone sets translate naturally into geometry, and this allows us to derive the following theorem.  

\begin{theorem} 
	\label{thm:greedy sat}
	Greedy BST satisfies the (geometric) access lemma. 
\end{theorem}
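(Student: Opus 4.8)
The plan is to carry out the whole argument in the geometric view of BSTs of Demaine et al.~\cite{DemaineHIKP09}, where an execution of an online BST algorithm on an access sequence corresponds to an incremental construction of an arborally satisfied point set: at step $i$ we place the point $(x_i,i)$ and Greedy BST completes the configuration by adding the \emph{minimum} number of points on row $i$, which are exactly the corner points of the two staircases of earlier points visible from $(x_i,i)$; the access cost of step $i$ is the number of points on row $i$, i.e.\ the number of touched points, which corresponds to $\abs{P}$ on the tree side. First I would fix the natural geometric analogue of the SOL potential: to each point $p$ of the current configuration assign the interval $I(p)\subseteq[n]$ of columns spanned between the left and right neighbours of $p$ on the lowest staircase through $p$, and set $\Phi=\sum_p \log w(I(p))$. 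The geometric access lemma then states that the number of points touched at step $i$ is at most $\Phi_{\text{before}}-\Phi_{\text{after}}+O(1+\log(W/w(x_i)))$, and via the Demaine et al.\ correspondence this yields the tree access lemma and all its consequences for Greedy.

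Second, I would record the geometric analogue of the Proposition of \S\,\ref{sec:tools}: inserting the new row-$i$ points changes $I(p)$ only for the touched points, so $\Phi_{\text{before}}-\Phi_{\text{after}}$ depends only on those, and is additive over families of touched points with pairwise disjoint intervals. Declare a set of touched points \emph{subtree-disjoint} if their intervals $I(\cdot)$ are pairwise disjoint and \emph{monotone} if their intervals are pairwise nested. The proofs of Lemma~\ref{lem:disj} and Lemma~\ref{lem:mono} then transfer almost verbatim, since the weight ratios telescope exactly as before, giving a geometric form of Theorem~\ref{thm:disj+mono}: if the touched set at step $i$ partitions into $k$ subtree-disjoint and $\ell$ monotone families, the number of touched points is at most $\Phi_{\text{before}}-\Phi_{\text{after}}+2k+(8k+\ell)\log(W/w(x_i))$.

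Third, and this is the combinatorial heart, I would show that for Greedy the touched set at every step partitions into $O(1)$ subtree-disjoint and $O(1)$ monotone families. The touched points split into the point $(x_i,i)$ itself, a right staircase $(v_1,\tau_1),\dots,(v_p,\tau_p)$ with $x_i<v_1<\dots<v_p$ and $\tau_1>\dots>\tau_p$, and a symmetric left staircase. Since consecutive corners of a staircase are precisely the left/right row-$i$ neighbours of one another, the interval $I(v_j)$ lies between $v_{j-1}$ and $v_{j+1}$; hence the even-indexed corners have pairwise disjoint intervals and so do the odd-indexed ones, so each staircase is the union of two subtree-disjoint families (alternatively one may read a staircase as one monotone family plus one subtree-disjoint family). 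Together with the singleton $\{(x_i,i)\}$ this gives $k=O(1)$ and $\ell=O(1)$; plugging into the geometric Theorem~\ref{thm:disj+mono} bounds the touched count, hence the access cost of step $i$, by $\Phi_{\text{before}}-\Phi_{\text{after}}+O(1+\log(W/w(x_i)))$, and summing over the sequence proves the (geometric) access lemma.

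The step I expect to be the main obstacle is the second one carried out \emph{with the right definition of} $I(p)$: the geometric potential must be chosen so that (a) it agrees, up to lower-order terms and the boundary columns, with the tree SOL potential under the Demaine et al.\ correspondence, so that the geometric access lemma really does deliver the tree consequences for Greedy; and (b) Greedy's newly inserted row-$i$ points have intervals that interleave in the clean ``consecutive corners are neighbours'' pattern needed for the $O(1)$-family decomposition. Finding a single definition of $I(p)$ meeting both constraints, and checking that the telescoping in the proofs of Lemmas~\ref{lem:disj}--\ref{lem:mono} survives the translation — in particular handling the outermost corner of each staircase, whose interval reaches a global boundary — is where the real care is required; the remainder is bookkeeping.
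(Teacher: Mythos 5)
Your proposal is correct and follows essentially the same route as the paper's. The paper sets up a height-diagram formalism (which is equivalent to the point-set view you use), defines the neighborhood $N_h(a)$ playing the role of your interval $I(p)$, proves a geometric analogue of Lemma~\ref{lem:disj} for neighborhood-disjoint sets, and then — just as you do — observes that after Greedy processes access $s$, the touched elements $a_1<\dots<a_k$ all sit at the same height so $N_{h'}(a_i)=(a_{i-1},a_{i+1})$, whence the odd-indexed and even-indexed touched elements form two neighborhood-disjoint families. Two small remarks: the paper does the odd/even split once over the whole stair rather than separately per side, so it uses $k=2$ families and no monotone sets at all; and your concern (a) about exact agreement with the tree SOL potential is not actually needed — the paper proves the access lemma directly for the geometric potential and then invokes the Demaine et al.\ result that any geometric BST algorithm can be simulated by a real BST at $O(1)$ amortized overhead, which suffices to transfer all consequences.
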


The geometric view of BSTs and the proof of the theorem are deferred to Appendix~\ref{sec:geom}. 
We remark that once the correspondences to geometric view are explained, the proof of Theorem~\ref{thm:greedy sat} is almost immediate.

\section{Zigzag Sets}
\label{sec:zigzag}

Let $s$ be the accessed element and let $a_1,\ldots,a_{\abs{P} - 1}$ be the reversed search path without $s$. 
For each $i$, define the set $Z_{i}=\sset{a_{i},a_{i+1}}$ if $a_i$ and $a_{i+1}$ lie on different sides of $s$, and let $Z_{i}=\emptyset$ otherwise. 
The zigzag set $Z_P$ is defined as $Z_P = \bigcup_{i} Z_i$. 
In words, the number of non-empty sets $Z_i$ is exactly the number of ``side alternations'' in the search path, and  the cardinality of $Z_P$ is the number of elements involved in such alternations.

\paragraph{Rotate to Root:} We first analyze the rotate-to-root algorithm (Allen, Munro~\cite{allen_munro}), that brings the accessed element $s$ to the root and arranges the elements smaller (larger) than $s$ so the ancestor relationship is maintained, see Figure~\ref{fig:splay} for an illustration.

\begin{lemma}
	\label{lem:zigzag}$ \abs{Z} \le \Phi(Z_P)-\Phi'(Z_P)+O(1+\log\frac{W}{w(T(s))}).$\end{lemma}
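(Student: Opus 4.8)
The plan is to analyze the rotate-to-root algorithm by decomposing the zigzag set $Z_P$ into a small number of subtree-disjoint sets, so that Lemma~\ref{lem:disj} applies directly. The key structural observation is this: under rotate-to-root, the elements smaller than $s$ form a subtree whose shape is determined by the left-to-right order of the ``left excursions'' along the before-path, and symmetrically for the elements larger than $s$. I would set up coordinates along the reversed search path $a_1, \ldots, a_{|P|-1}$ and classify the indices $i$ for which $Z_i \neq \emptyset$ — i.e., the side alternations. Each such alternation contributes the pair $\{a_i, a_{i+1}\}$, one element on each side of $s$.

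First I would argue that, after rotate-to-root, whenever $a_i$ and $a_{i+1}$ are on the same side of $s$ and $a_i$ is a proper descendant of $a_{i+1}$ on the before-path, we get a nesting $T'(a_i) \subseteq T(a_{i+1})$ of the kind used in Lemma~\ref{lem:mono}; the ``interesting'' elements for the zigzag bound are precisely the endpoints of maximal monotone runs, which are exactly the side-alternation vertices. Then I would show that the set of side-alternation vertices lying on the ``$<s$'' side can be split into $O(1)$ subtree-disjoint sets in $T'$: because rotate-to-root builds the left subtree of $s$ as a right-leaning path of the ``deepest-so-far from the left'' vertices, two consecutive left-side alternation vertices are either in an ancestor/descendant relation in $T'$ with disjoint ``new'' mass, or already subtree-disjoint; a parity/greedy grouping of the run endpoints yields a constant number of subtree-disjoint classes. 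The same holds on the ``$>s$'' side. Having done this, I apply Lemma~\ref{lem:disj} to each of the $O(1)$ classes and sum, absorbing the additive $O(1)$ and $O(\log(W/w(T(s))))$ terms across the constantly many classes, which gives exactly the claimed bound $\abs{Z} \le \Phi(Z_P) - \Phi'(Z_P) + O(1 + \log\frac{W}{w(T(s))})$.

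Alternatively — and this may be the cleaner route — I would prove the bound by a direct telescoping argument mirroring the proof of Lemma~\ref{lem:disj}: order the left-side alternation vertices $b_0 < b_1 < \cdots < b_p$, observe that before the access each $b_{j}$ sits inside $T(b_{j+1})$ (since on the before-path the later alternation vertex is an ancestor and on the same side), while after the access $w(T'(b_j))$ is controlled because $b_j$'s new subtree, being a node on the right-spine of $s$'s left subtree under rotate-to-root, contains only elements of $[b_{j-1}, b_j]$ from this run plus attached off-path subtrees. The $\sigma_j$-doubling bookkeeping of Lemma~\ref{lem:disj} then goes through verbatim, counting ``heavy'' versus ``light'' alternation vertices, and contributes the $\log(W/w(T(s)))$ term; doing the symmetric count on the right side and adding the two gives the lemma.

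The main obstacle I anticipate is pinning down the precise combinatorial structure of the after-tree produced by rotate-to-root along a path with many side alternations: one must be careful that an alternation vertex on the left side need not be a leaf of $T'$, so ``subtree-disjoint'' is not immediate, and the grouping into $O(1)$ classes requires checking that the new subtrees of two alternation vertices in the same residue class are genuinely disjoint rather than nested. I expect this to be handled by the observation that consecutive alternation vertices on the same side alternate between being ``ancestor'' and ``incomparable'' in $T'$ (because the intervening run on the opposite side forces the rotate-to-root spine to ``restart''), so taking every other one gives disjointness. Once that structural claim is isolated and proved (it is essentially a one-figure argument, cf. Figure~\ref{fig:splay}), the potential accounting is routine given Lemmas~\ref{lem:disj} and~\ref{lem:mono}.
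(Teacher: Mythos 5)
Your proposal takes a genuinely different route from the paper, but both of its branches rest on a structural claim about the after-tree that is false for rotate-to-root. Because rotate-to-root preserves ancestor relations and brings $s$ to the root, the set of search-path elements smaller than $s$ becomes a single right-leaning path hanging off $s$'s left child (and symmetrically on the right). Hence for any two same-side alternation vertices $b_j < b_k$, one has $T'(b_k) \subsetneq T'(b_j)$: they are \emph{always} nested, never incomparable. Your claim that ``consecutive alternation vertices on the same side alternate between being ancestor and incomparable in $T'$, so taking every other one gives disjointness'' is therefore wrong; no constant-size partition of the left-side (or right-side) alternation vertices into subtree-disjoint classes exists, so Lemma~\ref{lem:disj} cannot be invoked in the way you propose. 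The same error undermines your Route B: $T'(b_j)$ contains the entire interval $[\min T(b_j), s)$, not just ``$[b_{j-1}, b_j]$ plus off-path subtrees,'' so the nested $T'(b_j)$ do not supply the disjointness that the heavy/light doubling count of Lemma~\ref{lem:disj} relies on (the step $w([c,a_{j+1}]) \ge \sum_k w(T'(a_k))$ needs the $T'(a_k)$ to be disjoint).

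The idea you are missing is that the potential gain from a side alternation is not a disjointness phenomenon at all but a \emph{splitting} phenomenon. The paper's proof treats each nonempty pair $Z_i = \{a_i, a_{i+1}\}$, with $a_i$ and $a_{i+1}$ on \emph{opposite} sides of $s$, by the zig-zag potential argument familiar from splay trees: writing $W_1 = w(T'(a_{i+1}))$, $W_2 = w(T'(a_i))$, one has $w(T(a_{i+1})) \ge W_1 + W_2$, and the AM--GM inequality $(W_1+W_2)^2 \ge 4W_1W_2$ gives $\Phi(Z_i) - \Phi'(Z_i) + \log\tfrac{w(T(a_{i+1}))}{w(T(a_i))} \ge 2$. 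The pairs $Z_i$ with even indices are pairwise disjoint (and likewise for odd), so one sums the claim over the larger of the two parity classes; the log-ratio terms are each nonnegative and their full telescoping sum is at most $\log(W/w(T(s)))$, and the leftover vertices of $Z_P$ decompose into two monotone sets, which by Lemma~\ref{lem:mono} contribute only another $O(\log(W/w(s)))$. Without the zig-zag AM--GM step there is no source of the linear-in-$|Z_P|$ term, so your accounting would not close.
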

\begin{proof}
	Because $s$ is made the root and ancestor relationships are preserved otherwise, $T'(a)=T(a)\cap(-\infty,s)$ if $a < s$ and $T'(a)=T(a)\cap(s,\infty)$ if $a > s$. We first deal with a single side alternation.

\begin{claim} 
$	2 \le \Phi(Z_{i})-\Phi'(Z_{i})+ \log\frac{w(T(a_{i+1}))}{w(T(a_{i}))}$.
\end{claim} 
\begin{proof} This proof is essentially the proof of the zig-zag step for splay trees. 
We give the proof for the case where $a_i > s$ and $a_{i+1} < s$; the other case is symmetric. Let $a'$ be the left ancestor of $a_{i+1}$ in $P$ and let $a''$ be the right ancestor of $a_i$ in $P$. If these elements do not exist, they are $-\infty$ and $+\infty$, respectively. 
Let $W_1= w((a', 0))$, $W_2 = w((0,a''))$, and $W' = w((a_{i+1}, 0))$.  In $T$, we have $w(T(a_i)) = W' + w(s) + W_2$ and $w(T(a_{i+1})) = W_1 + w(s) + W_2$, and in $T'$, we have $w(T'(a_i)) = W_2$ and $w(T'(a_{i+1})) = W_1$.

Thus $\Phi(Z_i) - \Phi'(Z_i) + \log \frac{W_1+ w(s) + W_2}{W' + w(s) + W_2} \geq  \log (W_1 + w(s) + W_2) - \log W_1 + \log (W_2 + w(s) + W') - \log W_2 + \log \frac{W_1 +w(s) + W_2}{W' + w(s) + W_2}  \ge 
2\log (W_1 + W_2) - \log W_1 - \log W_2 \ge 2$, since $(W_1 +W_2)^2 \geq 4 W_1 W_2$ for all positive numbers $W_1$ and $W_2$. 
\end{proof}

Let $Z_{\text{even}}$ ($Z_{\text{odd}}$) be the union of the $Z_i$ with even (odd) indices. One of the two sets has cardinality at least $\abs{Z_P}/2$. 
Assume that it is the former; the other case is symmetric. 
We sum the statement of the claim over all $i$ in $Z_{\text{even}}$ and obtain 
$$\sum_{i \in Z_{\text{even}}} \left(\Phi(Z_i) - \Phi'(Z_i) + \log \frac{w(T(a_{i+1}))}{w(T(a_i))}\right) \geq  2 \abs{Z_{\text{even}}} \ge \abs{Z_P}.$$ 
The elements in $Z_P\setminus Z_{\text{even}}$ form two monotone sets and hence $\Phi(Z_P \setminus Z_{even})-\Phi'(Z_P\setminus Z_{even}) + 2\log(W/w(s)) \ge0$. 
This completes the proof.
\end{proof}

The following theorem combines all three tools we have introduced: subtree-disjoint, monotone, and zigzag sets. 

\begin{theorem}
	\label{thm:disj+mono+zigzag}Suppose that, for every access we can partition $P \setminus s$ into at most $k$ subtree-disjoint sets $D_1$ to $D_k$ and at most $\ell$ monotone sets $M_1$ to $M_\ell$. Then
\vspace{-0.1in} 

\[\sum_{i\le k}\abs{D_{i}}+|Z_{P}| \le 
	\Phi(P)-\Phi'(P)+O((k+\ell)(1 + \log\frac{W}{w(s)})).\]
\end{theorem}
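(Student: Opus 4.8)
The plan is to combine the three partial-potential bounds we have already established---Lemma~\ref{lem:disj} for subtree-disjoint sets, Lemma~\ref{lem:mono} for monotone sets, and Lemma~\ref{lem:zigzag} for the zigzag set---over the same fixed access to~$s$. First I would recall, via the Proposition, that the potential difference only depends on the search path and that it decomposes additively over disjoint sets of nodes, so $\Phi(P)-\Phi'(P) = \sum_{i\le k}\bigl(\Phi_T(D_i)-\Phi_{T'}(D_i)\bigr) + \sum_{j\le \ell}\bigl(\Phi_T(M_j)-\Phi_{T'}(M_j)\bigr)$. Note, however, that $Z_P$ is in general not disjoint from the $D_i$ and $M_j$ (it is just some subset of $P\setminus s$), so we cannot simply add $\Phi(Z_P)-\Phi'(Z_P)$ as a fourth summand; instead the zigzag bound must be invoked ``in parallel'' with the partition bound, as explained below.

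The key steps, in order, are: (1) Apply Lemma~\ref{lem:disj} to each $D_i$, giving $\abs{D_i} \le 2 + 8\log\frac{W}{w(T(s))} + \Phi_T(D_i) - \Phi_{T'}(D_i)$; since $w(T(s)) \ge w(s)$ we may replace $\log\frac{W}{w(T(s))}$ by $\log\frac{W}{w(s)}$ (it only makes the right side larger). Summing over $i\le k$ yields $\sum_{i\le k}\abs{D_i} \le 2k + 8k\log\frac{W}{w(s)} + \sum_{i\le k}\bigl(\Phi_T(D_i)-\Phi_{T'}(D_i)\bigr)$. (2) Apply Lemma~\ref{lem:mono} to each $M_j$, giving $\Phi_T(M_j)-\Phi_{T'}(M_j) \ge -\log\frac{W}{w(s)}$, hence $\sum_{j\le\ell}\bigl(\Phi_T(M_j)-\Phi_{T'}(M_j)\bigr) \ge -\ell\log\frac{W}{w(s)}$. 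Combining (1), (2) and the additive decomposition of $\Phi(P)-\Phi'(P)$ over the $D_i$ and $M_j$ gives
\[
\sum_{i\le k}\abs{D_i} \;\le\; \Phi(P)-\Phi'(P) + 2k + (8k+\ell)\log\frac{W}{w(s)},
\]
which is exactly Theorem~\ref{thm:disj+mono} restated for $P\setminus s$ (the root $s$ contributes $0$ to $\Phi(P)-\Phi'(P)$ anyway). (3) Separately, Lemma~\ref{lem:zigzag} gives $\abs{Z_P} \le \Phi(P)-\Phi'(P) + O(1+\log\frac{W}{w(s)})$, using $\Phi(Z_P)-\Phi'(Z_P) \le \Phi(P)-\Phi'(P)$---but wait, this last inequality is false in general, so here I would instead re-examine the proof of Lemma~\ref{lem:zigzag}: it bounds $\abs{Z_P}$ by $\Phi(Z_P)-\Phi'(Z_P)$ plus an error term, and $Z_P$ is a subset of $P\setminus s$ whose complement within $P\setminus s$ need not have nonnegative potential change. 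The cleanest fix is to prove the combined statement directly rather than as a black-box sum.

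Concretely, for the direct argument I would take the given partition of $P\setminus s$ into $D_1,\dots,D_k,M_1,\dots,M_\ell$ and, in parallel, the sets $Z_i$ and the two monotone sets $Z_P\setminus Z_{\mathrm{even}}$ (or $Z_P\setminus Z_{\mathrm{odd}}$, whichever is used) from the proof of Lemma~\ref{lem:zigzag}. For the $D_i$ terms use Lemma~\ref{lem:disj}; for the $M_j$ terms use Lemma~\ref{lem:mono}; this already pays for $\sum_i\abs{D_i}$ at the cost of $\Phi(P)-\Phi'(P)+O((k+\ell)(1+\log\frac{W}{w(s)}))$, exactly as in Theorem~\ref{thm:disj+mono}. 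It remains to pay for the extra $\abs{Z_P}$ term. Here I would observe that in the proof of Lemma~\ref{lem:zigzag} the bound on $\abs{Z_P}$ is derived entirely from the Claim ($2 \le \Phi(Z_i)-\Phi'(Z_i)+\log\frac{w(T(a_{i+1}))}{w(T(a_i))}$ for each side alternation $Z_i$), plus one more monotone-set bound. Summing the Claim over a maximum-cardinality parity class telescopes the log-ratios into $\log\frac{W}{w(s)}$ and charges $2\abs{Z_{\mathrm{even}}}\ge\abs{Z_P}$ against $\Phi(Z_P\cap Z_{\mathrm{even}})-\Phi'(Z_P\cap Z_{\mathrm{even}})$; since the $Z_i$ in this parity class are pairwise disjoint from one another but \emph{not} from the $D_i$, I need the $D_i$-contribution to be ``reused.'' The resolution is that we do not reuse it: $\Phi(P)-\Phi'(P)$ appears with coefficient $1$ on the right-hand side of the target inequality, and both $\sum_i\abs{D_i}$ and $\abs{Z_P}$ are each at most (a constant times) $\Phi(P)-\Phi'(P)$ plus error---but since the target has coefficient $1$, I instead bound $\abs{Z_P}$ by the potential change restricted to $Z_P$ and absorb the remaining path-nodes' potential change, which could be negative, into the monotone structure.

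The main obstacle, and the step I would spend the most care on, is precisely this bookkeeping of which nodes' potential change is being ``spent'' on which quantity: $\Phi(P)-\Phi'(P)$ may have either sign on various sub-collections, and the zigzag sets overlap the disjoint/monotone partition, so one cannot naively add Theorem~\ref{thm:disj+mono} and Lemma~\ref{lem:zigzag}. The right way is to prove the theorem from scratch mirroring the proofs of Lemmas~\ref{lem:disj}, \ref{lem:mono}, \ref{lem:zigzag}: express $\Phi(P)-\Phi'(P)$ as the full telescoping product over all of $P\setminus s$, group the factors first according to the $D_i/M_j$ partition to extract $\sum_i\abs{D_i}$, and observe that the side-alternation pairs contribute an \emph{additional} gain of $2$ each (the $(W_1+W_2)^2\ge 4W_1W_2$ step) beyond what the subtree-disjoint/monotone accounting already used, because that step exploits the \emph{product} $w(T(a_i))\cdot w(T(a_{i+1}))$ in the numerator of the potential ratio, a quantity that the disjoint/monotone bounds only used one factor of at a time. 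Making this ``extra slack'' rigorous---showing the zigzag gain is genuinely on top of, not competing with, the disjoint/monotone gain---is the crux; once it is set up, everything collapses to the same $(W_1+W_2)^2\ge 4W_1W_2$ and telescoping estimates already present, and the error term is $O((k+\ell)(1+\log\frac{W}{w(s)}))$ by the same bookkeeping as in Theorem~\ref{thm:disj+mono} together with the $O(1+\log\frac{W}{w(s)})$ from the single leftover monotone pair in Lemma~\ref{lem:zigzag}.
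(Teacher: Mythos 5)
You correctly notice the central obstacle---that $Z_P$ overlaps the $D_i$ and $M_j$, so the partial-potential bounds do not simply add---but you never actually resolve it, and you miss the device that makes the paper's proof go through cleanly. The paper decomposes the transformation $P \to A$ into a \emph{two-step process via an intermediate tree}: first apply rotate-to-root to get an intermediate tree with potential $\Phi''$, then transform the left and right subtrees of $s$ into their final shapes. Lemma~\ref{lem:zigzag} (which is proved \emph{specifically} for rotate-to-root, a fact your step (3) overlooks) gives $\abs{Z_P} \le \Phi(P) - \Phi''(P) + O(1 + \log\frac{W}{w(T(s))})$. Theorem~\ref{thm:disj+mono} applies to the second step---the nesting structure needed in the proof of Lemma~\ref{lem:disj} is preserved by rotate-to-root, and subtree-disjointness and monotonicity are defined relative to $T'$ regardless of the starting tree---yielding $\sum_i\abs{D_i} \le \Phi''(P) - \Phi'(P) + O((k+\ell)(1 + \log\frac{W}{w(T(s))}))$. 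Adding the two inequalities telescopes the $\Phi''$ terms away, and there is no overlap issue at all because each bound spends a \emph{different} interval of potential.

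Your alternative ``from scratch'' route is never carried out. The assertion that the $(W_1+W_2)^2 \ge 4W_1W_2$ gain is ``genuinely on top of, not competing with'' the disjoint/monotone gain is exactly what needs proof, and you explicitly defer it (``Making this rigorous\ldots is the crux''). Without the intermediate-tree telescoping, you would have to show that the telescoping-product accounting in Lemma~\ref{lem:disj} and the pairwise AM-GM accounting in the zigzag claim draw on disjoint portions of the ratio $\prod_a w(T(a))/w(T'(a))$; nothing in your sketch establishes this, and it is not true term-by-term, since both arguments use the same factors $w(T(a_i))$ for nodes $a_i$ that are simultaneously in $Z_P$ and in some $D_i$ or $M_j$. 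The two-step decomposition sidesteps the whole question. As written, your proof has a genuine gap at its crux.
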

\vspace{-0.1in} 
\begin{proof} We view the transformation as a two-step process, i.e., we first rotate $s$ to the root and then transform the left and right subtrees of $s$. Let $\Phi''$ be the potential of the intermediate tree. 
By Lemma \ref{lem:zigzag}, $\abs{Z_P} \le \Phi(P)-\Phi''(P)+O(1 + \log\frac{W}{w(T(s))})$.
By Theorem \ref{thm:disj+mono}, $\sum_{i\le k}\abs{D_{i}} \le \Phi''(P)-\Phi'(P)+O((k + \ell)(1 + \log\frac{W}{w(T(s))}))$.
\end{proof}

We next derive an easy to apply corollary from this theorem. For the statement, we need the following proposition that follows directly from the definition of monotone set.

\begin{proposition}
\label{lem:tree_monotone} Let $S$ be a subset of the search path consisting only of elements larger than $s$. 
Then $S$ can be decomposed into $\ell$ monotone sets if and only if the elements of $S$ have only $\ell$ different right-depths in the after-tree.
\end{proposition}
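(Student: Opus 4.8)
The plan is to prove both directions directly from the definition of a monotone set. Recall that, on the side of elements larger than $s$, a monotone subset of the search path is by definition exactly a set of search-path elements, all larger than $s$, that share one common right-depth in the after-tree. Thus a ``decomposition into monotone sets'' of such an $S$ is the same thing as a partition of $S$ into classes of equal right-depth, and the proposition becomes a reformulation of this observation.

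For the ``only if'' direction, I would suppose $S = M_1 \,\dot{\cup}\, \cdots \,\dot{\cup}\, M_\ell$ with each $M_j$ monotone. By definition all elements of $M_j$ have a single common right-depth $d_j$ in the after-tree, so every right-depth occurring among elements of $S$ lies in $\{d_1,\ldots,d_\ell\}$; hence $S$ realizes at most $\ell$ distinct right-depths. For the ``if'' direction, suppose the elements of $S$ realize exactly the distinct right-depths $d_1,\ldots,d_\ell$, and set $M_j = \{\, a \in S : a \text{ has right-depth } d_j \text{ in the after-tree}\,\}$. These sets partition $S$; each $M_j$ consists of search-path elements larger than $s$ with a common right-depth and is therefore monotone; and there are $\ell$ of them. (If ``$\ell$'' is to be read merely as an upper bound on the number of parts, one pads with empty sets, which are vacuously monotone.)

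There is no genuine obstacle here: the proposition is essentially the definition restated, and the grouping-by-right-depth map is the witness in both directions; the only bookkeeping point is whether ``$\ell$'' means exactly or at most, which the padding remark settles. The symmetric counterpart — for $S$ consisting of elements smaller than $s$, with ``right-depth'' replaced by ``left-depth'' — follows verbatim.
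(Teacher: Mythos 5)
Your proof is correct and matches the paper's intent exactly: the paper itself declares that this proposition ``follows directly from the definition of monotone set,'' and your grouping-by-right-depth argument is precisely that observation spelled out. The side remark about padding with empty sets to reconcile ``exactly $\ell$'' versus ``at most $\ell$'' is sensible bookkeeping and does not change the substance.
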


\begin{theorem}[Restatement of Theorem~\ref{main theorem}]
	\label{thm:suff tree}
	Suppose the BST
	algorithm $\mathcal{A}$ rearranges a search path $P$ that contains $z$ side alternations,
	into a tree $A$ such that (i) $s$, the element accessed, is the root of $A$, (ii) the number
	of leaves of $A$ is $\Omega(|P|-z)$, (iii) for every element $x$ larger (smaller)
	than $s$, the right-depth (left-depth) of $x$ in $A$ is bounded by a constant.
	Then $\mathcal{A}$ satisfies the access lemma.
	\end{theorem}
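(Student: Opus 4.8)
The plan is to reduce Theorem~\ref{thm:suff tree} to Theorem~\ref{thm:disj+mono+zigzag} by exhibiting, for every access, a decomposition of the after-path nodes $P \setminus s$ into a constant number of subtree-disjoint sets $D_1,\dots,D_k$ and a constant number of monotone sets $M_1,\dots,M_\ell$, such that the subtree-disjoint sets together cover a $\Theta(|P|-z)$ fraction of the path. Granting such a decomposition, Theorem~\ref{thm:disj+mono+zigzag} immediately gives
\[ \Omega(|P|-z) + |Z_P| \le \sum_{i\le k}|D_i| + |Z_P| \le \Phi(P)-\Phi'(P) + O\!\left(1 + \log\tfrac{W}{w(s)}\right), \]
and since $|Z_P| = \Theta(z)$ (the number of side alternations is exactly the number of nonempty $Z_i$, and each contributes at most two nodes, so $z \le |Z_P| \le 2z$; more to the point $|Z_P|\ge z$), the left side is $\Omega(|P|-z) + \Omega(z) = \Omega(|P|)$, which rearranges to the access lemma inequality $|P| \le \Phi(P)-\Phi'(P) + O(1+\log(W/w(s)))$ after absorbing the multiplicative constant.

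So the real work is the decomposition. First I would handle the nodes larger than $s$ and those smaller than $s$ separately (condition (iii) is stated per side), and describe the construction for the larger side; the smaller side is symmetric. By condition (iii), every element $x > s$ on the path has right-depth in $A$ bounded by some constant $d$. By Proposition~\ref{lem:tree_monotone}, the elements of $P_{>s}$ therefore decompose into at most $d = O(1)$ monotone sets, one per right-depth value. This already gives the monotone part with $\ell = O(1)$. For the subtree-disjoint part, the natural candidates are the \emph{leaves} of $A$: the set of leaves of the after-tree is trivially subtree-disjoint (distinct leaves have disjoint — indeed singleton — subtrees in $T'$). By hypothesis (ii) there are $\Omega(|P|-z)$ of them, and all leaves of $A$ lie on $P$ (the after-tree is built on the node set of the search path, with off-path subtrees hung back unchanged, so a leaf of $A$ cannot be an off-path node). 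Split the leaves by side of $s$ into $D_1 \cup D_2$ — each part is still subtree-disjoint — and we get $k \le 2$ and $\sum_i |D_i| = \Omega(|P|-z)$, exactly what is needed.

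One subtlety to pin down is whether every element of $P\setminus s$ must be placed into one of these sets, or whether it suffices to cover a constant fraction. Reading Theorem~\ref{thm:disj+mono+zigzag}, the hypothesis is a \emph{partition} of $P\setminus s$; so I would take the $D_i$ to be the two leaf-sets, and take the $M_j$ to be, for each side and each right-depth (resp.\ left-depth) value, the set of \emph{non-leaf} path nodes at that depth on that side — this is a subset of a monotone set, hence monotone, and together the $D_i$ and $M_j$ partition all of $P\setminus s$ into $O(1)$ sets. Then $\sum_{i\le k}|D_i|$ is precisely the leaf count $\Omega(|P|-z)$, and we are done as above.

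I expect the only real obstacle to be the bookkeeping around $z$ versus $|Z_P|$ and making sure no node is double-counted or dropped between the zigzag bound and the disjoint/monotone bound — in particular, the zigzag set $Z_P$ consists of path nodes too, so one must check that the final inequality genuinely combines into $\Omega(|P|)$ rather than, say, $\Omega(|P|-z) + \Omega(z)$ with a hidden overlap that weakens the constant. This is handled cleanly because Theorem~\ref{thm:disj+mono+zigzag} already bundles $\sum|D_i| + |Z_P|$ on its left-hand side over the \emph{same} potential difference $\Phi(P)-\Phi'(P)$, so no overlap penalty arises; the constants multiply out harmlessly. Everything else — verifying leaves are on the path, invoking Proposition~\ref{lem:tree_monotone}, invoking Theorem~\ref{thm:disj+mono+zigzag} — is routine.
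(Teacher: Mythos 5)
Your proof is correct and follows the same approach as the paper: decompose $P\setminus s$ into the leaves of $A$ (a subtree-disjoint set) plus $O(1)$ monotone sets given by right-/left-depth (via Proposition~\ref{lem:tree_monotone}), apply Theorem~\ref{thm:disj+mono+zigzag}, and combine $\Omega(|P|-z)$ leaves with $|Z_P|\ge z$ to conclude $\Omega(|P|)$ on the left-hand side. The paper keeps all leaves in a single subtree-disjoint set ($k=1$) rather than splitting by side, but this is purely cosmetic.
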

\begin{proof} Let $B$ be the set of leaves of $T$ and let $b = \abs{B}$. 
By assumption (ii), there is a positive
constant $c$ such that $b \ge (\abs{T} - z)/c$. Then $\abs{T} \le c b + z$. 
We decompose $P \setminus s$ into $B$ and $\ell$ monotone sets. By assumption (iii), $\ell = O(1)$. An application 
of Theorem~\ref{thm:disj+mono+zigzag} with $k = 1$ and $\ell = O(1)$ completes the proof. 
\end{proof}

\begin{wrapfigure}[11]{l}[0.3\textwidth]{0.7\textwidth}
	\begin{center}  
		\includegraphics[width=0.65\textwidth, trim=0 1cm 0 3.5cm]{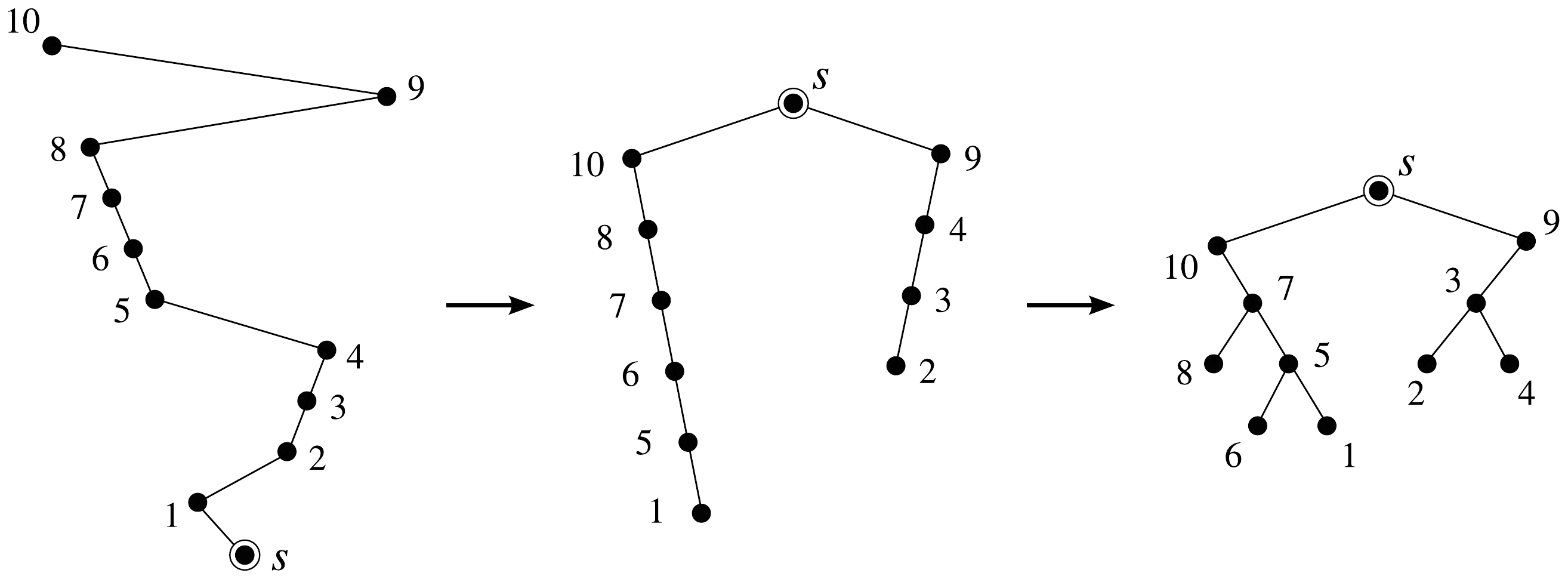}
	\end{center}
\caption{\label{fig:splay} A global view of splay trees. The transformation from the left to the middle illustrates rotate-to-root. The transformation from the left to the right illustrates splay trees.}
\end{wrapfigure}

\paragraph{Splay:} Splay extends rotate-to-root: Let $s = v_0$, $v_1$, \ldots $v_k$ be the reversed search path. We view splaying as a two step process, see Figure~\ref{fig:splay}. We first make $s$ the root and split the search path into two paths, the path of elements smaller than $s$ and the path of elements larger than $s$. If $v_{2i+1}$ and $v_{2i + 2}$ are on the same side of $s$, we rotate them, i.e., we remove $v_{2i+2}$ from the path and make it a child of $v_{2i+1}$.

\begin{proposition}
	The above description of splay is equivalent to the Sleator-Tarjan description.\end{proposition}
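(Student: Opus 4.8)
I would prove the two descriptions equal by induction on the length $k$ of the search path, peeling off the two deepest non-accessed nodes $v_1,v_2$ at a time. The point is that the Sleator--Tarjan splay can be read as a bottom-up recursion: it first performs a single splay step on $(s,v_1,v_2)$ --- a zig-zag if $v_1,v_2$ lie on opposite sides of $s$, a zig-zig if they lie on the same side (just a zig if $k=1$) --- after which $s$ sits where $v_2$ used to be, $v_1,v_2$ and parts of their subtrees hang below $s$, and everything else ($v_3,\dots,v_k$ and all off-path subtrees) is untouched; it then recurses with the shorter search path $(s,v_3,\dots,v_k)$. So by the inductive hypothesis it suffices to show that the two-step description applied to $(s,v_1,\dots,v_k)$ is the same as: ST's bottom step on $(s,v_1,v_2)$, followed by the two-step description on $(s,v_3,\dots,v_k)$.

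For this I would first record two ``localization'' facts about the two-step description. First, rotate-to-root is by definition the sequence of single rotations lifting $s$ past $v_1$, then past $v_2$, then past $v_3,\dots,v_k$; writing $\rho$ for the first two, $\rho$ modifies only $s,v_1,v_2$ and their subtrees, and the subsequent rotations carry the resulting ``$(v_1,v_2)$-blob'' (a subtree of $s$) around only as an opaque unit. Second, by the ancestor-preserving property of rotate-to-root, after step 1 the elements larger than $s$ lie on a single path descending from $s$'s right child, in the order they appeared on the original search path --- i.e.\ by decreasing index --- and symmetrically for those smaller than $s$; hence for each $i$, if $v_{2i+1}$ and $v_{2i+2}$ are on the same side then they are \emph{consecutive} on that path with $v_{2i+2}$ the parent, so the step-2 operation on the pair is precisely the rotation of the edge $v_{2i+1}v_{2i+2}$ that lifts $v_{2i+1}$. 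Distinct pairs yield vertex-disjoint edges, so the step-2 rotations commute; in particular the one for $(v_1,v_2)$ can be moved to the front, immediately after $\rho$ (it only restructures the blob, which the later rotations transport unchanged). Therefore the two-step description on $(s,v_1,\dots,v_k)$ equals: $\rho$ on $(s,v_1,v_2)$; then, if $v_1,v_2$ are on the same side, the step-2 rotation of edge $v_1v_2$; then the two-step description on $(s,v_3,\dots,v_k)$ --- which is the recursion we want, provided this prefix coincides with ST's bottom step.

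Checking that coincidence is a short computation. If $v_1,v_2$ lie on opposite sides of $s$, then $\rho$, being ``lift $s$ twice,'' is by definition the zig-zag step, and there is no step-2 rotation; done. If $v_1,v_2$ lie on the same side, say $s<v_1<v_2$ (the ordering is forced by the search-path structure), then after $\rho$ the right child of $s$ is $v_2$ with left child $v_1$, and the step-2 rotation of edge $v_1v_2$ makes $v_1$ the right child of $s$ and $v_2$ the right child of $v_1$; following the three off-path subtrees --- those lying in $(s,v_1)$, in $(v_1,v_2)$, and above $v_2$ --- shows the result is exactly the tree ST's zig-zig step produces, and $v_2<v_1<s$ is symmetric. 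The base cases $k\le 1$ are immediate, completing the induction.

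\textbf{Main obstacle.} The delicate part is the localization bookkeeping of the second paragraph: arguing rigorously that rotate-to-root, unfolded into single rotations, commutes past the step-2 rotations so that exactly the bottom pair is isolated, and that $v_1,v_2$ genuinely form an edge both after $\rho$ and after the full rotate-to-root. This hinges on the ancestor-preserving property, which forces each side-path to be index-sorted and makes the step-2 rotations act on pairwise disjoint edges; granting that, the remaining case analysis is just tracking a constant number of subtrees.
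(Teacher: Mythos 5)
The paper states this proposition without proof, evidently treating it as a routine verification to be read off Figure~\ref{fig:splay}; so there is no proof of the paper's to compare against, and what you wrote fills a genuine gap rather than duplicating or diverging from the text. Your inductive argument is correct and is the natural way to make the equivalence rigorous. The essential content is exactly where you flag it: unfold rotate-to-root into the single rotations $\rho_1,\dots,\rho_k$ lifting $s$ past $v_1,\dots,v_k$, and show that the step-2 rotation $\sigma_0$ on the edge $\{v_1,v_2\}$ --- which the paper's description performs only after \emph{all} of rotate-to-root --- can be commuted leftward past $\rho_3,\dots,\rho_k$ and past the other $\sigma_j$'s. This works because after $\rho_1\rho_2$ the subtree rooted at $v_2$ is a contiguous subtree hanging off $s$, each later $\rho_i$ only reattaches that subtree as an opaque unit without touching its interior, so the edge $\{v_1,v_2\}$ persists; and the $\sigma_j$'s for $j\ge 1$ act on vertex-disjoint edges of the two side paths, so they commute with $\sigma_0$ too. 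With $\sigma_0$ moved adjacent to $\rho_1\rho_2$, your case check that $\rho_1\rho_2$ is exactly the Sleator--Tarjan zig-zag (opposite sides) and $\rho_1\rho_2\sigma_0$ is exactly the zig-zig (same side) is a direct and correct verification, and the remaining operations are the two-step description applied to $(s,v_3,\dots,v_k)$, so induction closes with the bare zig handling odd $k$. One small precision point for a polished write-up: what the later $\rho_i$'s transport unchanged is the subtree rooted at $v_2$ (or at $v_1$, once $\sigma_0$ has fired), \emph{not} the set $\{s,v_1,v_2\}$ as a unit --- $s$ itself keeps climbing away from $v_1$ and $v_2$ --- but this is precisely the invariant your commutation argument needs, so the substance is sound.
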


\begin{theorem}
	Splay satisfies the access lemma.\end{theorem}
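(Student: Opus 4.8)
The plan is to verify that splay, as described via the two-step process, satisfies the three hypotheses of Theorem~\ref{thm:suff tree} (the restatement of Theorem~\ref{main theorem}), so that the access lemma follows immediately. Condition (i) is automatic: the first step makes $s$ the root, and the second step only rearranges within the two subtrees hanging off $s$, so $s$ remains the root of the after-tree $A$. It therefore remains to check conditions (ii) and (iii), and for this I would reason separately about the subtree of elements smaller than $s$ and the subtree of elements larger than $s$; by symmetry it suffices to treat, say, the larger side.

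For condition (iii), I would track what the pairing step does to right-depth. After rotate-to-root, the elements larger than $s$ form a path $v_{i_1}, v_{i_2}, \ldots$ (in search-path order), and on a path each node has right-depth at most $1$ relative to its parent in the obvious sense; more precisely, in the rotate-to-root intermediate tree each such element is reached from $s$ by going right once and then left repeatedly, so every element larger than $s$ has right-depth exactly $1$. The pairing step takes consecutive pairs $v_{2i+1}, v_{2i+2}$ on the same side and makes $v_{2i+2}$ a child of $v_{2i+1}$; one checks from the ordering that $v_{2i+2}$ becomes the \emph{left} child of $v_{2i+1}$ when they are on the larger side (since $v_{2i+2}$ is the deeper, hence smaller-indexed-in-the-split-path, element — here I would appeal to Figure~\ref{fig:splay} and the explicit coordinates to pin down the direction). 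Hence the pairing adds at most one extra right-turn, and the right-depth of every element larger than $s$ in $A$ is at most $2 = O(1)$. The symmetric statement gives left-depth $O(1)$ on the smaller side, establishing (iii).

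For condition (ii), I would count leaves of $A$. Write $P$ with $|P| = k+1$ nodes and let $z$ be the number of side alternations. In the split step the before-path is partitioned into the $z$ side-alternation edges plus $k - z$ "same-side" consecutive pairs of edges (roughly speaking); the pairing step fires on each same-side consecutive pair $v_{2i+1},v_{2i+2}$, and when it fires it attaches $v_{2i+2}$ as a child in the direction \emph{opposite} to the direction in which the path was descending, so $v_{2i+2}$ becomes a leaf (it had no children on the path, and the reattached off-path subtrees go into the now-vacant slots but one slot stays empty). More carefully, after rotate-to-root each $v_j$ on a given side has exactly one path-child; a pairing operation removes $v_{2i+2}$ from the path and hangs it off $v_{2i+1}$ as a child, so $v_{2i+1}$ keeps its path-child and $v_{2i+2}$ loses its path-child, becoming a leaf of the path-tree. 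Thus the number of path-leaves of $A$ is at least the number of pairing operations, which is $\Omega(k - z)$ since on each side at least a constant fraction (in fact all but $O(1)$) of the same-side consecutive pairs get rotated. Therefore $A$ has $\Omega(|P| - z)$ leaves, giving (ii).

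With (i)--(iii) verified, Theorem~\ref{thm:suff tree} applies directly and splay satisfies the access lemma. The main obstacle I anticipate is not any hard inequality — the heavy lifting is already done in Theorem~\ref{thm:disj+mono+zigzag} and Theorem~\ref{thm:suff tree} — but rather the bookkeeping in the leaf count: one has to be careful that the pairing operations, combined with the reattachment of off-path subtrees, genuinely leave $\Omega(|P|-z)$ nodes with an empty child slot, and that the parity bucketing in the "$v_{2i+1},v_{2i+2}$" description doesn't secretly lose a constant fraction of the potential pairings. I would handle this by arguing per side: on the larger side with, say, $m$ nodes forming a path after rotate-to-root, the pairing step (which there acts on consecutive pairs of this $m$-node path) rotates $\lfloor m/2 \rfloor$ of them into leaves, and since $\sum_{\text{sides}} m = |P| - 1 - z + O(1)$ up to the alternation edges, summing gives the desired $\Omega(|P|-z)$ bound; the direction-of-attachment claim needed for (iii) and for "becomes a leaf" in (ii) is the same computation done once from the interval coordinates.
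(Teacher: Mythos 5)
Your overall plan is exactly the paper's: verify the three hypotheses of Theorem~\ref{thm:suff tree} from the two-step (rotate-to-root, then pair) description of splay, with right/left-depth bounded by $2$ giving condition (iii) and the pairing count giving condition (ii). That much is fine, and the paper's own proof is just a terser version of your earlier paragraphs.

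However, the per-side argument you give to ``handle the parity bucketing'' is incorrect, and the concern you raise there is genuine. The pairing step fires on pairs $(v_{2i+1},v_{2i+2})$ indexed by position on the \emph{original} reversed search path, not by position on the $m$-node split path; so the parenthetical ``the pairing step... acts on consecutive pairs of this $m$-node path'' is false, and so is ``$\sum_{\text{sides}} m = |P| - 1 - z$'' (every node other than $s$ lies on exactly one side, so $\sum m = |P|-1$ regardless of $z$). Worse, the claim ``on each side at least a constant fraction (in fact all but $O(1)$) of the same-side consecutive pairs get rotated'' is simply wrong: for a before-path with side pattern $R,L,L,R,R,L,L,R,\ldots$ every $(v_{2i+1},v_{2i+2})$ straddles $s$, so zero pairing operations fire even though a constant fraction of all edges are same-side and $|P|-z=\Theta(|P|)$. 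The correct bound is the one you sketch in the \emph{previous} paragraph and the one the paper uses: there are roughly $|P|/2$ index pairs, at most $z$ of them straddle $s$, and each of the remaining ones produces a leaf, giving $\#\text{leaves} \ge |P|/2 - 1 - z$. (One should observe this is formally weaker than the $\Omega(|P|-z)$ stated in Theorem~\ref{thm:suff tree}(ii) when $z$ is close to $|P|/2$; it still suffices because the proof of Theorem~\ref{thm:suff tree} only needs $|B|+|Z_P|=\Omega(|P|)$ via Theorem~\ref{thm:disj+mono+zigzag}, and $|Z_P|\ge z$ covers the slack.) Separately, a minor slip: when $v_{2i+1},v_{2i+2}>s$, after the split both lie on a left-leaning path with $v_{2i+2}$ (the larger value) as the ancestor, so rotating it out makes it the \emph{right} child of $v_{2i+1}$; your later ``opposite to the direction of descent'' phrasing is the right one, your earlier ``left child'' claim is not (it would collide with the next path node). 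This does not affect either conclusion, since right-depth $\le 2$ and ``$v_{2i+2}$ becomes a leaf of $A$'' hold in either case.
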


\begin{proof} There are $\abs{P}/2 - 1$ odd-even pairs. For each pair, if there is no side change, then splay creates a new leaf in the after-tree.
Thus 
\[   \text{\# of leaves} \ge   \abs{P}/2  - 1 - \text{\# of side changes}.\]  
Since right-depth (left-depth) of elements in the after-tree of splay is at most 2, an application of Theorem~\ref{thm:suff tree} finishes the proof.
\end{proof}

\section{New Heuristics: Depth reduction}
\label{sec:heuristics}\label{sec:depth-halving}

Already Sleator and Tarjan~\cite{ST85} formulated the belief that \emph{depth-halving} is the property that makes splaying efficient, i.e.\ the fact that every element on the access path reduces its distance to the root by a factor of approximately two. Later authors~\cite{Subramanian96, PathBalance, GeorgakopoulosM04} raised the question, whether a suitable \emph{global} depth-reduction property is sufficient to guarantee the access lemma. Based on Theorem~\ref{thm:suff tree}, we show that a strict form of depth-halving suffices to guarantee the access lemma. 

Let $x$ and $y$ be two arbitrary nodes on the search path. If $y$ is an ancestor of $x$ in the search path, but not in the after-tree, then we say that $x$ has \emph{lost} the ancestor $y$, and $y$ has lost the descendant $x$. Similarly we define \emph{gaining} an ancestor or a descendant. We stress that only nodes on the search path (resp.\ the after-tree) are counted as descendants, and not the nodes of the pendent trees.
Let $d(x)$ denote the depth (number of ancestors) of $x$ in the search path.
We give a sufficient condition for a good heuristic, stated below. The proof is deferred to Appendix~\ref{app:depth}.

\begin{theorem}
\label{prop:depth}
Let $\mathcal{A}$ be a minimally self-adjusting BST algorithm that satisfies the following conditions: (i) 
Every node $x$ on the search path loses at least $(\frac{1}{2} + \epsilon) \cdot d(x) - c$ ancestors, for fixed constants $\epsilon>0, c>0$, and 
(ii) every node on the search path, except the accessed element, gains at most $d$ new descendants, for a fixed constant $d>0$.
Then $\mathcal{A}$ satisfies the access lemma.
\end{theorem}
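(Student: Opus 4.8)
\textbf{Proof plan for Theorem~\ref{prop:depth}.}
The plan is to derive Theorem~\ref{prop:depth} from Theorem~\ref{thm:suff tree} by showing that conditions (i) and (ii) on ancestors/descendants translate into the combinatorial conditions (number of leaves $\Omega(|P|-z)$, and constant right-/left-depth) required there. First I would set up notation: fix an access to $s$, let $P$ be the search path, $A$ the after-tree, and for a node $x$ on the path write $d(x)$ for its depth in $P$ and $d_A(x)$ for its depth in $A$. Since $x$ loses all of its $d(x)$ path-ancestors that are not ancestors in $A$, and gains at most $d$ new descendants, a counting argument relates $d_A(x)$ to $d(x)$: the ancestors of $x$ in $A$ are exactly the path-ancestors it kept (at most $(\tfrac12-\epsilon)d(x)+c$ of them) plus path-\emph{descendants} it gained as ancestors; but each such gained ancestor $y$ has gained $x$ as a descendant, and by (ii) no node gains more than $d$ descendants, which lets me bound the total. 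The cleaner route for the depth bound is to observe that if $x$ has right-depth $r$ in $A$, the $r$ right-ancestors of $x$ in $A$ are all larger than $s$ (when $x>s$) and $x$ is a descendant of each in $A$; I would argue each of them either was already an ancestor of $x$ in $P$ or newly gained $x$ as a descendant, and use (i)+(ii) to conclude $r=O(1)$, giving condition (iii) of Theorem~\ref{thm:suff tree}.

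The heart of the argument is condition (ii) of Theorem~\ref{thm:suff tree}: the after-tree $A$ has $\Omega(|P|-z)$ leaves. Here $z$ is the number of side alternations on $P$. The idea is a charging/potential argument on the path: walk up $P$ and consider each node $x$; by hypothesis (i), $x$ loses at least $(\tfrac12+\epsilon)d(x)-c$ ancestors, so a constant fraction of path-ancestors of $x$ disappear. I would count ancestor--descendant pairs $(y,x)$ with $y$ an ancestor of $x$ in $P$: there are $\binom{|P|}{2}$-ish such pairs along the path (exactly $\sum_x d(x)$), and at least $\sum_x((\tfrac12+\epsilon)d(x)-c)$ of them are lost. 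On the other hand, each node of $A$ that is \emph{not} a leaf has at most two children in $A$, so the number of ancestor--descendant pairs \emph{kept or created} in $A$ is at most (depth of $A$) times $|P|$; combined with the constant-depth bound just established, the number of surviving pairs is $O(|P|)$. A path-ancestor pair that survives either stays or gets replaced; bounding the created pairs by (ii) (each node gains $\le d$ descendants, so $\le d|P|$ created pairs) and the surviving-original pairs by the $O(|P|)$ bound, I get that $\sum_x d(x) \le O(|P|) + (\text{lost pairs})$, and the lost pairs are themselves $\le \sum_x d(x)$; this is not yet a leaf bound, so the real mechanism must instead directly exhibit leaves.

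The more promising concrete mechanism: a node $x$ of $P$ is a leaf of $A$ iff it gained no descendant and kept no path-descendant. I would show that the number of \emph{internal} nodes of $A$ is $O(|P|-z)$ plus... — more precisely, each internal node $u$ of $A$ has children; each child $v$ of $u$ in $A$ that is a path node either was a path-descendant of $u$ (a \emph{kept} child) or a \emph{gained} descendant. By (ii) the number of gained (parent,child) relations is $\le d|P|$, but that is too weak. Instead I would use (i) quantitatively: summing the losses, $\sum_x(\text{ancestors lost by }x) \ge (\tfrac12+\epsilon)\sum_x d(x) - c|P|$, and separately $\sum_x d_A(x) = \sum_x d(x) - \sum_x(\text{lost}) + \sum_x(\text{gained ancestors})$. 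Since each gained ancestor is a gained descendant counted from the other side, $\sum_x(\text{gained ancestors}) \le d|P|$, so $\sum_x d_A(x) \le (\tfrac12-\epsilon)\sum_x d(x) + (c+d)|P|$. Combining with the trivial $\sum_x d(x)\ge \sum_x d_A(x)$ forces $\sum_x d(x) = O(|P|)$, i.e. the path has average depth $O(1)$; but a path where most nodes have depth $O(1)$ relative to a fixed root must... — in a path (a degenerate tree) $d(x)$ is just the position of $x$, so $\sum_x d(x) = \Theta(|P|^2)$, contradiction unless $|P|=O(1)$. So this line shows hypothesis (i) cannot hold verbatim for all $x$; hence the intended reading is that the hypotheses constrain the \emph{after-tree depths}, and I would instead argue: from (i), every $x$ satisfies $d_A(x) \le (\tfrac12-\epsilon)d(x)+c+(\text{gained})$, so in particular nodes deep in $P$ become much shallower in $A$; a tree on $m=|P|$ nodes in which node ranked $j$ (by path-depth) has $A$-depth $\le (\tfrac12-\epsilon)j+O(1)$ must have $\Omega(m)$ leaves, because at each $A$-depth level the number of nodes is bounded and telescoping the depth-reduction forces many branchings. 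I would make this precise by a weight/potential argument identical in spirit to the proof of Lemma~\ref{lem:pblemma}, decomposing $P\setminus s$ into the leaf set of $A$ and $O(1)$ monotone sets (monotone because of the constant right-/left-depth from (iii)), then invoking Theorem~\ref{thm:disj+mono+zigzag}; the contribution of the non-leaf nodes is absorbed into the $z$ side-alternations plus the monotone-set slack.

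\textbf{Main obstacle.} The delicate point is proving the leaf bound, i.e.\ that conditions (i)--(ii) on lost/gained \emph{path}-ancestors actually force $\Omega(|P|-z)$ leaves in the after-tree, and doing so with the correct handling of side alternations: a long monotone run on the path can be depth-halved into a path-like structure of $A$ with few leaves, and such runs are exactly absorbed by the $z$ and by the $O(1)$ monotone sets, so the bookkeeping must carefully separate ``same-side'' structure (which must branch, creating leaves) from ``side-alternating'' structure (which need not). I expect to spend most of the effort showing that within each maximal same-side block of the path, strict depth-halving (the $\tfrac12+\epsilon$, not just $\tfrac12$) forces the corresponding part of $A$ to have a number of leaves proportional to the block length, via a clean telescoping/Kraft-inequality-style estimate, and then summing over blocks with the leftover alternation nodes charged to $z$.
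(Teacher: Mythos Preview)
Your overall strategy---reduce to Theorem~\ref{thm:suff tree} by verifying bounded left/right-depth and the leaf bound---is exactly the paper's. But both of your concrete arguments miss the key structural observation that drives the paper's proof, and your leaf-count attempts do not work.

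\textbf{The missing observation.} On the before-path $P$, among nodes on the \emph{same side} of $s$ the key order determines the ancestor order: if $a<b<s$ then $a$ is an ancestor of $b$ in $P$ (symmetrically for nodes $>s$). This elementary fact is what makes condition~(ii) of the hypothesis bite. Your sketch for the depth bound tries to count, for $x>s$, the right-ancestors $u$ of $x$ in $A$; each such $u$ satisfies $s<u<x$, hence was a \emph{descendant} of $x$ in $P$ and has now gained $x$ as a descendant. But condition~(ii) bounds the number of descendants a single $u$ gains, not how many different $u$'s can gain $x$; so your argument does not give $r=O(1)$. The paper instead looks at the right spine $L_1<L_2<\cdots<L_t$ of the left subtree of $s$ in $A$ and the left-hanging subtrees $T_i$. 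Every node in $T_i$ is $<L_i<s$, hence by the observation above was an \emph{ancestor} of $L_i$ in $P$; thus $L_i$ gained all of $T_i$ as descendants, so $|T_i|\le d$. This immediately bounds the left-depth in the left subtree by $O(d)$, giving condition~(iii).

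\textbf{The leaf bound.} Your global summations over all $x$ do not lead to a leaf count, and your intermediate conclusion ``hypothesis~(i) cannot hold verbatim'' is based on the false step $\sum_x d(x)\ge\sum_x d_A(x)$ (the after-tree can be far shallower than the path). The paper's argument is entirely local: it picks a \emph{single} node, the nearest left- or right-ancestor $L$ (resp.\ $R$) of $s$ in $P$, and shows by a short case analysis on whether $d_l(s)>d_r(s)$ that one of them loses at least $\epsilon\, d(s)/2-(c+1)$ \emph{same-side} ancestors. Now reuse the spine/subtree picture: the same-side ancestors of $L$ that $L$ has \emph{not} lost form a right-going chain in $A$, and the lost ones must sit inside the left-hanging subtrees $T_i$ along this chain. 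Since each $|T_i|\le d$, there are at least $(\epsilon\, d(s)/2-(c+1))/d=\Omega(|P|)$ such subtrees, hence $\Omega(|P|)$ leaves. Note that this actually gives $\Omega(|P|)$ leaves outright, so the $z$ side-alternations play no role here---the careful same-side/alternation bookkeeping you anticipated is unnecessary.

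In short: the idea you are missing is to exploit the rigid ancestor order among same-side nodes of the before-path, apply it once to bound the hanging subtrees $T_i$ (giving the depth bound), and then apply hypothesis~(i) to a single well-chosen node near $s$ to force many such subtrees (giving the leaves).
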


We remark that in general, splay trees do not satisfy condition (i) of Theorem~\ref{prop:depth}. One may ask how tight are the conditions of Theorem~\ref{prop:depth}. If we relax the constant in condition (i) from $(\frac{1}{2} + \epsilon)$ to $\frac{1}{2}$, the conditions of Theorem~\ref{thm:suff tree} are no longer implied. Figure~\ref{fig:depth1} in Appendix~\ref{app:depth} shows a rearrangement in which every node loses a $\frac{1}{2}$-fraction of its ancestors, gains at most two ancestors or descendants, yet both the number of side alternations and the number of leaves created are $O(\sqrt{|P|})$, where $P$ is the before-path. If we further relax the ratio to $(\frac{1}{2} - \epsilon)$, we can construct an example where the number of alternations and the number of leaves created are only $O(\log{|P|}/\epsilon)$.

Allowing more gained descendants and limiting instead the number of gained ancestors is also beyond the strength of Theorem~\ref{thm:suff tree}. In the example of Figure~\ref{fig:depth2} in Appendix~\ref{app:depth} every node loses an $(1-o(1))$-fraction of ancestors, yet the number of leaves created is only $O(\sqrt{|P|})$ (there are no alternations in the before-path).

Finally, we observe that depth-reduction alone is likely not sufficient: one can restructure the access path in such a way that every node reduces its depth by a constant factor, yet the resulting after-tree has an anti-monotone path of linear size. Figure~\ref{fig:depth3} in Appendix~\ref{app:depth} shows such an example for depth-halving. Based on Theorem~\ref{thm: necessity}, this means that if such a restructuring were to satisfy the access lemma in its full generality, the SOL potential would not be able to show it.

\section{Necessary Conditions}\label{sec:necessary conditions}

\subsection{Necessity of $O(1)$ monotone sets} 
\label{subsec:nec1}
In this section we show that condition (ii) of Theorem~\ref{main theorem} is necessary for any minimally self-adjusting BST algorithm that satisfies the access lemma via the SOL potential function. 

\begin{theorem}
\label{thm: necessity}
	Consider the transformations from before-path $P$ to after-tree $A$ by algorithm $\aset$. 
If $A\setminus s$
	cannot be decomposed into constantly many monotone sets, then $\aset$ does
	not satisfy the access lemma with the SOL potential.\end{theorem}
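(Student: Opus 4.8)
The goal is to show that if the after-trees produced by $\aset$ can have arbitrarily many distinct right-depths (among elements larger than $s$, say), then no choice of weights makes the SOL access inequality hold. The natural strategy is a \emph{weight-charging} argument: pick an adversarial before-path $P$ and an accessed element $s$ such that the after-tree $A$ has a long ``right-spine'' $a_1, a_2, \ldots, a_m$ with $a_1$ being a right-child of $s$, $a_2$ a right-child of $a_1$, and so on, all of these larger than $s$ and with strictly increasing right-depth. The idea is that after the rearrangement each $a_j$ dominates (as a subtree) all of $a_{j+1}, \ldots, a_m$, so the subtree weights $w(T'(a_j))$ form a chain; meanwhile, before the access these same nodes sit on a search path in some order, and we are free to design $P$ so that the before-subtrees $w(T(a_j))$ are as small as we like relative to the after-subtrees. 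Concretely, I would arrange the before-path so that each $a_j$ is, in $T$, a near-leaf (its pendant subtrees carry negligible weight), which forces $\Phi_T(P)$ to be small while $\Phi_{T'}(P)$ is forced to be large by the nested after-subtrees on the right spine.

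\textbf{Key steps, in order.}
First, fix the combinatorial instance: by hypothesis, for every constant $\ell$ there is an access whose after-tree has $\ell+1$ distinct right-depths on the $>s$ side; by Proposition~\ref{lem:tree_monotone} this is exactly the statement that $A\setminus s$ needs more than $\ell$ monotone sets, so I can extract a strictly right-depth-increasing chain of length $m = \ell+1$. Second, reduce to a single ``canonical'' hard pattern: I want to argue that producing such a chain forces, on \emph{some} weight assignment, a potential drop that is too small. The cleanest way is to consider the uniform weight function first, or more flexibly, to let the weights be a free parameter and derive a contradiction with the inequality $|P| \le \Phi_T - \Phi_{T'} + O(1 + \log(W/w(s)))$. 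Third — the heart of the argument — lower-bound $\Phi_{T'}(P) - \Phi_T(P)$ from below (equivalently, show $\Phi_T - \Phi_{T'}$ is much smaller than $|P|$). Along the after right-spine we have $w(T'(a_1)) \ge w(T'(a_2)) \ge \cdots \ge w(T'(a_m)) \ge w(a_m) > 0$, and crucially $w(T'(a_1)) \le W$; so $\sum_j \log w(T'(a_j))$ can be as large as $\approx m \log(\text{something})$ only if the weights spread out, but we get to \emph{choose} the weights, so we choose them to make each $w(T'(a_j))$ close to $W$. Symmetrically, choose the before-path so each $a_j$ has small subtree: since in $T$ the nodes $a_1,\ldots,a_m$ lie on one root-to-$s$ path, at most one of them can have subtree weight close to $W$, and we can push all the others down to weight $\approx w(a_j)$ by making their pendant trees light. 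Then $\Phi_T(P) - \Phi_{T'}(P) \le O(\log W) - \sum_{j\ge 2}\log\frac{w(T'(a_j))}{w(T(a_j))}$, and each term in the sum is $\Omega(\log(W / w(a_j)))$, which we make large by taking $w(a_j)$ tiny. This defeats the $O(1+\log(W/w(s)))$ slack as long as $w(s)$ is kept bounded away from $0$ — so I would set $w(s)=1$ and let the rest of the weights shrink. Fourth, bookkeeping: verify $|P|$ is $\Theta(m)$ (the before-path has length at least $m$ since it contains $a_1,\ldots,a_m$), check that the reattachment of pendant subtrees does not interfere, and conclude that for $m$ large enough the access inequality fails for this weight function, contradicting ``$\aset$ satisfies the access lemma via SOL.''

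\textbf{Main obstacle.}
The delicate point is that $\aset$'s behavior on the before-path $P$ is dictated by $\aset$, not by us — we only get to choose $P$ and the weights, and we are \emph{told} the after-tree has many right-depths, but we do not control \emph{which} before-path triggers that. So the real work is an inversion: from the qualitative hypothesis ``$A\setminus s$ needs $\omega(1)$ monotone sets'' we must manufacture a concrete family of (before-path, weight) pairs on which the inequality provably breaks. I expect the argument to proceed by contradiction — assume the access lemma holds with some potential-constant $C$, then for $\ell$ chosen as a function of $C$ pick the guaranteed bad access, instantiate weights depending on $C$ and $\ell$, and show $|P| > \Phi_T - \Phi_{T'} + C(1 + \log(W/w(s)))$. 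Making the pendant-subtree weights ``negligible'' rigorously (they must still be positive, and they affect $W$) is the kind of routine-but-fiddly estimate I would handle by taking all pendant leaves to have weight $\varepsilon$ and letting $\varepsilon \to 0$, so that $W \to m + (\text{constant})$ and each $w(T(a_j)) \to w(a_j) = 1$ for the ``light'' $a_j$, while each $w(T'(a_j)) \to$ a quantity of order $m-j+1$; then $\Phi_T - \Phi_{T'} \to O(\log m)$ while $|P| = \Omega(m)$, the desired contradiction. I would present the $\varepsilon\to 0$ limit as a clean way to avoid carrying error terms.
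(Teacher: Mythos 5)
Your high-level plan matches the paper's: extract a node of large right-depth, consider the chain of right-turn nodes above it in the after-tree, and cook up an adversarial weight assignment that makes the SOL potential increase too much. However, your concrete weight assignment does not work, and the direction in which you push the weights is exactly backwards from what is needed.

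The first problem is the ``$\varepsilon\to 0$ on pendant/light nodes'' step. You want $\Phi_T - \Phi_{T'}$ to be very negative, but shrinking weights of nodes that end up as leaves (or near-leaves) of the after-tree $A$ drives $\Phi_{T'}$ to $-\infty$: for any such node $b$ with $w(b)=\varepsilon$ we have $w(T'(b))\approx\varepsilon$, while $w(T(b))\ge w(s)=1$ since $T(b)$ always contains $s$. So $\log w(T(b))-\log w(T'(b))\to +\infty$, and $\Phi_T-\Phi_{T'}\to+\infty$ as $\varepsilon\to 0$, which makes the access-lemma inequality trivially true rather than violated. There are generally many such nodes on $P$ besides the $a_j$'s, so this term cannot be ignored. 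The second problem is your claim that ``at most one of them can have subtree weight close to $W$'' for the before-subtrees, and the consequence $w(T(a_j))\to 1$. The $a_j$'s all lie on one root-to-$s$ path, so their before-subtrees are nested; $T(a_{\pi(1)})\supseteq\dots\supseteq T(a_{\pi(m)})$, and each $T(a_{\pi(j)})$ contains all deeper $a$'s and $s$, hence has weight at least $m-j+2$ once you give the $a_j$'s and $s$ weight $1$. Many of them are $\Theta(W)$. With your proposed weights, $\sum_j\log w(T(a_j))$ and $\sum_j\log w(T'(a_j))$ are both $\Theta(m\log m)$ and essentially cancel, so the chain by itself does not give the deficit you need.

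The paper avoids both issues by pushing weights \emph{up}, not down. Let $x>s$ have maximum right depth $k=\omega(1)$ in $A$, and let $a_{i_1},\dots,a_{i_k}$ be the right-turn nodes on the root-to-$x$ path in $A$; all of these lie in $(s,x)$ and are therefore descendants of $x$ in $P$. Assign weight $0$ to pendent subtrees, weight $1$ to proper descendants of $x$ in $P$, and weight $K$ (a large free parameter) to ancestors of $x$ in $P$. Then every $a_{i_j}$ has $w(T(a_{i_j}))\le |P|$ (only light descendants of $x$ in $P$) while $w(T'(a_{i_j}))\ge K$ (its after-subtree picks up a $K$-heavy ancestor of $x$), so the ratio is at least $K/|P|$; for every other node the ratio $w(T'(a))/w(T(a))$ is at least $1/|P|$. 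This yields $\Phi_{T'}-\Phi_T\ge k\log(K/|P|)-|P|\log|P|$, whereas the access lemma would cap it by $O(\log(K|P|))$; taking $K$ large and using $k=\omega(1)$ gives the contradiction. The crucial move you are missing is this one-sided boost of the weights above $x$, which forces the nested after-subtrees $T'(a_{i_j})$ to each gain a factor of roughly $K$, while keeping all other ratios bounded below uniformly; shrinking pendant weights can never produce this asymmetry.
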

\begin{proof} We may assume that the right subtree of $A$ cannot be decomposed into constantly many monotone sets. Let $x > s$ be a node of maximum right depth in $A$. By Lemma~\ref{lem:tree_monotone}, we may assume that the right depth is $k=\omega(1)$. Let $a_{i_{1}},\dots,a_{i_{k}}$ be the elements on the path to $x$ where the right child pointer is used. All these nodes are descendants of $x$ in the before-path $P$. 

We now define a weight assignment to the elements of $P$ and the pendent trees for which 
the access lemma does not hold with the SOL potential. We assign weight zero to all pendent trees, weight one to all
proper descendants of $x$ in $P$ and weight $K$ to all ancestors of $x$ in $P$. Here $K$ is a big number. 
The total weight $W$ then lies between $K$ and $\abs{P}K$. 

We next bound the potential change. Let $r(a_i) = w(T'(a_{i}))/w(T(a_{i}))$ be the ratio of the weight of the 
subtree rooted at $a_i$ in the after-tree and in the before-path. For any element $a_{i_j}$ at which a right turn occurs, we have $w(T(a_{i_{j}}))\le|P|$ and $w(T'(a_{i_{j}}))\ge K$. So $r(a_{i_{j}}) \ge K/|P|$. Consider now any other 
$a_i$. If it is an ancestor of $x$ in the before-path, then $w(T(a_i))\le W$ and $w(T'(a_i))\ge K$. If it is a descendant of $x$, then $w(T(a_i))\le \abs{P}$ and $w(T'(a_i))\ge 1$. Thus $r(a_i) \ge 1/\abs{P}$ for every $a_i$. We conclude


	\[
	\Phi'(T)-\Phi(T)\ge k\cdot\log\frac{K}{|P|}-|P|\log|P|.
	\]
	
If $\aset$ satisfies the access lemma with the SOL potential function, then we must have $\Phi'(T)-\Phi(T)\le O(\log\frac{W}{w(s)}-|P|) = O(\log (K\abs{P}))$. However, if $K$ is large enough and $k=\omega(1)$, then $k\cdot\lg\frac{K}{|P|}-|P|\lg|P|\gg O(\log (K \abs{P}))$.
\end{proof}

\subsection{Necessity of many leaves}
\label{subsec:nec2}
In this section we study condition (i) of Theorem~\ref{main theorem}. We show that some such condition is necessary for an efficient BST algorithm: if a local algorithm consistently creates only few leaves, it cannot satisfy the sequential access theorem, a natural efficiency condition known to hold for several BST algorithms~\cite{tarjan_sequential, Fox11}.

\begin{definition}
A self-adjusting BST algorithm $\mathcal{A}$ satisfies the \emph{sequential access theorem} if starting from an arbitrary initial tree $T$, it can access the elements of $T$ in increasing order with total cost $O(|T|)$. 
\end{definition}

\begin{theorem}
\label{thm:necc2}
	If for all after-trees $A$ created by algorithm $\aset$ executed on $T$, it holds that (i) $A$
	can be decomposed into $O(1)$ monotone sets, and (ii) the number of leaves of $A$ is at most $|T|^{o(1)}$, then $\aset$ does not satisfy the sequential access theorem. 
\end{theorem}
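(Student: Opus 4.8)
The plan is to exhibit a potential-function argument tailored to the sequential-access setting, mirroring the structure of Theorem~\ref{thm:disj+mono+zigzag}, but run it \emph{in reverse}: instead of using the access lemma to bound cost, use the hypothesized weakness (few leaves, $O(1)$ monotone sets) together with the sequential access pattern to force the total cost to be $\omega(|T|)$. Concretely, I would set all weights to $1$ and track the SOL potential $\Phi$ over the course of the full sequential scan of $T$ (the initial tree). The total cost is $\sum_i |P_i|$ where $P_i$ is the $i$-th search path. The telescoping of $\Phi$ contributes only $O(|T|\log|T|)$ overall (the potential of any tree on $m$ nodes is at most $m\log m$, and it only changes within $P_i$ by the Proposition in \S\ref{sec:tools}), so this term is \emph{not} enough to absorb a superlinear cost by itself; the real leverage must come from a better-than-worst-case bound on $|P_i| - (\Phi(P_i) - \Phi'(P_i))$.

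The key step is to show that under hypotheses (i) and (ii), each access pays for itself only up to the number of leaves plus a monotone-set slack: running Theorem~\ref{thm:disj+mono} with the single subtree-disjoint set $D_1 = B_i$ (the leaves of $A_i$) and the $O(1)$ monotone sets given by (i), we get $|B_i| \le \Phi(P_i) - \Phi'(P_i) + O(1 + \log(W/w(s_i)))$, i.e.\ $|B_i| \le \Phi(P_i)-\Phi'(P_i) + O(\log|T|)$. But this bounds $|B_i|$, not $|P_i|$ — and that is exactly the point: we have no control over $|P_i|$ from the access lemma side, so instead I argue combinatorially that the sequential scan \emph{must} incur a large total search-path length. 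The standard fact (used in the sequential access theorem literature, e.g.\ Tarjan~\cite{tarjan_sequential}) is that if accessing $1, 2, \dots, n$ in order had total cost $O(n)$, then in particular ``most'' accesses touch only $O(1)$ nodes; I would instead lower-bound $\sum_i |P_i|$ directly. Here the cleanest route: after accessing $s_i$, the after-tree $A_i$ has $s_i$ at the root and, by (ii), at most $|T|^{o(1)}$ leaves, so it contains a root-to-leaf path of length $\ge (|P_i|) / |T|^{o(1)}$ — a long path. Combined with (i), a long path that is decomposable into $O(1)$ monotone sets means some monotone set is long, i.e.\ there is a long chain of nodes on one side of $s_i$ with equal right-depth (or left-depth); by Lemma~\ref{characterization monotone} this is a nested chain $T'(a_1) \subseteq T'(a_2) \subseteq \cdots$ of subtrees. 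The plan is to show such a long nested chain, sitting in the tree right after accessing $s_i = i$, survives (or forces cost) when we next access $i+1, i+2, \dots$: each of those subsequent search paths must pass through the whole chain before separating out the next element, contributing $\Omega(\text{chain length})$ to cost repeatedly.

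Assembling this: suppose for contradiction the total cost is $O(|T|)$. Then $\sum_i |P_i| = O(|T|)$, so the average path length is $O(1)$ and all but an $o(1)$-fraction of accesses have $|P_i| = O(1)$. Pick any access $i$ with $|P_i| \ge $ some slowly growing function $g(|T|) = \omega(1)$ but still $\le |T|^{o(1)}$ — if no such access exists, essentially all paths are $O(1)$, and I would separately argue this is impossible for a sequential scan from an arbitrary (e.g.\ a long path) initial tree, since the first access alone costs $\Theta(|T|)$ and, more importantly, one can choose $T$ to be a left-leaning path so that the monotone-set and leaf constraints interact badly with the required depth reduction. For an access with moderately large $|P_i|$, the long monotone chain produced forces the next few accesses to be expensive, and by (ii) the algorithm cannot ``flatten'' this chain cheaply at access $i$ (flattening a length-$m$ path into many leaves would violate (ii)). Iterating the bookkeeping over the whole scan yields $\sum_i |P_i| = \omega(|T|)$, the desired contradiction.

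The main obstacle I expect is the step where a long monotone chain after access $i$ is shown to \emph{propagate} cost to later accesses — a priori the algorithm could, at access $i+1$, restructure things so that the chain is absorbed without ever being fully traversed. Handling this requires either (a) choosing the initial tree $T$ adversarially (e.g.\ a caterpillar/path) so the structure is forced, or (b) a global charging argument over the entire sequence that amortizes the ``debt'' of long monotone chains against future path lengths, using that condition (i) bounds how the chain can be broken up and condition (ii) bounds how cheaply it can be destroyed. I would pursue (b): define an auxiliary potential counting, for each node $t$, the length of the longest monotone chain currently lying on the path from the root to $t$; show each access of cost $|P_i|$ with few leaves must \emph{increase} this auxiliary potential by $\Omega(|P_i|)$ (creating a long chain) while only $O(1)$ monotone sets and $|T|^{o(1)}$ leaves means it can \emph{decrease} by at most $|T|^{o(1)}$ per step; since the auxiliary potential is bounded by $|T|^2$, summing gives $\sum_i |P_i| \le |T|^{2+o(1)}$ — too weak — so the quantitative form of the decrease bound (and possibly a more refined potential, perhaps summing $\log$ of chain lengths as in Lemma~\ref{lem:mono}) is where the real work lies, and is the part I would expect to occupy most of the detailed proof.
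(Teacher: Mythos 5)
You correctly identify the two key ingredients of the setup---choosing the initial tree $T$ adversarially as a left-leaning path, and tracking a potential based on the lengths of left-leaning chains (what the paper calls \emph{wings})---and you correctly dismiss the raw SOL potential as insufficient. But the proposal stops exactly where the proof actually begins. You speculate that the right potential might be ``summing $\log$ of chain lengths,'' yet you neither pin it down nor carry the calculation through, and you openly acknowledge this. The paper's potential is $\phi(R)=\sum_{w\in wp(R)}|w|\log|w|$, summed over the wing partition of the \emph{right subtree of the root} (the elements not yet accessed); the multiplicative factor $|w|$ is essential, because it makes $\phi_0=n\log n$, $\phi_n=0$, and, via convexity of $x\mapsto x\log x$, gives the per-access bound
$\phi_i-\phi_{i-1}\geq (n_i-2)\log\bigl((n_i-2)/|wp(A''_i)|\bigr)-f(n_i-1)$.
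Your version of the auxiliary potential (longest monotone chain on each root-to-node path) is the coarser object you yourself flag as yielding only a $|T|^{2+o(1)}$ bound, i.e.\ too weak.

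The genuinely missing idea is the lemma that ties conditions (i) and (ii) together: \emph{a tree of right-depth $m$ with $k$ leaves has at most $mk$ wings}. This is what prevents the algorithm from ``absorbing'' a long wing cheaply---the scenario you worry about in your step (b)---because each access can only split the wing containing the search path into $O(1)\cdot|T|^{o(1)}$ new wings, so $\phi$ can only drop by $O(n_i\log(n^{o(1)}))$ per access. Telescoping $\phi$ from $n\log n$ down to $0$ then forces $\sum_i n_i = n\cdot\omega(1)$, contradicting the sequential access theorem. Your ``propagation of cost to later accesses'' framing is also subtly off: the paper never needs to argue that a specific long chain survives future restructuring; the wing potential does the amortization globally and silently. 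In short, your intuition points at the right target, but the proposal lacks the exact potential, the convexity step, and the wing-count lemma, which together constitute essentially the entire proof.
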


The rest of the section is devoted to the proof of Theorem~\ref{thm:necc2}.

Let $R$ be a BST over $[n]$. We call a maximal left-leaning path of $R$ a \emph{wing} of $R$. More precisely, a wing of $R$ is a set $\{x_1, \dots, x_k \} \subseteq [n]$, with $x_1 < \cdots < x_k$, and such that $x_1$ has no left child, $x_k$ is either the root of $R$, or the right child of its parent, and $x_i$ is the left child of $x_{i+1}$ for all $1 \leq i < k$. A wing might consist of a single element. Observe that the wings of $R$ partition $[n]$ in a unique way, and we call the set of wings of $R$ the \emph{wing partition} of $R$, denoted as $wp(R)$. We define a potential function $\phi$ over a BST $R$ as follows: $\phi(R) = \sum_{w \in wp(R)}{|w| \log(|w|)}$.

Let $T_0$ be a left-leaning path over $[n]$ (i.e.\ $n$ is the root and $1$ is the leaf). Consider a minimally self-adjusting BST algorithm $\aset$, accessing elements of $[n]$ in sequential order, starting with $T_0$ as initial tree. Let $T_{i}$ denote the BST after accessing element $i$. Then $T_i$ has $i$ as the root, and the elements yet to be accessed (i.e.\ $[i+1,n]$) form the right subtree of the root, denoted $R_i$. To avoid treating $T_0$ separately, we augment it with a ``virtual root'' 0. This node plays no role in subsequent accesses, and it only adds a constant one to the overall access cost.

Using the previously defined potential function, we denote $\phi_i = \phi(R_i)$. We make the following easy observations: $\phi_0 = n \log{n}$, and $\phi_n = 0$.

Next, we look at the change in potential due to the restructuring after accessing element $i$. Let $P_i= (x_1, x_2, \dots, x_{n_i})$ be the access path when accessing $i$ in $T_{i-1}$, and let $n_i$ denote its length, i.e.\ $x_1=i-1$, and $x_{n_i} = i$. Observe that the set $P'_i = P_i \setminus \{x_1\}$, is a wing of $T_{i-1}$.
 
Let us denote the after-tree resulting from rearranging the path $P_i$ as $A_i$. Observe that the root of $A_i$ is $i$, and the left child of $i$ in $A_i$ is $i-1$. We denote the tree $A_i \setminus \{ i-1 \}$ as $A'_i$, and the tree $A'_i \setminus \{ i \}$, i.e.\ the right subtree of $i$ in $A_i$, as $A''_i$.

The crucial observation of the proof is that for an arbitrary wing $w \in wp(T_{i})$, the following holds: (i) either $w$ was not changed when accessing $i$, i.e.\ $w \in wp(T_{i-1})$, or (ii) $w$ contains a portion of $P'_{i}$, possibly concatenated with an earlier wing, i.e.\ there exists some $w' \in wp(A'_i)$, such that $w' \subseteq w$. In this case, we denote $\mathrm{ext}(w')$ the \emph{extension} of $w'$ to a wing of $wp(T_{i})$, i.e.\ $\mathrm{ext}(w') = w \setminus w'$, and either $\mathrm{ext}(w')=\emptyset$, or $\mathrm{ext}(w') \in wp(T_{i-1})$.

Now we bound the change in potential $\phi_{i} - \phi_{i-1}$. Wings that did not change during the restructuring (i.e.\ those of type (i)) do not contribute to the potential difference. Also note, that $i$ contributes to $\phi_{i-1}$, but not to $\phi_i$. Thus, we have for $1 \leq i \leq n$, assuming that $0\log{0} = 0$, and denoting $f(x) = x\log(x)$:
\begin{eqnarray*}
\phi_{i} - \phi_{i-1} =  
\sum_{w' \in wp(A''_i)}{ \bigl( f( |w'| + |\mathrm{ext}(w')| ) - f( |\mathrm{ext}(w')| ) \bigr) }- f(n_i-1).
\end{eqnarray*}

By simple manipulation, for $1 \leq i \leq n$: 
$$
\phi_{i} - \phi_{i-1} \geq   
\sum_{w' \in wp(A''_i)}{ f(|w'|) } - f(n_i-1).
$$

By convexity of $f$, and observing that $|A''_i| = n_i-2$, we have
$$
\phi_{i} - \phi_{i-1} \geq   
|wp(A''_i)| \cdot f\left( \frac{n_i-2}{|wp(A''_i)|} \right) - f(n_i-1) = (n_i-2) \cdot \log{\frac{n_i-2}{|wp(A''_i)|}} - f(n_i-1).
$$

\begin{lemma} 
If $R$ has right-depth $m$, and $k$ leaves, then $|wp(R)| \leq mk$.
\end{lemma}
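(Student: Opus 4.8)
The plan is to bound the number of wings of $R$ by charging each wing to a pair consisting of a leaf of $R$ and a right-depth value. The key structural fact is that a wing is a maximal left-leaning path, so its top element $x_k$ is either the root or a right child of its parent; in either case the right-going edges from the root down to $x_k$ form a chain whose length is some value in $\{0, 1, \dots, m\}$, where $m$ is the right-depth of $R$. I would first show that the map sending a wing to (right-depth of its top node) together with (some canonically chosen leaf below the bottom of the wing) is injective.

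The key steps, in order: First, I would observe that for any wing $w = \{x_1 < \dots < x_k\}$, the bottom node $x_1$ has no left child, so the subtree rooted at $x_1$ either is a single node (then $x_1$ is a leaf) or has a right child; following right children from $x_1$ eventually reaches a node with no right child, hence a leaf $\ell(w)$ of $R$. Second, I would note that $\ell(w)$ lies in the right subtree hanging off $x_1$, so different wings with the same top-node right-depth sit in disjoint parts of the tree — more precisely, I'd argue that the wing containing a given node is determined by walking left-and-up, and that the right-depth of the top node $x_k$ plus the leaf $\ell(w)$ pins down $w$. Third, since the right-depth of any node of $R$ takes at most $m$ distinct nonzero values (actually $m+1$ values including $0$, but one can be slightly more careful), and $R$ has $k$ leaves, the number of (right-depth, leaf) pairs is at most $mk$, giving $|wp(R)| \le mk$.

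A cleaner route, which I'd probably prefer, is to partition the wings by the right-depth $d \in \{0,1,\dots,m-1\}$ of the \emph{parent edge} situation — i.e., group wings according to the right-depth of their topmost node. Within a fixed right-depth class, the top nodes of distinct wings are incomparable in the ancestor order along right-edges, and I'd show each such wing contains at least one distinct leaf in its ``right-hanging'' subtree, so there are at most $k$ wings per class and at most $m$ (or $m+1$) classes. The main obstacle is making the injectivity/disjointness argument fully rigorous: I need to verify that two distinct wings with top nodes at the same right-depth genuinely capture distinct leaves, which requires checking that the leaf reached by descending through right children from the bottom of a wing cannot be shared between two wings. This should follow because that descent stays strictly inside the subtree rooted at the wing's bottom node, and two distinct wings have disjoint node sets (the wing partition is a genuine partition), but one must confirm the leaf-finding descent does not wander into another wing's territory — it does not, since it only moves to right children, and the first node of that descent (the wing bottom) already determines a subtree that contains the whole descent.
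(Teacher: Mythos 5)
Your overall charging strategy --- pair each wing with (right-depth of its top node, a leaf below its bottom) --- can be made to work, but two of your steps need repair, and the result is a slightly longer route to what the paper does more directly.

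First, the construction of $\ell(w)$ by ``following right children from $x_1$ until there is no right child'' does not in general terminate at a leaf: the node you stop at has no right child, but it may well have a left child. (Take $x_1$ with a right child $b$ whose only child is a left child; your descent stops at $b$, which is not a leaf.) You should instead let $\ell(w)$ be \emph{any} leaf of the subtree $R_{x_1}$; such a leaf exists, and no canonical descent rule is needed or helpful.

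Second, the injectivity argument as written has a gap. You appeal to the fact that distinct wings have disjoint node sets, but that does not give disjointness of the subtrees rooted at their bottoms: if $b_1$ is the bottom of $w_1$ and $b_2$ the bottom of $w_2$ with $b_2$ a descendant of $b_1$, the wings are disjoint yet $R_{b_2}\subseteq R_{b_1}$, so both could pick the same leaf. What actually closes the argument is the right-depth coordinate you are already carrying: every node of a wing has the same right-depth as its top (a wing descends only via left edges), and if $b_2$ is a proper descendant of $b_1$ then, since $b_1$ has no left child, $b_2$ lies in the \emph{right} subtree of $b_1$, so its right-depth strictly exceeds that of $b_1$. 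Hence two wings whose tops share a right-depth have bottoms with disjoint subtrees, and the map $w\mapsto(\text{right-depth of top},\ \ell(w))$ is injective. You gesture at this (``incomparable along right-edges'') but never invoke it where it is needed.

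For comparison, the paper skips the top-node coordinate entirely: it charges each wing to an arbitrary leaf $\ell(w)$ in $R_{x_1}$ and bounds, for each fixed leaf $\ell$, the number of wings with $\ell(w)=\ell$ by noting that each such bottom is an ancestor of $\ell$ with no left child, so the edge from it toward $\ell$ is a right edge; there are at most $m$ right edges on the root-to-$\ell$ path. That gives at most $m$ wings per leaf in one step, without the auxiliary monotonicity observation your version requires. (Both arguments are really off by one --- $\ell$ itself is always a wing bottom, and right-depth takes $m+1$ values --- but this is harmless for the $n^{o(1)}$ application.)
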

\begin{proof} For a wing $w$, let $\ell(w)$ be any leaf in the subtree rooted at the node of maximum depth in the wing. Clearly, for any leaf $\ell$ there can be at most $m$ wings $w$ with $\ell(w) = \ell$. The claim follows. \end{proof}

Thus, $|wp(A''_i)| \leq n^{o(1)}$. Summing the potential differences over $i$, we get 
$
\phi_n - \phi_0 = - n \log{n} \geq - \sum_{i=1}^n{n_i \log{(n^{o(1)})}} - O(n).
$
Denoting the total cost of algorithm $\aset$ on the sequential access sequence as $C$, we obtain $ C = \sum_{i=1}^n{n_i} = n \cdot \omega(1)$.

This shows that $\aset$ does not satisfy the sequential access theorem.

\section{Small Monotonicity-Depth and Local Algorithms}
\label{sec:limit} 
In this section we define a class of minimally self-adjusting BST algorithms that we call \emph{local}. We show that an algorithm is local exactly if all after-trees it creates can be decomposed into constantly many monotone sets. Our definition of local algorithm is inspired by similar definitions by Subramanian~\cite{Subramanian96} and Georgakopoulos and McClurkin~\cite{GeorgakopoulosM04}. 
Our locality criterion subsumes both previous definitions, apart from a technical condition not needed in these works: we require the transformation to bring the accessed element to the root. We require this (rather natural) condition in order to simplify the proofs. We mention that it can be removed at considerable expense in technicalities.  
Apart from this point, our definition of locality is more general: while existing local algorithms are oblivious to the global structure of the after-tree, our definition of local algorithm allows external global advice, as well as non-determinism.

Consider the before-path $P$ and the after-tree $A$. A  \emph{decomposition} of the transformation $P \rightarrow A$ is a sequence of BSTs $(P=Q_0 \xrightarrow{P_0} Q_1 \xrightarrow{P_1} \dots \xrightarrow{P_{k-1}} Q_k = A)$, such that for all $i$, the tree $Q_{i+1}$ can be obtained from the tree $Q_i$, by rearranging a path $P_i$ contained in $Q_i$ into a tree $T_i$, and linking all the attached subtrees in the unique way given by the element ordering. Clearly, every transformation has such a decomposition, since a sequence of rotations fulfills the requirement.
The decomposition 
is \emph{local} with window-size $w$, if it satisfies the following conditions:

\begin{enumerate}[label=(\roman*)]
\item (start) $s \in P_0$, where $s$ is the accessed element in $P$,
\item (progress) $P_{i+1} \setminus P_i \neq \emptyset$, for all $i$, 
\item (overlap) $P_{i+1} \cap P_i \neq \emptyset$, for all $i$,
\item (no-revisit) $(P_i - P_{i+1}) \cap P_j = \emptyset$, for all $j>i+1$,
\item (window-size) $|P_i| \leq w$, for some constant $w > 0$.
\end{enumerate}

We call a minimally self-adjusting algorithm $\mathcal{A}$ \emph{local}, if all the before-path $\rightarrow$ after-tree transformations performed by $\mathcal{A}$ have a local decomposition with constant-size window.
The following theorem shows that local algorithms are exactly those that respect condition (ii) of Theorem~\ref{main theorem} (proof in Appendix~\ref{sec:proof_local}).
 
\begin{theorem}
\label{thm:mono local}
Let $\mathcal{A}$ be a minimally self-adjusting algorithm. (i) If $\mathcal{A}$ is local with window size $w$, then all the after-trees created by $\mathcal{A}$ can be partitioned into $2w$ monotone sets. (ii) If all the after-trees created by $\mathcal{A}$ can be partitioned into $w$ monotone sets, then $\mathcal{A}$ is local with window-size $w$.
\end{theorem}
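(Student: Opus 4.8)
\textbf{Proof plan for Theorem~\ref{thm:mono local}.}

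The theorem has two directions, and I would treat them separately. For direction~(i), suppose $\mathcal{A}$ is local with window size $w$, and fix a transformation $P \to A$ with local decomposition $Q_0 \xrightarrow{P_0} \cdots \xrightarrow{P_{k-1}} Q_k = A$. The plan is to track, for each node $x$ on the search path, the last step $i(x)$ at which $x$ participates in the rearranged path $P_{i(x)}$ (this is well-defined by the no-revisit property: once a node leaves the active window it never returns, so its position in $A$ is fixed at that moment). I would then partition the nodes of $A \setminus s$ according to which \emph{position within the window} they occupied at their last step, together with whether they are larger or smaller than $s$ --- giving at most $2w$ classes. The key claim to establish is that each such class is monotone, i.e.\ all its elements (say those larger than $s$) have the same right-depth in $A$. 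To see this, I would argue that between consecutive "freezings" the right-depth of a newly frozen node larger than $s$ is determined by how many window-slots to its left carry elements larger than $s$; because the window has a fixed structure and the overlap/progress conditions force the window to shift monotonically along the path, nodes frozen at the same relative slot acquire the same right-depth. Making this bookkeeping precise --- in particular showing the right-depth really is an invariant of the slot and not of the step --- is where the bulk of the work lies; it is essentially an induction on $i$ showing that $Q_i$ restricted to the already-frozen part has bounded and slot-determined right-depths.

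For direction~(ii), suppose every after-tree of $\mathcal{A}$ decomposes into $w$ monotone sets $M_1, \dots, M_w$ (say ordered so that $M_1$ contains the nodes of smallest monotone-index). I must exhibit a local decomposition of $P \to A$ with window size $w$. The natural idea is to build $A$ from the bottom of the search path upward: process the before-path $P = (s = v_0, v_1, \dots, v_{|P|-1})$ in reverse, maintaining a window consisting of the $O(w)$ topmost not-yet-finalized nodes, and at each step pull in the next node $v_j$ from $P$ while ejecting any node whose position in $A$ is now completely determined. I would use Lemma~\ref{characterization monotone} and Proposition~\ref{lem:tree_monotone}: since a node $a > s$ lies at right-depth equal to its monotone index, its parent in $A$ is the deepest ancestor (in $P$) at the next-lower right-depth, and by monotonicity that ancestor is "close" to $a$ along $P$ --- specifically, at most $w$ of the nodes between $a$ and its $A$-parent can lie strictly between them in value, which bounds the window. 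The step conditions (start, progress, overlap, no-revisit, window-size) would then be verified one by one: start holds because $s$ is in the first window; progress and overlap because each step advances by exactly one node of $P$ while keeping the previous top; no-revisit because monotonicity guarantees a node, once ejected, has its whole $A$-subtree already assembled.

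The main obstacle, I expect, is direction~(i): the definition of "local" permits non-determinism and external advice, and the windows $P_i$ may overlap in complicated ways (they need not be contiguous pieces of a single path moving uniformly). So the claim that "same slot $\Rightarrow$ same right-depth" needs care --- one has to pin down exactly what structural invariant of $Q_i$ is preserved. I would formalize this by defining, for each intermediate tree $Q_i$, the set $F_i$ of nodes already frozen (those in $\bigcup_{j<i} P_j$ but not in $P_i$), proving by induction that the right-depths of elements of $F_i$ larger than $s$ take at most $w$ distinct values and that this value is a function of the window-slot at freezing time. The overlap condition is what glues consecutive windows together so the induction goes through; the no-revisit condition is what makes "frozen" permanent. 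Everything else --- the arithmetic of depths, the convexity-free counting of classes --- is routine once this invariant is in hand.
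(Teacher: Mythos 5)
Your instinct that the two directions need very different treatments is right, but for part~(i) you have picked the harder road and left a real gap. The paper does not attempt a direct construction of the $2w$ classes; it argues by contradiction. If $A$ needs more than $2w$ monotone sets, then by Proposition~\ref{lem:tree_monotone} one side of $s$ exhibits $w+1$ distinct right-depths (say), which yields a chain $s < x_1 < \dots < x_{w+1}$ with $x_{i+1}$ a descendant of $x_i$ in $A$. In the before-path these same nodes appear in the \emph{reverse} ancestor order ($x_{w+1}$ is shallowest, $x_1$ deepest). When $x_{w+1}$ first enters a window $P_{i'}$, some $x_j$ ($j\le w$) must be absent because $|P_{i'}|\le w$; that $x_j$ was already processed and, by no-revisit, is gone for good, so it can never be lifted above $x_{w+1}$. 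This kills the transformation in two short paragraphs.

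Your plan instead labels each node by the slot it occupied at its \emph{last} window appearance and claims that (slot, side of $s$) determines right-depth in $A$. That invariant is not established, and I do not believe it holds: the rearrangement applied at each step is an arbitrary tree on the window nodes (the definition even permits non-determinism and external advice), so the final right-depth of a node is an accumulated quantity over all rearrangements it participates in, not a function of the geometry of the single step where it is last touched. Your own remark that ``the windows need not be contiguous pieces of a single path moving uniformly'' is precisely the obstruction; the inductive invariant you propose to prove ($F_i$'s right-depths being slot-determined) is the whole content of the claim and you offer no mechanism for carrying it through a general window step. This is not a cosmetic gap --- as written, the proof of~(i) does not go through, and the fix is to abandon the direct construction in favor of the contradiction argument.

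For part~(ii) you are much closer to the paper: both build the after-tree bottom-up along the reversed search path, maintaining a bounded ``active'' frontier and ejecting nodes whose final $A$-subtree is complete. The paper makes this precise by keeping, for each right-depth class $R_i$ (and each left-depth class $L_i$), the single deepest-in-$P$ representative among processed nodes, and keeping that representative active exactly while its $\FRA$ (resp.\ $\FLA$) is still unprocessed; the window bound then falls out as ``one active node per monotone class.'' Your bound via ``at most $w$ of the nodes between $a$ and its $A$-parent can lie strictly between them in value'' is not the right quantity (and the phrase ``deepest ancestor in $P$'' misfires, since for a right child $a$ the $A$-parent is a $P$-\emph{descendant} of $a$). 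The high-level bottom-up idea is sound, but to make it rigorous you would need something close to the paper's $\FRA/\FLA$ bookkeeping and its structural lemma --- your sketch does not yet supply a correct accounting of the window size.
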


Due to the relationship between monotone sets and locality of algorithms,
we have
\begin{theorem}
	If a minimally self-adjusting BST algorithm $\mathcal{A}$ satisfies the access lemma with the SOL potential, then
	$\mathcal{A}$ can be made local.\end{theorem}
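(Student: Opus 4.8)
The plan is to chain together the two structural results that have already been established in the excerpt. The statement to prove is: if a minimally self-adjusting BST algorithm $\mathcal{A}$ satisfies the access lemma with the SOL potential, then $\mathcal{A}$ can be made local. I would first invoke Theorem~\ref{thm: necessity} (the ``Necessity of $O(1)$ monotone sets'' result). Its contrapositive says precisely: if $\mathcal{A}$ satisfies the access lemma via the SOL potential, then for every transformation from before-path $P$ to after-tree $A$ performed by $\mathcal{A}$, the set $A \setminus s$ can be decomposed into constantly many monotone sets — say at most $w$ of them, where $w$ is a constant that does not depend on the particular access.

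Second, I would feed this into Theorem~\ref{thm:mono local}(ii), which states that if all after-trees created by $\mathcal{A}$ can be partitioned into $w$ monotone sets, then $\mathcal{A}$ is local with window size $w$. Composing the two implications immediately yields that $\mathcal{A}$ is local (with a constant window size), which is exactly the conclusion. The phrase ``can be made local'' in the statement refers to the fact that Theorem~\ref{thm:mono local}(ii) produces a local \emph{decomposition} of each transformation (a specific sequence of bounded-window path rearrangements realizing the same before-path $\to$ after-tree map); since the underlying map is unchanged, this is a reimplementation of the same algorithm as a local one, rather than a different algorithm.

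One subtlety I would need to address carefully: Theorem~\ref{thm: necessity} gives a \emph{uniform} constant bound on the number of monotone sets only if one reads it correctly — the theorem rules out the case where the monotonicity-depth is $\omega(1)$ as a function of the path length, so for an algorithm satisfying the access lemma there is indeed a single global constant $w$ bounding the monotonicity-depth of every after-tree. I should state this uniformity explicitly, since Theorem~\ref{thm:mono local}(ii) needs ``all after-trees'' to share the same bound $w$. A second point worth a sentence: Theorem~\ref{main theorem} condition (ii) is about right-depth/left-depth being $O(1)$, and Proposition~\ref{lem:tree_monotone} (together with its analogue for elements smaller than $s$) translates ``$O(1)$ right-depths and $O(1)$ left-depths'' into ``$O(1)$ monotone sets,'' so the bridge between the depth formulation and the monotone-set formulation is already in place and can simply be cited.

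The main obstacle here is essentially bookkeeping rather than mathematical depth: making sure the constant is genuinely uniform over all accesses and all input trees (this is where one must be slightly careful in quoting Theorem~\ref{thm: necessity}), and clarifying what ``made local'' means operationally — namely that the \emph{same} before-path-to-after-tree behavior admits a local decomposition, so no change in the externally observed behavior of $\mathcal{A}$ is required. Given the two cited theorems, the proof is a two-line composition once these clarifications are in place.
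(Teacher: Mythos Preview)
Your proposal is correct and matches the paper's approach exactly: the paper presents this theorem as an immediate consequence of Theorem~\ref{thm: necessity} (giving $O(1)$ monotone sets) composed with Theorem~\ref{thm:mono local}(ii) (turning bounded monotonicity-depth into a local decomposition), without even writing out a separate proof. Your attention to the uniformity of the constant and to the meaning of ``can be made local'' is appropriate and fills in precisely the bookkeeping the paper leaves implicit.
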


\paragraph{Open Questions:} 

Does the family of algorithms described by Theorem~\ref{thm:suff tree} satisfy other efficiency-properties not captured by the access lemma? Properties studied in the literature include sequential access~\cite{tarjan_sequential}, deque~\cite{tarjan_sequential, pettie_deque}, dynamic finger~\cite{finger2}, or the elusive dynamic optimality~\cite{ST85}.

One may ask whether locality is a necessary feature of all efficient BST algorithms. We have shown that some natural heuristics (e.g.\ path-balance or depth reduction) do not share this property, and thus do not satisfy the access lemma with the (rather natural) sum-of-logs potential function. It remains an open question, whether such ``truly nonlocal'' heuristics are necessarily bad, or if a different potential function could show that they are good.

\paragraph{Acknowledgement:} The authors thank Raimund Seidel for suggesting the study of depth-reducing heuristics and for useful insights about BSTs and splay trees.

\bibliographystyle{plain}

\vspace{0.1in}
\bibliography{ref}

\newpage 
\appendix

\section{Proofs Omitted from Section~\ref{sec:limit}}
\label{sec:proof_local}
\subsection{Proof of Theorem~\ref{thm:mono local}}
Let $s$ denote the accessed element in the before-path $P$ (i.e.\ the root of $A$).

(i) Suppose for contradiction that the after-tree $A$ is not decomposable into $2w$ monotone sets. As a corollary of Lemma~\ref{lem:tree_monotone}, $A$ contains a sequence of elements
 $x_1, x_2,$ $\dots, x_{w+1}$ such that either (a) $s<x_1<\dots<x_{w+1}$, or (b) $x_{w+1}<x_w<\dots<x_1<s$ holds, and $x_{i+1}$ is a descendant of $x_{i}$ for all $i$. Assume that case (a) holds; the other case is symmetric.

Let $i'$ be the first index for which $x_{w+1} \in P_{i'}$. From the (window-size) condition we know that $P_{i'}$ contains at most $w$ elements, and thus there exists some index $j<w+1$ such that $x_{j} \notin P_{i'}$. As $x_j$ is a descendant of $x_{w+1}$ in the before-path, it was on some path $P_i''$ for $i''<i'$, and due to the (no-revisit) condition it will not be on another path in the future. Thus, it is impossible that $x_j$ becomes an ancestor of $x_{w+1}$, so no local algorithm can create $A$ from $P$. 

\newcommand{\FRA}{\mathit{FRA}}\newcommand{\FLA}{\mathit{FLA}}
\newcommand{\parent}{\mathit{parent}}

(ii) We give an explicit local algorithm $\mathcal{A}$ that creates the tree $A$ from path $P$. As in the proof of Lemma~\ref{lem:tree_monotone} we decompose $A_{>}=R_{1}\dot{\cup}\dots\dot{\cup}R_{w_R}$, and $A_{<}=L_{1}\dot{\cup}\dots\dot{\cup}L_{w_L}$, where $R_{i}$ (resp.\ $L_{i}$) is the set of elements whose search path contains exactly $i$ right (resp.\ left) turns. Let $L_0 = R_0 = \{ s\}$. Let $P = (x_1,x_2,\ldots,x_k = s)$ be the search path for $s$, i.e., 
$x_1$ is the root of the current tree and $x_{j+1}$ is a child of $x_j$. For any $j$, let $t_j(R_i)$ be the element in $R_i \cap \{x_j,\ldots,x_k \}$ with minimal index; $t_j(L_i)$ is defined analogously.

For any node $x$ of $A$, let the first right ancestor $\FRA(x)$ be the first ancestor of $x$ in $A$ that is larger than $x$ (if any) and let the first left ancestor $\FLA(x)$ be the first ancestor of $x$ smaller than $x$ (if any). 

\begin{lemma} Fix $j$, let $X = \{x_j,\ldots,x_k\}$, consider any $i \ge 1$, and let $x = t_j(R_i)$. 
\begin{compactenum}[(i)]
\item If $x$ is a right child in $A$ then its parent belongs to $X \cap R_{i-1}$.
\item If $x$ is a left child in $A$ then $\FLA(x)$ is equal to $t_j(X_{i-1})$ and $\FRA(x) \not\in X$.
\item If $x$ is a right child and $\FRA(x) \in X$ then all nodes in the subtree of $A$ rooted at $x$ belong to $X$.
\item If $\FRA(x) \in X$ then $\FRA(t_j(R_\ell)) \in X$ for all $\ell \ge i$. 

\end{compactenum}
\end{lemma}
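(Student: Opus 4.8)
The plan is to reduce all four items to a short list of structural facts relating the after-tree $A$ to the before-path suffix $X=\{x_j,\dots,x_k\}$, and then run a brief case analysis for each. The facts I would set up first are: \textbf{(F1)} (search-path nesting) among nodes of $P$ on a common side of $s$, a before-path ancestor always holds the more extreme key, so the keys exceeding $s$ strictly decrease as one descends $P$; hence $X$ inherits this, and for $i\ge 1$, $t_j(R_i)$ is precisely the largest key in $R_i\cap X$. \textbf{(F2)} (contiguity) in any BST the node set of a subtree is a contiguous key-range, so any node of $A$ whose key lies in the subtree-interval of a node $u$ is an $A$-descendant of $u$, and if its key is on the larger side of $u$ the first edge of the path down from $u$ is a right edge. \textbf{(F3)} (right-depth arithmetic) a left child has the same right-depth as its parent, a right child one more; and $\FRA(x)$ (resp.\ $\FLA(x)$) is the \emph{deepest} $A$-ancestor of $x$ with strictly larger (resp.\ smaller) key. \textbf{(F4)} (membership) if $p\in P$, $p>s$, and $p$ is smaller than some key of $X$ exceeding $s$, then $p\in X$ --- otherwise $p=x_m$ with $m<j$ would be a before-path ancestor of that element and F1 would force $p$ to be the larger one. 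In particular $X\cap(s,\infty)=\{p\in P: s<p\le M\}$ with $M=\max(X\cap(s,\infty))$; the mirror statements hold below $s$.

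Items (i)--(iii) then follow quickly. \emph{(i)}: the parent $p$ of $x=t_j(R_i)$ has right-depth $i-1$ by F3, so $p\in R_{i-1}$ (with $R_0=\{s\}$), and $p<x$, $p\ge s$; if $i=1$ then $p=s\in X$, and if $i\ge 2$ then $s<p<x$ with $x\in X$, so $p\in X$ by F4. \emph{(ii)} (here $x$ is a left child): its parent $q=\FRA(x)$ has right-depth $i$ by F3, hence $q\in R_i$ and $q>x=\max(R_i\cap X)$, forcing $q\notin X$; and $\FLA(x)$ exists (the root $s$ is a smaller ancestor), has right-depth $i-1$ by F3 and the definition of $\FLA$, has key $<x$, and lies in $X$ (by F4, trivially when $i=1$); to see $\FLA(x)=t_j(R_{i-1})=\max(R_{i-1}\cap X)$, assume some $z\in R_{i-1}\cap X$ has $z>\FLA(x)$ and split on $z$ versus $x$: if $z>x$, then $q=\FRA(x)\notin X$ is a before-path ancestor of $z$ so $q>z$ by F1, and since $x$ lies in the left subtree of $q$, F2 puts $z$ there too, forcing its right-depth to be $\ge i$ --- impossible for $z\in R_{i-1}$; if $z<x$, then $z$ is not in the subtree $A(x)$ of $A$ rooted at $x$ (else F2 forces right-depth $\ge i$), so $z$ lies below the subtree-interval of $x$, but $z>\FLA(x)$ and $A(x)$ sits in the right subtree of $\FLA(x)$, so F2 puts $z$ in that right subtree via a right edge, again forcing right-depth $\ge i$ --- impossible. \emph{(iii)}: writing $g=\FRA(x)\in X$, the subtree-interval of $x$ avoids $s$ (the $A$-root) and contains $x>s$, so it lies in $(s,\infty)$, and its right endpoint is $<g$ since $x$ is in the left subtree of $g$; hence every node of $A(x)$ has key in $(s,g)$ and is in $X$ by F4.

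For item (iv): the hypothesis $\FRA(x)\in X$ and the contrapositive of (ii) force $x=t_j(R_i)$ to be a right child, so (iii) applies, and the same F4-argument shows that the entire left subtree $L_g$ of $g:=\FRA(x)$ in $A$ (which contains $A(x)$) consists of nodes of $X$. I would then argue: for $\ell\ge i$, once we know $t_j(R_\ell)\in L_g$ --- so that $g$ is an $A$-ancestor of $t_j(R_\ell)$ with larger key, and $h:=\FRA(t_j(R_\ell))$ is $g$ or a deeper ancestor --- $h$ cannot be a \emph{strict} $A$-ancestor of $g$, because then $g$ would lie strictly between $t_j(R_\ell)$ and $h$ on the ancestor chain, forcing $g\le t_j(R_\ell)$ by the definition of $\FRA$, contradicting $t_j(R_\ell)<g$; hence $h$ is $g$ itself or a node of $L_g$, and in either case $h\in X$. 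The missing ingredient --- which I expect to be the genuine obstacle --- is exactly the claim that $t_j(R_\ell)$ lands inside $L_g$ for every $\ell\ge i$ (equivalently, that the deeper right-depth classes do not ``escape'' the subtree hanging below $\FRA(x)$, and in particular that $t_j(R_\ell)<g$); I would prove it by the same style of case analysis as in (ii), locating $t_j(R_\ell)$ via F2 and extracting a right-depth contradiction from the maximality of $t_j(R_i)$ in $R_i\cap X$ together with the hypothesis $\FRA(x)\in X$. Everything outside this one step is routine bookkeeping with F1--F4.
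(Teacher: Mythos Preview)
Your handling of (i)--(iii) matches the paper's short arguments, with your F4 (that $X\cap(s,\infty)=(s,M]$) making the key interval observation explicit; the sub-case $z<x$ in (ii) is slightly overwritten but fine, since $\FLA(x)<z<x<q$ already places $z$ in the subtree-interval of $x$ and forces right-depth $\ge i$ directly.

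For (iv), however, the claim you flag as the ``missing ingredient'' --- that $t_j(R_\ell)$ lands in the left subtree $L_g$ of $g=\FRA(x)$ for every $\ell\ge i$ --- is \emph{false}, so the planned case analysis cannot succeed. Your parenthetical ``equivalently, $t_j(R_\ell)<g$'' is not an equivalence either: $L_g$ covers only the key-interval $(\FLA(g),g)$, and $t_j(R_\ell)$ can fall below $\FLA(g)$. Concretely, take $s=0$ with right child $100$, whose left child is $50$; give $50$ a right child $80$ (with left child $70$, whose right child is $75$) and a left child $20$ (with a right-chain $30,40,45$). Then $R_3=\{75,40\}$, $R_4=\{45\}$. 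With $j=2$ one has $M=80$, $x=t_2(R_3)=75$, $g=\FRA(75)=80\in X$, yet $t_2(R_4)=45$ is a descendant of $20$, not of $80$; still $\FRA(45)=50\in X$, so (iv) itself holds. (The paper's own one-line proof of (iv) asserts the same false intermediate claim.) A route that does work is induction on $\ell$: assuming $\FRA(y_\ell)\in X$ for $y_\ell:=t_j(R_\ell)$, part (ii) rules out $y_{\ell+1}$ being a left child, since then $\FLA(y_{\ell+1})=y_\ell$ and $\parent(y_{\ell+1})$ would lie in the subtree of $y_\ell$ yet have key $>M\ge\FRA(y_\ell)$; so by (i) the parent $p$ of $y_{\ell+1}$ lies in $R_\ell\cap X$, whence $\FRA(y_{\ell+1})=\FRA(p)$, and either $p=y_\ell$ (done by hypothesis) or $p<y_\ell$, in which case $\FRA(p)>M$ would force $y_\ell$ into the right subtree of $p$ and hence to right-depth $\ge\ell+1$, contradicting $y_\ell\in R_\ell$.
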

\begin{proof}\let\qed\relax
\begin{compactenum}[(i)]
\item The parent of $x$ lies between $s$ and $x$ and hence belongs to $X$. By definition of the $R_i$'s, it also belongs to $R_{i-1}$. 
\item $\parent(x) \in R_i$ and hence, by definition of $t_j(R_i)$, $\parent(x) \not\in X$. $\FLA(x) < x$ and hence
$\FLA(x) \in X \cap R_{i-1}$. The element in $R_{i-1}$ after $\FLA(x)$ is larger than $\parent(x)$ and hence does not belong to $X$. The second claim holds since $\FRA(x) \in R_i$ if $x$ is a left child. 
\item The elements between $s$ and $\FRA(x)$ (inclusive) belong to $X$. 
\item Since $z = \FRA(x) \in X$, $x$ is a right child and $z$ belongs to $R_\ell$ for some $\ell < i$. Since $x = t_j(R_i)$, the right subtree of $z$ contains no element in $X \cap R_i$. Consider any $\ell > i$. Then $t_j(R_\ell)$ must lie in the left subtree of $z$ and hence $\FRA(t_j(R_\ell) \le z$. Thus $\FRA(t_j(R_\ell)) \in X$. \hfill $\square$
\end{compactenum}\end{proof}

	We are now ready for the algorithm. We traverse the search path $P$ 
to $s$ backwards towards the root. Let $P =  (x_1,x_2,\ldots,x_k = s)$. Assume that we have reached node $x_j$. Let $X =  \{x_j,\ldots,x_k\}$. We maintain an active set $A^*$ of nodes. It consists of all $t_j(R_i)$ such that $\FRA(t_j(R_i)) \not\in X$ and all $t_j(L_i)$ such that $\FLA(t_j(L_i)) \not\in X$. When $j = k$, $A^* = \{s\}$. Consider any $y \in A^*$ and assume $\parent(y) \in X$. Then $y$ must be a right child by (ii) and $\FRA(y) \not\in X$. Since $\FRA(y)$ is also $\FRA(\parent(y))$, the parent is also active. 

	By part (iv) of the preceding Lemma, there are indices $\ell$ and $r$ such that exactly the nodes $t_j(L_{-\ell})$ to $t_j(R_r)$ are active. When $j = k$, only $t_j(R_0) = s$ is active. We maintain the active nodes in a path $P'$. By the preceding paragraph, the nodes in $X \setminus A^*$ form subtrees of $A$. We attach them to $P'$ at the appropriate places and we also attach $P'$ to the initial segment $x_1$ to $x_{j-1}$ of $P$. 

	What are the actions required when we move from $x_j$ to $x_{j-1}$? Assume $x_{j-1} > s$ and let $X' = \{x_{j-1},\ldots,x_k\}$. Also assume that $x_{j-1}$ belongs to $R_i$ and hence $x_{j-1} = t_{j-1}(R_i)$. For all $\ell \not= i$, $t_j(R_\ell) = t_{j-1}(R_\ell)$. Notice that $x_{j-1}$ is larger than all elements in $X$ and hence $\FRA(x_{j-1}) \not\in X'$. Thus $x_{j-1}$ becomes an active element and the $t_j(R_\ell)$ for $\ell < i$ are active and will stay active. All $t_j(R_\ell)$, $\ell > j$, with $\FRA(t_j(R_\ell)) = x_{j-1}$ will become inactive and part of the subtree of $A$ formed by the inactive nodes between $t_{j-1}(R_{i-1})$ and $x_{j-1}$. We change the path $P'$ accordingly.

\paragraph{Remark:} The algorithm in the proof of Theorem~\ref{thm:mono local} relies on advice about the global structure of the before-path to after-tree transformation, in particular, it needs information about the nearest left- or right- ancestor of a node in the after-tree $A$. This fact makes Theorem~\ref{thm:mono local} more generally applicable. We observe that a limited amount of information about the already-processed structure of the before-path can be encoded in the shape of the path $P'$ that contains the active set $A^*$ (the choice of the path shape is rather arbitrary, as long as the largest or the smallest element is at its root).

\subsection{Discussion of known local algorithms} 
\label{sec:known local}
This section further illustrates the generality of Theorem \ref{thm:suff tree}.
For any element $x$ in $T$, the neighbors of $x$ are the predecessor
of $x$ and the successor of $x$.

\paragraph{Subramanian local algorithm~\cite{Subramanian96}:}

This type of algorithm is such that 1) there is a constant $D$ such
that \emph{the} leaf of $P_{i+D}$ is not a leaf of $T_{i}$, 2) if the depth
of the leaf $l_{i}$ of $P_{i}$ is $d_{i}$, then the depth of $l_{i}$
and neighbor of $l_{i}$ in $T_{i}$ is less than $d_{i}$.

\paragraph{Georgakopoulos and McClurkin local algorithm~\cite{GeorgakopoulosM04}:}

This type of algorithm is such that 1) the leaf of $P_{i+1}$ cannot
be a leaf of $T_{i}$, 2) if there are $k$ transformations yielding
$T_{1},\dots,T_{k}$, then there are $\Omega(k)$ many $T_{i}$'s
which are not paths. 
\begin{theorem}
	Any Subramanian local algorithm is a Georgakopoulos and McClurkin local
	algorithm.\end{theorem}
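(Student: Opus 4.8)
The plan is to check that every Subramanian local algorithm meets the two defining conditions of a Georgakopoulos and McClurkin (GM) local algorithm. I would phrase everything in terms of a decomposition $(P = Q_0 \xrightarrow{P_0} \dots \xrightarrow{P_{k-1}} Q_k = A)$ of the before-path-to-after-tree transformation, where step $i$ rearranges the sub-path $P_i \subseteq Q_i$ into the local tree $T_i$, and I write $l_i$ for the leaf of $P_i$ and $d_i$ for its depth. In this language, GM's condition (1) asks that the leaf of $P_{i+1}$ is never a leaf of $T_i$, and GM's condition (2) asks that $\Omega(k)$ of the $T_i$ fail to be paths; Subramanian's conditions are condition (1'), which promises the weaker ``leaf of $P_{i+D}$ is not a leaf of $T_i$'' for a fixed constant $D$, together with the depth-reduction condition (2').

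First, for GM condition (1) I would pass to the \emph{$D$-coarsened} decomposition: merge each block $P_{iD}, \dots, P_{(i+1)D-1}$ into a single rearrangement of their union. Because consecutive local sub-paths overlap, and once a node leaves the sliding window it is never touched again, the union of $D$ consecutive sub-paths is again a single path in the intermediate tree; hence the coarsened step is a genuine path rearrangement, with window size at most $Dw$, still a constant. For the coarsened decomposition, Subramanian's condition (1') is exactly GM's condition (1). One then verifies that coarsening preserves the structural axioms of a decomposition (start, progress, overlap, no-revisit), which is mechanical.

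Second, for GM condition (2), that $\Omega(k)$ of the trees are non-paths, I would invoke Subramanian's depth-reduction condition (2'). The point is that a run of $D$ consecutive steps, each of which produces a path, cannot move the deepest node of the active region far enough to meet Subramanian's condition (1'): rearranging a path into a path can permute depths, but condition (2') pushes $l_i$ and its neighbour strictly upward, which limits how much the bottom of the active region can be dragged up in a single path-into-path step, so $D$ such steps in a row contradict the net progress that (1') guarantees. Hence every window of $D$ consecutive steps contains at least one non-path $T_j$, giving $k/D = \Omega(k)$ of them; and since a non-path step inside a coarsened block leaves that coarsened tree a non-path as well, the bound survives the coarsening done for condition (1).

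The step I expect to be the main obstacle is the quantitative claim in the second part: controlling precisely how much a depth-reducing path-into-path rearrangement can change the identity and depth of the bottom node of the active region, and checking that this bookkeeping remains consistent with the coarsening performed for condition (1). The combinatorics of ``which node is the leaf, and where its neighbours sit'' inside a path-shaped BST is delicate — a node's unique child in a path need not be its neighbour — so this needs a careful argument rather than a one-line appeal to depth reduction.
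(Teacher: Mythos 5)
Your high-level plan matches the paper's on condition (1): coarsening by $D$ to upgrade Subramanian's ``leaf of $P_{i+D}$'' clause to the GM clause ``leaf of $P_{i+1}$'' is exactly what the paper does, and this part of your argument is fine. The gap is in condition (2). You try to argue directly in the \emph{uncoarsened} decomposition that any run of $D$ consecutive path-producing steps would contradict condition (1'), and then transfer the resulting ``one non-path per window of $D$'' bound through the coarsening. But the core claim --- that condition (1') together with depth-reduction forbids $D$ consecutive path-into-path steps --- is never established; you yourself flag it as ``the main obstacle'' and offer only the heuristic that depth-reduction ``limits how much the bottom of the active region can be dragged up.'' Condition (1') only says the leaf of $P_{i+D}$ differs from the (unique) leaf of the path $T_i$; it is not at all clear why a sequence of depth-reducing path rearrangements could not satisfy that while still producing paths, and nothing in your sketch rules it out. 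This is a genuine missing lemma, not a routine bookkeeping step.

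The paper avoids this difficulty by changing the order of quantifiers. After the $D$-coarsening it already has, for \emph{every} $i$, that the leaf of $P_{i+1}$ is not a leaf of $T_i$, and it then proves a purely \emph{local} lemma: if $T_{i_0}$ and $T_{i_0+1}$ are both paths, each satisfying the depth-reduction condition (2'), then the composed rearrangement $P_{i_0}\cup P_{i_0+1} \to T_{i_0}\cup T_{i_0+1}$ is \emph{not} a path. The proof is a short case analysis. Let $l'_{i_0}$ be the leaf of the path $T_{i_0}$ and let $\pred,\suc$ be the order-predecessor and order-successor of the leaf $l_{i_0+1}$ of $P_{i_0+1}$. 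Since $T_{i_0}$ is a path, $\pred < l'_{i_0} < \suc$ (when they exist); and there must be a further element $x\in P_{i_0+1}$ with, say, $x<\pred$ that ends up \emph{below} $\pred$ in $T_{i_0+1}$ --- otherwise $\pred$ or $l_{i_0+1}$ would remain at depth $d_{i_0+1}$, violating (2'). Then $\pred$ sits above both $x$ and $l'_{i_0}$ with $x<\pred<l'_{i_0}$, forcing a branching in the composed tree. Because pairs of consecutive paths can thus always be merged into a single non-path, the final decomposition has $\Omega(k)$ non-path steps. This local lemma is the piece your argument is missing; the ``$D$-consecutive-steps'' route you propose is a different (and so far unsupported) strategy.
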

\begin{proof}
	The first condition of Subramanian implies the first condition of
	Georgakopoulos and McClurkin by ``composing'' $D$ transformations
	together. From now on we can assume that, for every $i$, the leaf of $P_{i+1}$ cannot
	be a leaf of $T_{i}$ even for a Subramanian algorithm.
	
	For the second condition, suppose that, for $i\in\{i_{0},i_{0}+1\}$,
	the depth of the leaf $l_{i}$ of $P_{i}$ is $d_{i}$ and the depth
	of $l_{i}$ and neighbors of $l_{i}$ in $T_{i}$ is less than $d_{i}$,
	but $T_{i}$ is a path. 
	
	We claim that composing the $i_{0}$-th and $i_{0}+1$-th transformations
	give us a non-path tree. Let $l'_{i_{0}}$ be \emph{the} leaf of $T_{i_{0}}$.
	Let $\pred$ and $\suc$ be the predecessor and the successor of $l_{i_{0}+1}$
	in $P_{i_{0}+1}$. As $T_{i_{0}}$ is a path, $\pred<l'_{i_{0}}$ if $\pred$ exists, and 
	$l'_{i_{0}}<\suc$ if $\suc$ exists.
	
	There must exist another element $x \neq l_{i_{0}+1},\pred,\suc$ in $P_{i_{0}+1}$. 
	Otherwise, $P_{i_{0}+1}$ is of size either 2 or 3. Then there is no transformation such that $T_{i_{0}+1}$ is a path and satisfies Subramanian's condition.
	
	Since $x$ exists, we know that either $x<\pred$ or $\suc<x$. Assume w.l.o.g. that $x<\pred$.
	There must, moreover, exist $x$ such that $x<\pred$ and $x$ is below $\pred$ in $T_{i_{0}+1}$. Otherwise, $\pred$ or $l_{i_{0}+1}$
	would have depth $d_{i_{0}+1}$ violating Subramanian's condition. 
	
	Now $\pred$ is higher than both $x$ and $l'_{i_{0}}$ where $x<\pred<l'_{i_{0}}$.
	Therefore, there is a branching in the ``composed'' transformation. So composing the $i_{0}$-th and $i_{0}+1$-th transformations
	give us a non-path tree.\end{proof}

\begin{theorem}
	A Georgakopoulos and McClurkin local algorithm that brings the accessed
	element to the root satisfies the conditions of Theorem \ref{thm:suff tree}.
	Hence it satisfies the access lemma.\end{theorem}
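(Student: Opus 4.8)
The goal is to show that a Georgakopoulos--McClurkin (GM) local algorithm which brings the accessed element to the root meets hypotheses (i)--(iii) of Theorem~\ref{thm:suff tff}\,---\,I mean Theorem~\ref{thm:suff tree}\,---\,so the access lemma follows. Condition (i) is immediate from the standing assumption. For conditions (ii) and (iii), the plan is to use the characterization from the previous sections: by Theorem~\ref{thm:mono local}(i), any algorithm that is local with constant window size produces after-trees decomposable into constantly many monotone sets, and by Proposition~\ref{lem:tree_monotone} this is the same as saying that the left-depth and right-depth in the after-tree are bounded by a constant\,---\,precisely condition (iii). So the first step is to argue that a GM algorithm \emph{is} local in the sense of \S\,\ref{sec:limit} with a constant-size window. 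The GM description is phrased as a sequence of single-path restructurings $T_0 \xrightarrow{P_0} T_1 \xrightarrow{} \cdots$; the first GM condition (the leaf of $P_{i+1}$ cannot be a leaf of $T_i$) is exactly a ``progress'' guarantee, and bottom-up processing gives ``overlap'' and ``no-revisit''; since each $P_i$ here is a single edge-extension step of bounded size, the window-size condition holds with $w = O(1)$. This yields (iii).

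For condition (ii)\,---\,that the after-tree $A$ has $\Omega(|P|-z)$ leaves, where $z$ is the number of side alternations on the before-path $P$\,---\,the plan is to exploit the \emph{second} GM condition: among the (at least $c\cdot|P|$, say) elementary transformations that compose to form the whole before-path-to-after-tree map, a constant fraction $\Omega(|P|)$ of the intermediate trees $T_i$ are \emph{not} paths, i.e.\ each such step introduces a genuine branching. I would argue that each such branching either creates a new leaf in $A$ that persists to the end, or is ``charged'' to a side alternation on $P$. Concretely: when an elementary transformation turns a path into a non-path, the new branch node has two children that are comparable-from-below; if both children survive as a branching in $A$, this contributes toward the leaf count; the only way such a branch can later be destroyed is if a subsequent step re-absorbs it, and by the no-revisit structure this can only happen in a way that reflects a monotone-vs-zigzag pattern of the original path, which we bound by $z$. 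Summing, the number of leaves of $A$ is at least (number of surviving branchings) $\ge \Omega(|P|) - O(z)$, which is (ii) up to adjusting constants (and noting $|P| - z \le |P|$, so the bound $\Omega(|P|-z)$ is what we get).

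Putting the three pieces together: (i) from the root assumption, (iii) from Theorem~\ref{thm:mono local}(i) plus Proposition~\ref{lem:tree_monotone}, and (ii) from the counting argument on non-path intermediate trees, Theorem~\ref{thm:suff tree} applies verbatim and the access lemma follows.

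\textbf{Main obstacle.} The delicate point is condition (ii), specifically making precise the claim that a branching created by an elementary non-path transformation either survives to $A$ as a witness to a leaf or is accounted for by a side alternation on the original before-path. One has to rule out the scenario where the constant fraction of ``non-path'' intermediate steps promised by GM all create branchings that are subsequently flattened, leaving an after-tree that is almost a path. The no-revisit condition of locality is the key structural lever\,---\,once a node leaves the active window it is frozen into its final subtree of $A$\,---\,so a branch node, once it has dropped out of every future $P_i$, keeps its two children, hence contributes a leaf in each of its two pendant directions. The remaining case, where the branching is still in play when later steps modify it, is where one must relate the local reshuffling to the left/right alternation pattern of $P$ and extract the $O(z)$ bound; getting the constants and the charging scheme exactly right is the part that requires care, but it is routine given the locality framework already developed in \S\,\ref{sec:limit}.
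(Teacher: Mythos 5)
Your overall plan matches the paper's: (i) is given by assumption, (iii) follows from Theorem~\ref{thm:mono local} together with Proposition~\ref{lem:tree_monotone}, and (ii) is argued by counting non-path intermediate trees $T_i$ (guaranteed to be an $\Omega(k)$-fraction by GM condition~2) and relating them to leaves of the after-tree. So far so good. However, the part you flag as the ``main obstacle'' is precisely where the paper supplies a concrete two-case argument that your proposal does not reproduce, and your proposed charging mechanism is in fact a different (and, as sketched, unsupported) one.

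The paper does \emph{not} argue that some branchings get created and then re-absorbed, with re-absorptions charged to $z$. Instead it splits the non-path $T_i$'s into two kinds. \emph{Case A:} $T_i$ has two leaves $l_1, l_2$ on the \emph{same} side of $s$. Here GM condition~1 (the leaf of $P_{i+1}$ is not a leaf of $T_i$, so in particular is neither $l_1$ nor $l_2$) is used directly to conclude that the branch point strictly between $l_1$ and $l_2$ is frozen into the after-tree and thus yields a leaf of $A$; these branchings are never ``flattened.'' \emph{Case B:} $T_i$ is non-path but its only two leaves are one on each side of $s$ (the branching is just at $s$). Such a $T_i$ does not obviously contribute a new leaf, and the paper bounds the number of Case~B steps a priori by $w\cdot z$, arguing that each side alternation of $P$ lets the bounded window pull over at most $w$ elements from the other side. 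The $O(z)$ loss therefore comes from counting Case~B steps, not from branchings being later destroyed. Your sketch neither invokes GM condition~1 to make branchings persist, nor isolates Case~B, and the claim that a destroyed branching ``reflects a monotone-vs-zigzag pattern of the original path'' would need its own proof; as written this is the genuine gap. The rest of the proposal (locality $\Rightarrow$ bounded monotone decomposition $\Rightarrow$ condition~(iii), and the use of $k=\Omega(|P|)$ from the progress and bounded-window conditions) is consistent with the paper.
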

\begin{proof}
	By Theorem \ref{thm:mono local}, we just need to show that the after-tree 
	$T$ has $\Omega(k-z)$ leaves, when $P$ contains $z$ side
	alternations (zigzag) and there are $k$ transformations. To do this,
	we claim that all non-path $T_{i}$'s, except $O(z)$ many, contribute
	a leaf to $T$.
	
	For each non-path $T_{i}$, suppose that there are two leaves $l_{1}$
	and $l_{2}$ in $T_{i}$ which are on the same side. That is, both are
	less or more than the accessed element $s$. Then $T_{i}$ would contribute
	one branching to $T$, because the leaf of $P_{i+1}$ cannot be $l_{1}$
	or $l_{2}$ and so there will be another element between $l_{1}$
	and $l_{2}$ placed higher than both of them, which is a branching.
	A branching in $T$ contributes a leaf in $T$.
	
	Now if $T_{i}$ is not a path but there are no two leaves on the same
	side: this means that there is exactly one leaf on left and right
	side of $s$. However, there can be at most $w\cdot z=O(z)$ many
	of this kind of $T_{i}'s$. This is because for each side alternation
	of $P$, the algorithm can bring up at most $w$ elements from another
	side.\end{proof}

\section{Proof Omitted from Section~\ref{sec:heuristics}}
\label{app:depth}
\subsection{Proof of Theorem~\ref{prop:depth}}
We show that $\mathcal{A}$ satisfies the three conditions of Theorem~\ref{thm:suff tree}. Condition (i) is satisfied by definition.

Let $s$ be the accessed element, and let $L_1$ be its left child in the after-tree. Let $(L_1, \dots, L_t)$ denote the longest sequence of nodes such that for all $i<t$, $L_{i+1}$ is the right child of $L_i$ in the after-tree, and let $T_i$ denote the left subtree of $L_i$ for all $i \leq t$. Observe that the nodes in $T_i$ are ancestors of $L_i$ in the before-path, therefore, $L_i$ has gained them as descendants. Thus, from condition (ii), we have that $|T_i| \leq d$ for all $i$. Since there are at most $d$ nodes in each subtree, the largest number of left-turns in the left subtree of $s$ is $d$. A symmetric statement holds for the right subtree of $s$. This proves condition (iii) of Theorem~\ref{thm:suff tree}.

Next, we show that a linear number of leaves are created, verifying condition (ii) of Theorem~\ref{thm:suff tree}. 

We claim that there exists a left-ancestor of $s$ in the before-path that loses $\epsilon d(s)/2 - (c + 1)$ left-ancestors, or a right-ancestor of $s$ that loses this number of right-ancestors. 

Suppose that there exists such a left-ancestor $L$ of $s$ (the argument on the right is entirely symmetric).
Observe that the left-ancestors that $L$ has \emph{not lost} form a right-path, with subtrees hanging to the left; the lost left-ancestors of $L$ are contained in these subtrees. From the earlier argument, each of these subtrees is of size at most $d$. Since the subtrees contain in total at least $\epsilon d(s)/2 - (c+1)$ elements, there are at least $(\epsilon d(s)/2 - (c + 1))/d = \Omega(d(s))$ many of them, thus creating $\Omega(d(s))$ new leaves.

It remains to prove the claim that some ancestor of $s$ loses many ancestors ``on the same side''. 
Let $L$ and $R$ be the nearest left- (respectively right-) ancestor of $s$ on the before-path. W.l.o.g.\ assume that $L$ is the parent of $s$ in the search path. 
For any node $y$, let $d_l(y)$, $d_r(y)$ denote the number of left- respectively right-ancestors of a node $y$ in the search path.
We consider two cases: 
\begin{itemize} 
 \item If $d_l(s) > d_r(s)$, then $d_r(L) \leq d(s)/2$. Since $L$ loses $(\frac{1}{2} + \epsilon) \cdot d(L) - c \ge (\frac{1}{2} + \epsilon) d(s) - (c + 1)$ ancestors, it must lose at least $\epsilon d(s) - (c + 1)$ left-ancestors.\smallskip

\item Suppose now that $d_l(s) \le d_r(s)$. Then $d_l(R) < d_r(R)$ and hence $d_l(R) \le d(R)/2$. 
At the same time $d(R) \ge d_r(R) = d_r(s) - 1 \ge (d(s) - 2)/2 $. Since $R$ loses $(\frac{1}{2} + \epsilon) \cdot d(R) - c$ ancestors, it must lose at least $(\frac{1}{2} + \epsilon) \cdot d(R) - c - d_l(R) \ge 
\epsilon \cdot (d(s)-2)/2 - c \ge \epsilon d(s)/2 - (c + 1)$ right-ancestors.
\end{itemize}

\begin{figure}
\begin{center}  
\includegraphics[width=0.5\textwidth]{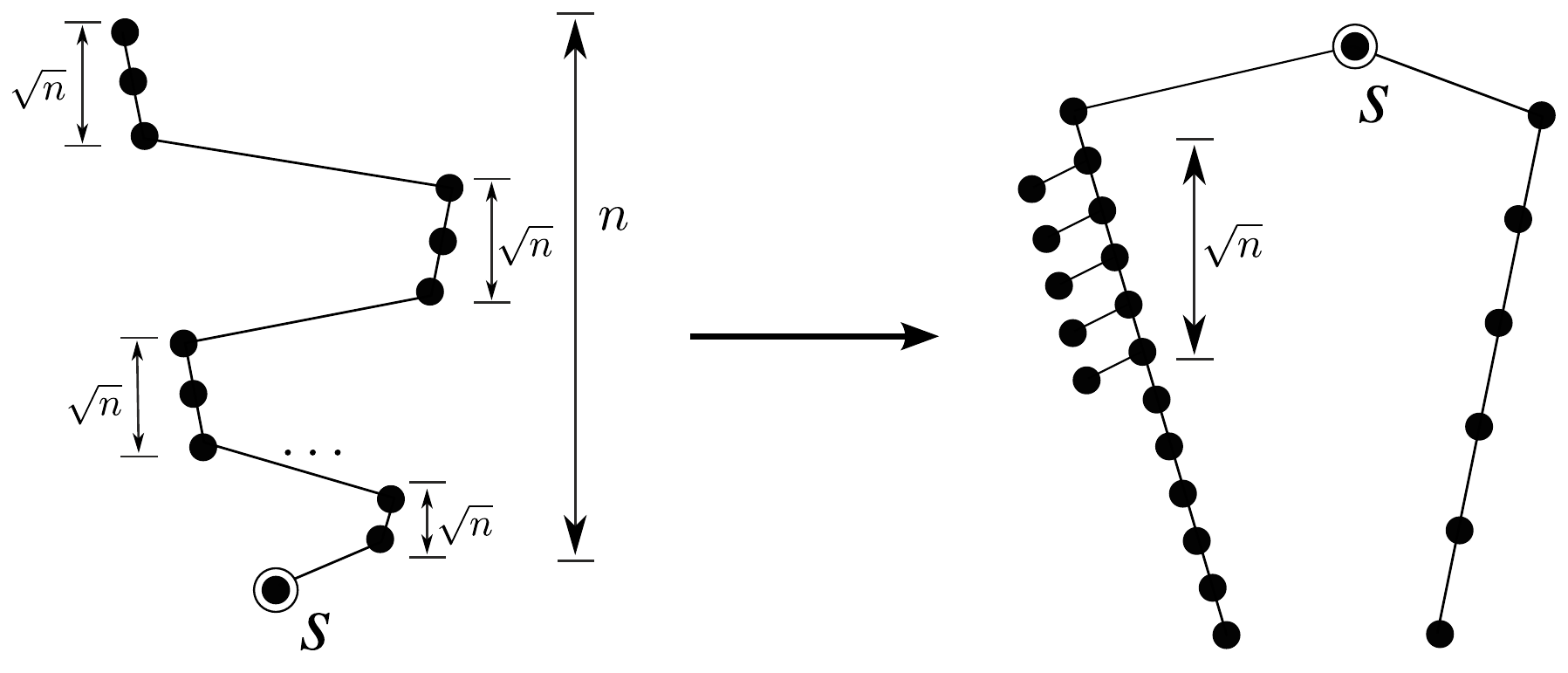}\hfill \includegraphics[width=0.35\textwidth]{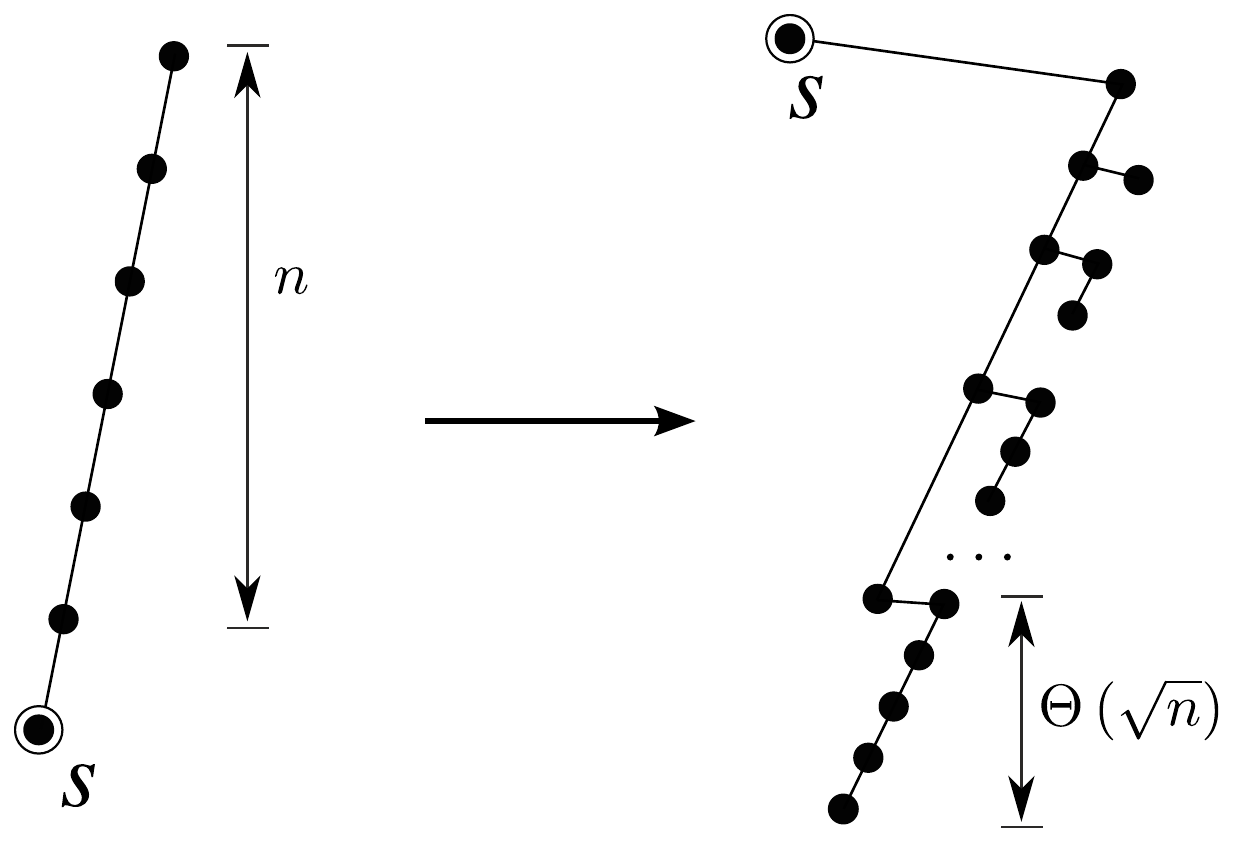}
\end{center}
\caption{
	Rearrangements which do not satisfy Theorem~\ref{thm:suff tree}. Let $z,\ell$ be the number of side alternations in the before-path $P$ and the number of leaves in the after-tree respectively. Let $n=\abs{P}$.\\
	(left) A rearrangement in which every node loses half of its ancestors and gains only one new descendant. 
	However, $z,\ell=O(\sqrt{n})$. \\
	(right) A rearrangement in which every node loses a $(1 - o(1))$-fraction of its ancestors and gains only one new ancestor. 
	However, $z=0,\ell=O(\sqrt{n})$.
	}
\label{fig:depth1}\label{fig:depth2}
\end{figure}

\begin{figure}
\begin{center}  
\includegraphics[width=0.7\textwidth]{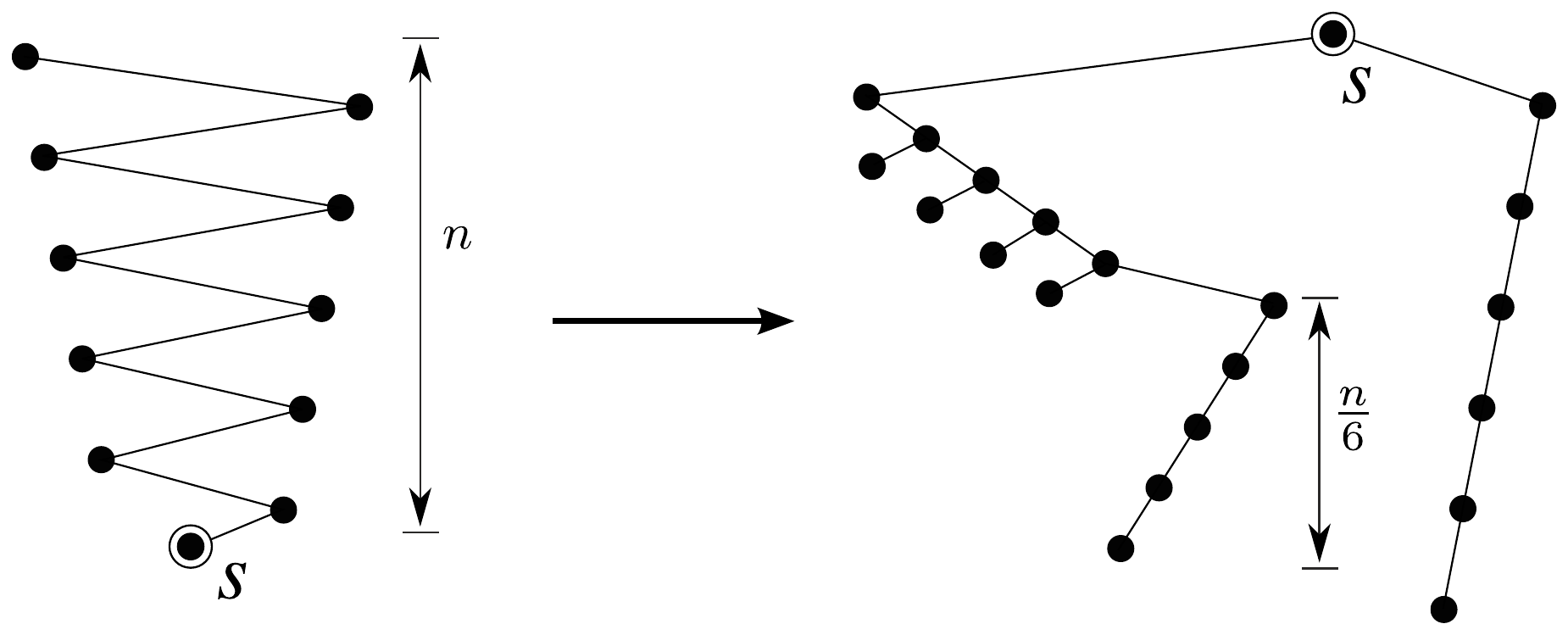}
\end{center}
\caption{A rearrangement in which every node approximately halves its depth. However, there is an element $x<y$ 
	whose search path contains $\Omega(n)$ left turns. By Theorem~\ref{thm:mono local} and Theorem~\ref{thm: necessity}, this rearrangement cannot satisfy access lemma with the SOL potential.
	}
\label{fig:depth3}
\end{figure}

\section{Geometric BST Algorithms}
\label{sec:geom}
In this section, we show that our results can be extended to apply in the \emph{geometric view} of BST algorithms, introduced by Demaine et al.\ in~\cite{DemaineHIKP09}. In particular, we prove that Greedy BST satisfies the access lemma.

A \emph{height diagram} $h:[n] \rightarrow \mathbb{N}$ is a function mapping $[n]$ to the natural numbers.
We say that $h$ has \emph{tree structure} if, for any interval $[a,b]$, there is a unique maximum in $\{h(a),h(a+1),\dots,h(b)\}$.
For any BST $T$ on $[n]$, let $H$ be the height of $T$ and, for any element $a\in[n]$, let $d(a)$ be its depth.
The height diagram $h_T$ of BST $T$ is defined such that $h_T(a) = H - d(a)$ for each $a$. See \Cref{fig:diagram1} for an example height diagram of a BST. The proof of the next proposition is straightforward.

\begin{proposition}
	A height diagram $h$ has tree structure iff, for some BST $T$, $h$ is the height diagram of $T$.
\end{proposition}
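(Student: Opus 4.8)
The plan is to relate, in both directions, the unique maximum of $h$ on an interval $[a,b]$ to the shallowest node of the corresponding subtree, and then read off each implication.

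\textbf{Direction ``$h=h_T$ for a BST $T$ $\Rightarrow$ $h$ has tree structure''.} Fix an interval $[a,b]\subseteq[n]$. Since $h_T(x)=H-d(x)$ is strictly decreasing in the depth $d(x)$, maximizing $h_T$ over $[a,b]$ is the same as minimizing depth over $[a,b]$; let $c\in[a,b]$ be of minimum depth. I claim $c$ is unique and is a $T$-ancestor of every element of $[a,b]$. The only fact needed is the BST property in the form: if $x,y\in[a,b]$ and $m:=\mathrm{lca}_T(x,y)\notin\{x,y\}$, then the value of $m$ lies strictly between $x$ and $y$, so $m\in[a,b]$, while $d(m)<\min\{d(x),d(y)\}$. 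Taking $x=c$: a second element of $[a,b]$ of depth $\le d(c)$ would force a strictly shallower element of $[a,b]$, contradicting minimality, so the minimizer is unique; and if some $y\in[a,b]$ were not a descendant of $c$, then $\mathrm{lca}_T(c,y)\neq c$ would again be a strictly shallower element of $[a,b]$, a contradiction. Hence $h=h_T$ attains its maximum on $[a,b]$ at the single node $c$, which is exactly tree structure.

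\textbf{Direction ``$h$ has tree structure $\Rightarrow$ $h=h_T$ for some BST $T$''.} Build $T$ by the Cartesian-tree recursion on intervals: the subtree spanning a nonempty interval $[a,b]$ has as its root the unique $c\in[a,b]$ maximizing $h$, and its left and right children are the roots of the subtrees built recursively from $[a,c-1]$ and $[c+1,b]$ (an empty interval giving an absent child). Tree structure guarantees a unique maximum on every subinterval that occurs, so the recursion is well defined, and it manifestly yields a BST on $[n]$. One then checks $h=h_T$ by a short induction on $|[a,b]|$: by the construction the root $c$ of the subtree spanning $[a,b]$ is the unique minimum-depth node of $[a,b]$ in $T$, so $T$ and $h$ select the same maximizer on every interval, and the normalization $h_T(x)=H-d(x)$ propagates the equality from $[a,b]$ to its two subintervals.

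I expect the only nontrivial step to be the uniqueness argument in the first direction --- upgrading ``$c$ is a minimum-depth element of $[a,b]$'' to ``$c$ is a common $T$-ancestor of all of $[a,b]$'' --- which is precisely where the BST/LCA observation is used; the remainder is routine bookkeeping over nested intervals.
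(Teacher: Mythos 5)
Your first direction is correct, and the LCA argument is a clean way to see that the minimum-depth node of an interval is unique (and is a common ancestor of the interval, which you don't strictly need for tree structure but which is useful context). The paper does not supply a proof at all---it merely asserts the proposition is straightforward---so there is nothing to compare against there, but let me evaluate your argument on its merits.

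The second direction has a real gap, and in fact exposes that the proposition is false as literally written. Your Cartesian-tree construction correctly produces a BST $T$ whose height diagram $h_T$ has the \emph{same argmax as $h$ on every interval}, but that is all the recursion gives you; the final sentence, where you claim ``the normalization $h_T(x)=H-d(x)$ propagates the equality $h=h_T$,'' does not follow. The recursion propagates ``same maximizer,'' not ``same value.'' Concretely, take $n=3$ and $h=(0,5,0)$. Every interval has a unique maximizer (always position $2$), so $h$ has tree structure, and the Cartesian tree is the root $2$ with children $1$ and $3$. But that tree has height $H=1$, so $h_T=(0,1,0)\neq h$. More generally, $h_T$ always satisfies $h_T(\mathrm{child})=h_T(\mathrm{parent})-1$ and $\min h_T=0$, constraints that an arbitrary tree-structured $h$ need not obey, so no choice of $T$ can repair this.

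What your construction \emph{does} prove---and what the paper actually needs for the surrounding development---is the weaker, correct statement: $h$ has tree structure iff there is a BST $T$ such that $h$ and $h_T$ have the same argmax on every interval, and consequently the same stairs $\stair_h(\cdot)$ and neighborhoods $N_h(\cdot)$. (Stairs and neighborhoods depend only on interval-level comparisons against the accessed element, hence only on the ordinal structure that the Cartesian tree preserves.) So the right fix is either to weaken the proposition to this ordinal equivalence, or to replace the conclusion ``$h$ is the height diagram of $T$'' by ``$h$ induces the same stair and neighborhood structure as $h_T$.'' Your direction~1 and the first half of your direction~2 then go through unchanged; only the final ``and therefore $h=h_T$'' step should be dropped.
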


\begin{figure}
\centerline{\includegraphics[width=0.6\textwidth]{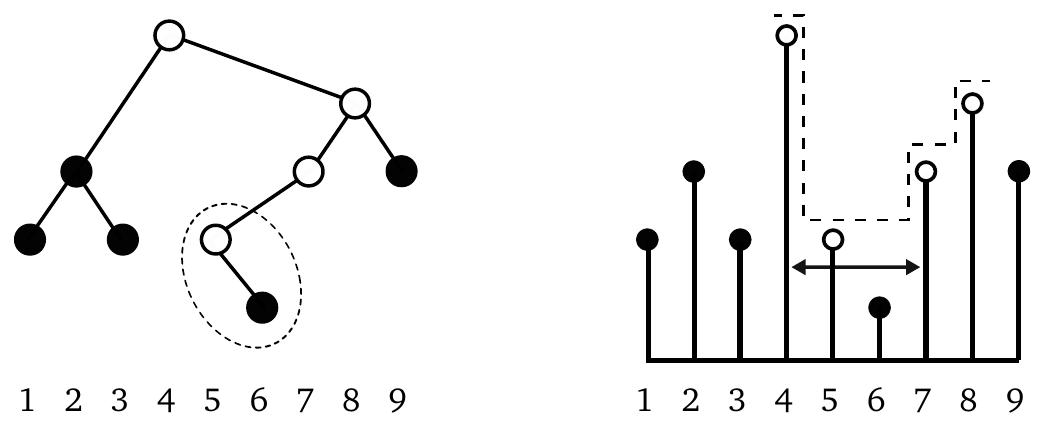}}
\caption{(left) BST with search path and subtree of node 5 shown; (right) height diagram with stair (dashes) and neighborhood (arrow) of element 5 shown.}
	\label{fig:diagram1}
\end{figure}

Fix a height diagram $h$. We now define \emph{stair} and \emph{neighborhood} of each element $a\in [n]$.
The stair of $a$, denoted by $\stair_h(a)$, contains the element $b$ if and only if the rectangular region formed by $(a,\infty)$ and $(b,h(b))$ does not contain any point $(b', h(b'))$ for $b' \in [n]$.\footnote{The reader familiar with the geometric view of Demaine et al.~\cite{DemaineHIKP09} might recognize here the relation with the concept of \emph{unsatisfied} rectangles.}
The neighborhood of $a$, denoted by $N_h(a)$, is the maximal open interval $(x,y)$ such that $a \in (x,y)$ and there is no element $b \in (x,y)$ where $h(b) \ge h(a)$. We remark that the neighborhood is thought of as an interval of reals. 
See Figure~\ref{fig:diagram1} for the geometric view of stairs and neighborhoods.  

\begin{proposition}
	Let $h$ be a height diagram of BST $T$. Then, for any element $a\in[n]$, $\stair_h(a)$ contains exactly the elements on the search path of $a$ in $T$, and $N_h(a) \cap [n]$ contains exactly the elements in the subtree of $T$ rooted at $a$.
\end{proposition}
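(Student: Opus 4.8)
The plan is to prove both claims of the proposition by relating the recursive structure of a BST to the combinatorial structure of its height diagram, exploiting the fact (established in the preceding proposition) that $h_T$ has tree structure. First I would set up notation: write $d(a)$ for the depth of $a$ in $T$, $H$ for the height, so $h_T(a) = H - d(a)$; in particular the root $r$ is the unique global maximum of $h_T$ on $[n]$, matching the tree-structure property applied to the whole interval $[1,n]$. The natural proof is by induction on the size of the subtree, using the standard recursive decomposition: $T$ has root $r$, left subtree $T_L$ on $[1, r-1]$ and right subtree $T_R$ on $[r+1, n]$, and by the tree-structure property restricted to $[1,r-1]$ and to $[r+1,n]$, the diagram $h_T$ restricted to each of those intervals again has tree structure and is in fact (up to an additive shift in the height, which does not affect stairs or neighborhoods since those depend only on the relative order of the $h$-values and the comparison $h(b) \ge h(a)$) the height diagram of $T_L$ and $T_R$ respectively.

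For the neighborhood claim, I would argue that $N_h(a) \cap [n]$ equals the set of elements of the subtree $T_a$. One direction: if $b$ is in $T_a$, then the search paths of $a$ and $b$ agree up to $a$, and every element strictly between $a$ and $b$ is also in $T_a$ and hence a descendant of $a$, so has depth greater than $d(a)$, i.e.\ $h(b') < h(a)$; thus $b$ lies in the maximal open interval around $a$ free of elements of height $\ge h(a)$. Conversely, if $b \notin T_a$, then the least common ancestor $c$ of $a$ and $b$ in $T$ is a proper ancestor of $a$, lies (weakly) between $a$ and $b$, and has $h(c) > h(a)$; hence $b$ is separated from $a$ by an element of height $\ge h(a)$ and is not in $N_h(a)$. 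The endpoints being open reals is exactly the right convention because the separating ancestor $c$ itself is an integer point with $h(c) \ge h(a)$, so it bounds the interval.

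For the stair claim, I would show $\stair_h(a)$ is exactly the search path of $a$ in $T$, i.e.\ the set of ancestors of $a$ (including $a$ itself). By definition $b \in \stair_h(a)$ iff the axis-aligned rectangle with corners $(a,\infty)$ and $(b, h(b))$ — really the open region spanned between the vertical line at $a$ and the point $(b,h(b))$ — contains no point $(b', h(b'))$. I would observe that $b$ is an ancestor of $a$ iff $a \in N_{h}(b') $ fails to be blocked appropriately... more cleanly: $b$ is an ancestor of $a$ iff $a$ lies in the subtree $T_b$, iff (by the neighborhood claim applied to $b$, or directly) every element $b'$ strictly between $a$ and $b$ satisfies $h(b') < h(b)$, which is precisely the emptiness of the stair rectangle of $a$ reaching to $(b,h(b))$. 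So the ancestors of $a$ are exactly the elements whose point is ``visible'' from the vertical ray above $a$, which is the definition of the stair. The induction on subtree size then closes the argument: $a$ itself and the elements on the path from $r$ to $a$ are handled by peeling off the root and appealing to the inductive hypothesis on whichever of $T_L, T_R$ contains $a$.

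The main obstacle I expect is purely one of careful bookkeeping with the open/half-open conventions — making sure the ``rectangular region formed by $(a,\infty)$ and $(b,h(b))$'' is interpreted so that the point $(b, h(b))$ and the column over $a$ are on the boundary (not interior), and that the neighborhood interval $(x,y)$ is genuinely open in the reals so that the blocking integer points sit at its endpoints. There is no deep difficulty: once the recursive decomposition of the height diagram is in hand, both statements reduce to the elementary fact that, in a BST, $b$ is an ancestor of $a$ precisely when $b$ has strictly larger height than every element lying (in key-order) between $a$ and $b$ — equivalently, when no such intermediate element has height $\ge h(b)$. I would state this fact as a small lemma and then derive both halves of the proposition from it, with the induction serving only to justify restricting attention to subintervals.
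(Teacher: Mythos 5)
The paper states this proposition without proof, so you are filling in a gap; your overall plan (reduce both halves to a characterization of the ancestor relation in terms of heights, using the tree-structure property of $h_T$) is the natural one and would work. However, the ``small lemma'' you propose to hang everything on is false as stated, and this has to be repaired before the argument goes through.

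You write: \emph{``$b$ is an ancestor of $a$ precisely when $b$ has strictly larger height than every element lying (in key-order) between $a$ and $b$ --- equivalently, when no such intermediate element has height $\ge h(b)$.''} The ``between'' here is strict, i.e.\ excludes $a$ and $b$. This equivalence fails in one direction: the stated condition also holds when $b$ is a \emph{descendant} of $a$. Concretely, take the tree with root $3$, left child $1$, right child $4$, and $2$ the right child of $1$; heights are $h(3)=2, h(1)=h(4)=1, h(2)=0$. With $a=3$, $b=1$, the only element strictly between is $2$ and $h(2)<h(1)$, so your condition is satisfied, yet $1$ is not an ancestor of $3$. The correct statement is: $b$ lies on the search path to $a$ iff $h(b) > h(b')$ for every $b'$ in the \emph{closed} interval $[\min(a,b),\max(a,b)]$ with $b'\ne b$ --- in particular $h(a) < h(b)$ must also hold. (Given the tree-structure property this is exactly ``$b$ is the unique maximum of $h$ on the interval spanned by $a$ and $b$,'' and the backward direction is then the standard fact that the max of an interval is an ancestor of everything in it.) This interacts with your reading of the stair rectangle: you say you would interpret it so that ``the column over $a$ [is] on the boundary (not interior),'' but the column over $a$ \emph{must} count as potentially blocking, otherwise in the example above the rectangle from $(3,\infty)$ to $(1,1)$ would be empty and $1 \in \stair_h(3)$, contradicting the proposition. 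So both the lemma and the boundary convention need the clause involving $a$. (Amusingly, the two omissions cancel, so your two endpoints are still equivalent, but the intermediate links in your chain of ``iff''s are individually false, and a reader who trusts them will be misled.) Once the lemma is stated with the closed-interval / $h(a)<h(b)$ clause, both halves of the proposition follow directly, and the induction on subtree size becomes unnecessary --- the tree-structure property of $h_T$ from the preceding proposition already gives you the global fact about interval maxima that the induction was meant to supply.
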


Therefore, to put it in this geometric setting, minimally self-adjusting BSTs (or simply BSTs) are algorithms that, given a height diagram $h$ with tree structure and accessed element $s$, may only change the height of elements in $\stair_h(s)$, so that no element in $\stair_h(s)$ has height less than or equal the height of any element outside $\stair_h(s)$. The adjusted height diagram $h'$ must have tree structure. 

\emph{Minimally self-adjusting geometric BSTs} (or simply geometric BSTs) are just minimally self-adjusting BSTs without restrictions that $h$ and $h'$ must have tree structure. More precisely, let $\cal A$ be a geometric BST. Given a height diagram $h$ and an accessed element $s$, $\cal A$ may change the height of elements only in $\stair_h(s)$ so that no element in $\stair_h(s)$ has height less than or equal the height of any element outside $\stair_h(s)$. Let $h'$ be the new height diagram $h'$. The access cost is $|\stair_h(s)|$.
For example, Greedy BST from the formulation of \cite{DemaineHIKP09} just changes the height of all elements in $\stair_h(s)$ to any constant greater than the height of elements outside $\stair_h(s)$.

The following theorem shows that even though geometric BSTs are a generalization of BSTs, their costs are within a constant factor of BSTs.
\begin{theorem}[\!\cite{DemaineHIKP09}]
	For any geometric BST algorithm $\aset$, there is a BST algorithm $\aset'$ whose amortized cost is at most $O(1)$ times the cost of $\aset$, for each access.
\end{theorem}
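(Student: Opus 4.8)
The easy direction --- every BST execution is in particular a geometric one --- suggests that the content lies in realizing a \emph{geometric} execution by a genuine BST, so that is what I would set up. Run $\aset$ on the access sequence and let $h_0,h_1,h_2,\dots$ be the height diagrams it produces, with $S_t:=\stair_{h_{t-1}}(s_t)$ the touched set at step $t$ and $|S_t|$ the corresponding cost. The core observation is that \emph{if $\aset$ happens to produce only height diagrams with tree structure}, then $\aset$ is already an ordinary (not necessarily minimally) self-adjusting BST algorithm and there is nothing to do: each $h_t$ equals $h_{T_t}$ for a genuine BST $T_t$; by the proposition relating stairs to search paths, $S_t$ is exactly the search path to $s_t$ in $T_{t-1}$; and since $\aset$ changes heights only inside $S_t$ while keeping them above the untouched elements, the proposition relating neighborhoods to subtrees shows that $T_t$ is obtained from $T_{t-1}$ by a rearrangement of precisely that search path --- a legal BST move of cost $|S_t|$ (here $s_t$ need not become the root, a restriction not imposed in the plain BST model). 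So in this case $\aset'=\aset$ and the costs are \emph{equal}, not merely within a constant.

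The only additional power of a geometric algorithm is that it may output diagrams \emph{without} tree structure, that is, diagrams containing maximal ``flat blocks'' of equal-height touched elements. To deal with these I would let $\aset'$ maintain a \emph{lazy} genuine BST $T_t$ realizing $\aset$'s diagram $h_t$: every flat block $B$ of $h_t$ is replaced by a concrete subtree on $B$ whose shape $\aset'$ picks itself and refreshes only when forced to. Equivalently, $\aset'$ runs the treap whose priorities are the time-stamps (index of the last access that touched a node), breaking ties inside a block by a bounded-depth, say balanced, arrangement. Then the ancestors of $s_t$ in $T_{t-1}$ are, block by block, exactly the elements of $S_t$ (the stair picks out the upward record values of the diagram and the treap picks out the upward record values of the time-stamps, and these agree on block representatives), together with a bounded descent through each block that the search path actually enters. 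Hence the length of $\aset'$'s search path at step $t$ is $O(|S_t|)+\sum_B O(\log|B|)$, the sum ranging over the blocks met.

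To convert this into an $O(1)$-per-access \emph{amortized} bound I would introduce a potential $\Phi(T)=\sum_{\text{blocks }B}g(|B|)$ with $g$ a slowly growing function (the potential $g(x)=x\log x$ from \S\ref{subsec:nec2}, or a linear $g$, being the natural candidates): creating the fresh block $S_t$ raises $\Phi$ by $O(g(|S_t|))$, which the step can pay for out of the $|S_t|$ it already costs $\aset$; dismantling the searched parts of old blocks lowers $\Phi$ by enough to cover the $\sum_B O(\log|B|)$ overhead; and $\Phi\in[0,O(n\log n)]$, so the $O(n\log n)$ of start-up is what the word ``amortized'' absorbs. Summing over $t$ then gives total cost $O(\sum_t|S_t|)$, i.e.\ $O(1)$ times the cost of $\aset$. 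I expect the genuinely delicate point to be exactly this last accounting: proving simultaneously that the chosen block arrangement keeps $|P_t|=O(|S_t|)+\sum_B O(\log|B|)$, that the $\Phi$-drop really covers the overhead, and that rearranging $P_t$ (with the hanging subtrees reattached in the unique order-respecting way) does not raise $\Phi$ by more than $O(|S_t|)$ --- this is where the structural regularity of the point sets a geometric algorithm produces (their ``arboral satisfaction'') gets used, and it is the reason the clean equality of the tree-structured case degrades only to a constant in general.
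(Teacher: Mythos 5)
The paper does not give its own proof of this theorem; it is quoted from Demaine et al.~\cite{DemaineHIKP09}, so there is no in-paper argument to compare yours against, and I will evaluate the sketch on its own terms. Your high-level framing is correct and a good way to think about the problem: tree-structured diagrams are free, the only obstruction is ``flat blocks'' of ties in the height diagram, and the stair of $s_t$ meets each flat block in $O(1)$ elements.

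The gap is in the potential accounting, and it is not merely delicate --- as stated it cannot close, because of two incompatible demands. For the $\sum_B O(\log|B|)$ search overhead to be paid by a $\Phi$-drop, $g$ must be strictly super-linear: with linear $g$, splitting a block into pieces leaves $\sum_B g(|B|)$ unchanged, so dismantling releases nothing. But with a super-linear $g$, say $g(x)=x\log x$, the creation of the fresh block already charges $g(|S_t|)=|S_t|\log|S_t|=\omega(|S_t|)$, and nothing is guaranteed to release an offsetting amount. In particular, when every existing block is a singleton (that is, in the benign tree-structured case that you correctly call ``free''), the release is zero, so the amortized cost of that step is $|S_t|+|S_t|\log|S_t|=\Omega(|S_t|\log|S_t|)$, not $O(|S_t|)$. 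The $O(n\log n)$ bound on $\Phi$ only absorbs the start-up transient; it cannot repair a per-access amortized bound that is off by a $\log$ factor at every step. There is also a secondary issue to nail down: since in the BST model you may rearrange only the current search path, an old block $B$ that has lost elements over several accesses need not remain balanced, so the per-block descent cost is not a priori $O(\log|B|)$; convexity of $g$ gives you roughly $d_B(\log|B|-1)$ of release for a descent of depth $d_B$, which is enough against the descent itself, but feeds back into making the newly-created block even larger than $|S_t|$, worsening the first problem. A treap keyed by time-stamps is a natural guess, but the scheme needs a substantially different invariant or potential (and in fact~\cite{DemaineHIKP09} does not argue this way) to obtain the stated $O(1)$ overhead.
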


\paragraph{Geometric Access Lemma:}
We define the geometric variant of the Sleator-Tarjan potential as $\Phi_h = \sum_{a \in [n]} \log w(N_h(a))$. 
Let $\aset$ be a geometric BST algorithm.  Let $h: [n] \rightarrow {\mathbb N}$ be a height diagram and let $h'$ be the output of algorithm $\aset$ when accessing element $s \in [n]$. 
Algorithm $\aset$ \emph{satisfies the access lemma (via the SOL potential function)} if 
\vspace{-0.1in}
\[\Phi_h - \Phi_{h'} + O(1 + \log \frac{W}{w(s)}) \geq \Omega(|\stair_h(s)|). \]
The geometric access lemma similarly implies  \emph{logarithmic amortized cost}, \emph{static optimality}, and the \emph{static finger} and \emph{working set} properties.

Next, we define the geometric analogue of a subtree-disjoint set.
Fix the height diagram $h$, the accessed element $s$ and the new height diagram $h'$.
A subset $X$ of $\stair_h(s)$ is \emph{neighborhood-disjoint} if $N_{h'}(a)\cap N_{h'}(a')=\emptyset$ for all $a\neq a' \in X$.
The following lemma can be proven in the same way as Lemma~\ref{lem:disj}.

\begin{lemma} \label{lem:disj geo}
	Let $X$ be a neighborhood-disjoint set of nodes. 
	Then
	\[ \abs{X} \le 2 + 8 \cdot \log \frac{W}{w(N_h(s))} + \Phi_h(X) - \Phi_{h'}(X). \] 
\end{lemma}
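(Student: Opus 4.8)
The plan is to mirror the proof of Lemma~\ref{lem:disj} verbatim, translating each ingredient from the tree world into the height-diagram world via the dictionary $T(a)\leftrightarrow N_h(a)$ and $T'(a)\leftrightarrow N_{h'}(a)$. First I would split $X$ into the part lying to the left of $s$ and the part lying to the right, $X = X_{<s}\,\dot\cup\,X_{\ge s}$, and prove the one-sided bound $\abs{X_{\ge s}} \le 1 + \Phi_h(X_{\ge s}) - \Phi_{h'}(X_{\ge s}) + 4\log\frac{W}{w(N_h(s))}$, the other side being symmetric; adding the two gives the claimed $2 + 8\log(\cdot)$. Here I should note the one geometric subtlety: in a height diagram the "subtree" $N_h(a)$ is an open interval of reals, and for $a,a'$ on the same side of $s$ with $a$ "below" $a'$ on the stair one still has the containment $N_h(a) \subsetneq N_h(a')$ (and $[s,a]\subseteq N_h(a)$ appropriately interpreted), since being on the stair of $s$ means $s$ itself is not separated from $a$ by anything taller — this is exactly the analogue of "$s$ is a descendant of $a_0$" in the original proof.

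Next I would set up the same accounting. Write $X_{\ge s} = \{a_0 < a_1 < \dots < a_q\}$ ordered so that $a_j$ is "above" $a_{j-1}$ on the stair, let $N_h(a_0) = (c,d)$ with $w_0 = w(N_h(a_0)\cap[n])$, and define $\sigma_j$ as the largest index $\ell$ with $w([c,a_\ell]) \le 2^j w_0$; as before $\sigma_0 = 0$ and the set $\{\sigma_j\}$ has at most $\lceil\log(W/w_0)\rceil$ distinct values and contains $0$ and $q$. Call $a_i$ with $\sigma_j\le i<\sigma_{j+1}$ \emph{heavy} if $w(N_{h'}(a_i)) > 2^{j-1}w_0$; neighborhood-disjointness of $X$ forces the $N_{h'}(a_i)$ to be pairwise disjoint intervals all contained in $[c,a_{\sigma_{j+1}}]$, so at most $3$ of them can be heavy, exactly as in Lemma~\ref{lem:disj}. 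For the light elements I would reproduce the telescoping argument: for a light $a_i$ the ratio $r_i = w(N_h(a_{i+1}))/w(N_{h'}(a_i)) \ge 2$ (using $w(N_h(a_{i+1})) \ge w([c,a_{i+1}]) > 2^j w_0 \ge 2\cdot w(N_{h'}(a_i))$), while $r_i \ge 1$ in general because $N_{h'}(a_i)\subseteq N_h(a_{i+1})$ on this side; hence $2^{\#\text{light}} \le \prod_i r_i = \bigl(\prod_i \tfrac{w(N_h(a_i))}{w(N_{h'}(a_i))}\bigr)\cdot\tfrac{w(N_{h'}(a_q))}{w_0}$, giving $\#\text{light} \le \Phi_h(X_{\ge s}) - \Phi_{h'}(X_{\ge s}) + \log(W/w_0)$. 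Combining heavy and light counts over the $O(\log(W/w_0))$ blocks yields the one-sided bound with constant $4$, and $w_0 \ge w(N_h(s))$ (since $s\in N_h(a_0)$) lets me replace $w_0$ by $w(N_h(s))$ in the final statement.

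The main obstacle is purely the geometric bookkeeping: I must check that each of the three structural facts used in the tree proof — (a) the nesting $N_{h'}(a_i)\subseteq N_h(a_{i+1})$ for consecutive same-side stair elements, (b) $[c,a_{i+1}]\subseteq N_h(a_{i+1})$, and (c) pairwise disjointness of the $N_{h'}$ intervals within a $\sigma$-block, which is where neighborhood-disjointness is invoked — genuinely holds for open real neighborhoods defined from a height diagram, including the degenerate cases where a first left/right ancestor is absent (then treated as $\pm\infty$). Once these are nailed down, the inequality chain is identical to Lemma~\ref{lem:disj} and the proof is a routine transcription, which is why the statement is phrased as "can be proven in the same way as Lemma~\ref{lem:disj}."
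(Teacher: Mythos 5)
Your proposal is exactly the paper's intended route: the paper supplies no independent argument for this lemma and simply declares it ``can be proven in the same way as Lemma~\ref{lem:disj}'', and your dictionary $T(a)\leftrightarrow N_h(a)$, $T'(a)\leftrightarrow N_{h'}(a)$ together with the list of structural facts (a)--(c) to be re-verified is a correct and complete identification of what that transcription requires. The only spot where the transcription is not automatic, and worth spelling out when you ``nail down'' (a) and the left side of (c), is the bottom element $a_0$ of $X_{\ge s}$: for $i\ge 1$, disjointness from $N_{h'}(a_{i-1})$ already pins the left endpoint of $N_{h'}(a_i)$ to the right of $a_{i-1}$ (hence to the right of $c$), but for $a_0$ you must argue directly that $N_{h'}(a_0)$ does not overshoot the left endpoint of $N_h(a_1)$ --- this holds because that endpoint is itself a stair element of $s$ and re-heighted stair elements dominate non-stair ones, and is immediate for Greedy (the only application the paper makes), where every stair element receives the same height.
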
 

\begin{theorem}[Restatement of \Cref{thm:greedy sat}]
	Let $S = \stair_h(s)$.
	$\Phi_{h'}(S)-\Phi_h(S)\le O(1 + \log\frac{W}{w(s)})-|S|$. Thus, Greedy BST satisfies the
	access lemma.\end{theorem}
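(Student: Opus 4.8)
\textbf{Proof plan for the restated Theorem~\ref{thm:greedy sat}.}
The goal is to show that when Greedy BST accesses $s$, the partial potential on the stair $S=\stair_h(s)$ drops by at least $|S|$ up to the usual $O(1+\log(W/w(s)))$ slack. The natural strategy is to mimic exactly the BST proof of Theorem~\ref{main theorem} (i.e.\ Theorem~\ref{thm:suff tree}) in the geometric language, using the dictionary ``subtree $\leftrightarrow$ neighborhood'', ``search path $\leftrightarrow$ stair'', and ``leaves/monotone paths'' translated appropriately. First I would recall that Greedy BST raises every element of $S$ to a common new height just above all heights outside $S$; consequently the new neighborhoods $N_{h'}(a)$ for $a\in S$ behave like the subtrees of an after-tree in which $s$ is the root. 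So the plan is: (1) decompose $S\setminus\{s\}$ into one neighborhood-disjoint set plus $O(1)$ geometric monotone sets; (2) apply the geometric analogue of Lemma~\ref{lem:disj} (namely Lemma~\ref{lem:disj geo}) to the disjoint part and the geometric analogue of Lemma~\ref{lem:mono} to the monotone parts; (3) also account for the zigzag contribution via the geometric analogue of Lemma~\ref{lem:zigzag}; (4) conclude by the same arithmetic as in Theorem~\ref{thm:disj+mono+zigzag}.

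Concretely, the key structural claim is that after Greedy's update the elements of $S$ larger than $s$ have only $O(1)$ distinct ``right-depths'' in the implicit after-tree — in fact, because all of $S$ is lifted to the same height, every element $a>s$ in $S$ has $N_{h'}(a)$ extending left all the way to $s$ (its left endpoint is determined by $s$ or by something outside $S$ that is higher), so the monotone-depth is essentially $1$ on each side. Dually, Greedy creates $\Omega(|S|-z)$ ``leaves'': an element $a\in S$ whose new neighborhood $N_{h'}(a)$ contains no other element of $S$ is a leaf, and one checks that between two consecutive same-side stair elements that are \emph{not} separated by a side alternation, Greedy forces the outer one to become such a leaf — this is the geometric restatement of the counting ``\# leaves $\ge |P|/2-1-\#\text{side changes}$'' used for splay. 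Once these two combinatorial facts are in place, the computation is verbatim the proof of Theorem~\ref{thm:suff tree}: take $k=1$ (one neighborhood-disjoint set, the leaves), $\ell=O(1)$ monotone sets, invoke Theorem~\ref{thm:disj+mono+zigzag} in its geometric form, and rearrange to get $\Phi_{h'}(S)-\Phi_h(S)\le O(1+\log\frac{W}{w(s)})-|S|$. Dividing through and noting $|S|=|\stair_h(s)|$ gives the geometric access lemma, and the reduction theorem of \cite{DemaineHIKP09} then transfers it to an honest BST.

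I expect the main obstacle to be purely notational bookkeeping rather than a genuine mathematical difficulty: one must carefully re-derive the geometric versions of Lemma~\ref{lem:mono} and Lemma~\ref{lem:zigzag} (the text only explicitly states the geometric Lemma~\ref{lem:disj geo}), checking that the nesting $N_{h'}(a)\subseteq N_h(b)$ used for monotone telescoping, and the $W_1,W_2$ splitting used in the zigzag claim, survive when ``subtree'' is replaced by ``neighborhood interval of reals''. The subtle point is that a geometric neighborhood is a real interval, not a set of $[n]$-elements, so weights are measured as $w(N_h(a))$ rather than $w(T(a))$; the telescoping and the AM–GM step $(W_1+W_2)^2\ge 4W_1W_2$ go through unchanged as long as one is consistent about using the real-interval neighborhoods throughout. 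Given the explicit remark in the excerpt that ``once the correspondences to geometric view are explained, the proof is almost immediate,'' I would keep this section short: state the geometric dictionary, assert the analogues of the three lemmas with a one-line ``same proof,'' prove the $O(1)$-monotone-depth and $\Omega(|S|-z)$-leaves facts for Greedy's lift-to-common-height rule, and finish with the one-line arithmetic.
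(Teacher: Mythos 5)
Your plan contains a substantive misunderstanding of the Greedy after-configuration, which both makes the argument incorrect as stated and hides the fact that the actual proof is much simpler than what you outline. You write that "because all of $S$ is lifted to the same height, every element $a>s$ in $S$ has $N_{h'}(a)$ extending left all the way to $s$." This is false: since \emph{every} stair element is raised to the common new height, the left endpoint of $N_{h'}(a_i)$ is blocked by the nearest stair element below $a_i$, not by $s$. Writing $S=\{a_1,\dots,a_k\}$ in sorted order, one has $N_{h'}(a_i)=(a_{i-1},a_{i+1})$ — a short interval bounded by the two \emph{adjacent} stair elements. Consequently your proposed decomposition into "one neighborhood-disjoint set (the leaves) plus $O(1)$ monotone sets plus zigzag" does not get off the ground: under your own definition of leaf ("$N_{h'}(a)$ contains no other element of $S$"), every element of $S$ is a leaf, yet $S$ is \emph{not} neighborhood-disjoint, because $N_{h'}(a_i)$ and $N_{h'}(a_{i+1})$ overlap on $(a_i,a_{i+1})$. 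So the "leaves" cannot serve as the neighborhood-disjoint block, and the monotone/zigzag machinery is not needed and would have to be set up with some care to be meaningful here.

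The paper's proof exploits the observation $N_{h'}(a_i)=(a_{i-1},a_{i+1})$ directly: split $S$ into the odd-indexed and even-indexed elements $S_{\mathrm{odd}}$ and $S_{\mathrm{even}}$. Each of these two sets is neighborhood-disjoint (indices two apart give open intervals that at most share an endpoint). Applying Lemma~\ref{lem:disj geo} to each and summing gives $|S|\le 4 + 16\log\frac{W}{w(N_h(s))} + \Phi_h(S)-\Phi_{h'}(S)$, which rearranges to the stated bound. No monotone sets, no zigzag sets, and in particular no geometric analogues of Lemma~\ref{lem:mono} or Lemma~\ref{lem:zigzag} are needed for Greedy. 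Those analogues are only mentioned in the paper as a remark for deriving a full geometric version of Theorem~\ref{thm:suff tree}, not for this theorem. I'd recommend you revisit the definition of neighborhood and recompute $N_{h'}(a_i)$ for Greedy's lift-to-common-height rule; once you see it is $(a_{i-1},a_{i+1})$, the odd/even trick is the natural and essentially one-line conclusion.
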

\begin{proof}
	Write $S = \{a_1,\dots,a_k\}$ where $a_i < a_{i+1}$.
 	Notice that element $a_i \in S$ has neighborhood $N_{h'}(a_i) = (a_{i-1},a_{i+1})$.   
	We decompose $S=S_{odd}\dot{\cup}S_{even}$ where $S_{odd}$ and $S_{even}$ are the elements in $S$ with odd, respectively even index. Both sets are neighborhood-disjoint. An application of Lemma~\ref{lem:disj geo} 	yields the claim. 
\end{proof}

\noindent\textbf{Remark:} It is also straightforward to define monotone and zigzag sets in geometric setting and to prove a geometric analogue of \Cref{thm:suff tree}.

\end{document}